\newcommand{\B}{\ensuremath{\mathcal{B}}}
\newcommand{\UU}{\ensuremath{\mathcal{U}}}
\newcommand{\Q}{\ensuremath{\mathcal{Q}}}
\newcommand{\D}{\ensuremath{\mathcal{D}}}
\newcommand{\DQ}{\ensuremath{\mathcal{D}\mathcal{Q}}}
\newcommand{\T}[0]{\ensuremath{\mathcal{T}}}
\newcommand{\clean}[0]{\ensuremath{\textsc{Clean}}}
\newcommand{\chain}{\ensuremath{\mathsf{C}}}
\newcommand{\aux}{\ensuremath{\mathcal{L}}}
\newcommand{\client}{\ensuremath{\mathsf{c}}}
\newcommand{\PI}[0]{\ensuremath{\Pi}}
\newcommand{\Adv}[0]{\ensuremath{\mathcal A}}
\newcommand{\ie}[0]{\emph{i.e.}\xspace}
\newcommand{\eg}[0]{\emph{e.g.}\xspace}
\newcommand{\cf}[0]{\emph{cf.}\xspace}
\newcommand{\GST}[0]{\ensuremath{\mathsf{GST}}}
\newcommand{\tx}[0]{\ensuremath{\mathsf{tx}}}
\newcommand{\close}[0]{\ensuremath{\operatorname{cl}}}
\newcommand{\poly}[0]{\ensuremath{\operatorname{poly}}}
\newcommand{\Tconfirm}[0]{\ensuremath{T_{\mathrm{fin}}}}
\newcommand{\ckpt}{\ensuremath{\mathrm{ckpt}}}
\newcommand{\bprop}[1]{%
    \ifthenelse{\equal{#1}{}}{%
        \ensuremath{\Hat{b}}%
    }{%
        \ensuremath{\Hat{b}_{#1}}%
    }%
}
\newcommand{\ld}[1]{%
    \ifthenelse{\equal{#1}{}}{%
        \ensuremath{\mathrm{L}^{(c)}}%
    }{%
        \ensuremath{\mathrm{L}^{(#1)}}%
    }%
}
\newcommand{\fS}[0]{\ensuremath{f_{\mathrm{s}}}}
\newcommand{\fL}[0]{\ensuremath{f_{\mathrm{l}}}}
\definecolor{myParula01Blue}{RGB}{0,114,189}
\definecolor{myParula02Orange}{RGB}{217,83,25}
\definecolor{myParula03Yellow}{RGB}{237,177,32}
\definecolor{myParula04Purple}{RGB}{126,47,142}
\definecolor{myParula05Green}{RGB}{119,172,48}
\definecolor{myParula06LightBlue}{RGB}{77,190,238}
\definecolor{myParula07Red}{RGB}{162,20,47}
\begin{document}

\title{Interchain Timestamping for Mesh Security}

\author{Ertem Nusret Tas}
\affiliation{Stanford University\country{}}
\email{nusret@stanford.edu}

\author{Runchao Han}
\affiliation{BabylonChain\country{}}
\email{runchao.han@babylonchain.io}

\author{David Tse}
\affiliation{Stanford University\country{}} 
\email{dntse@stanford.edu}

\author{Fisher Yu}
\affiliation{BabylonChain\country{}}
\email{fisher.yu@babylonchain.io}

\author{Kamilla Nazirkhanova}
\affiliation{Stanford University\country{}} 
\email{nazirk@stanford.edu}

\begin{abstract}
Fourteen years after the invention of Bitcoin, there has been a proliferation of many permissionless blockchains. Each such chain provides a public ledger that can be written to and read from by anyone.  In this multi-chain world, a natural question arises: what is the optimal security an existing blockchain, a consumer chain,  can extract by only reading and writing to $k$ other existing blockchains, the provider chains? 
We design a protocol, called interchain timestamping, and show that it extracts the maximum economic security from the provider chains, as quantified by the slashable safety resilience.
We observe that interchain timestamps are already provided by light-client based bridges, so interchain timestamping can be readily implemented for Cosmos chains connected by the Inter-Blockchain Communication (IBC) protocol. We compare interchain timestamping with cross-staking, the original solution to mesh security, as well as with Trustboost, another recent security sharing protocol.
\end{abstract}

\maketitle

\section{Introduction}
\label{sec:introduction}
\subsection{Motivation}
Bitcoin, invented by Nakamoto in 2008 \cite{bitcoin}, is the first permissionless blockchain. It provides a public ledger, which anybody can read from and write arbitrary data. 
Since then, there has been a proliferation of such permissionless blockchains. They include general-purpose blockchains such as Ethereum, Cardano, Solana, Avalanche etc., each of which supports many decentralized applications, as well as application-specific blockchains such as the Cosmos zones, each of which supports a specific application. 
Together, these blockchains 
 form a {\em multi-chain world}, communicating with each other through bridging protocols, which read transactions from one ledger and write transactions onto another. These protocols provide important functionalities such as token swaps, thus allowing the composability of applications across different blockchains to build more powerful ones. 
Indeed, the Inter-Blockchain Communication (IBC) protocol,  which connects different Cosmos zones is a central reason for the recent flourishing of that ecosystem.

A fundamental property of a blockchain is its {\em security}. 
Currently, the security of a blockchain requires a majority or supermajority of its own validators to follow the protocol. 
So the security of a blockchain is as good as its own validator set.
In a multi-chain world, a natural question is whether a blockchain can borrow security from other blockchains through existing bridging protocols? 
In other words, can bridges be used to {\em transfer security} in addition to transfer assets, thus allowing the {\em composability of security}? 
Let us call such protocols {\em interchain consensus protocols}.

A concrete motivation for this question is the recent emergence of the {\em mesh security} concept for the Cosmos ecosystem \cite{sunny-mesh, mesh-sec-github}. This ecosystem currently consists of $54$ sovereign {\em zones}, each being a Tendermint-driven \cite{tendermint} Proof-of-Stake (PoS) blockchain. Since each such zone focuses on a specific application, individually, they tend to be smaller compared to general-purpose blockchains and hence have lower security. Thus, the ability to share security with neighboring zones is critically beneficial to both individual zones, as well as to the ecosystem as a whole since lack of security in one chain can spread to other chains. 

\subsection{Interchain Timestamping}

In this paper, we propose a simple candidate for PoS interchain consensus protocols, which we call the {\em interchain timestamping} protocol. 
The idea of timestamping or checkpointing to extract security from a chain has a long history and is used for example in finality gadgets \cite{ebbandflow,sankagiri_clc}, rollups and bridges. 
This idea has also been used recently in reducing latency of longest chain protocols \cite{ledger-combiners} and building PoS protocols enhanced by Bitcoin security \cite{bitcoin-timestamp, veriblock-whitepaper, komodo, pikachu, btc-pos}, where signed headers of PoS blocks are submitted as Bitcoin transactions to be timestamped on the Bitcoin chain. 
We apply this idea to the problem of one PoS \emph{consumer} blockchain extracting security from $k$ PoS \emph{provider} blockchains by having the block headers of the consumer chain timestamped on one of the provider chains, and the blocks of that provider chain timestamped on the next provider chain and so forth (Figure \ref{fig:ic_ts}).  

\begin{figure}[t]
    \centering
    \includegraphics[width=.9\linewidth]{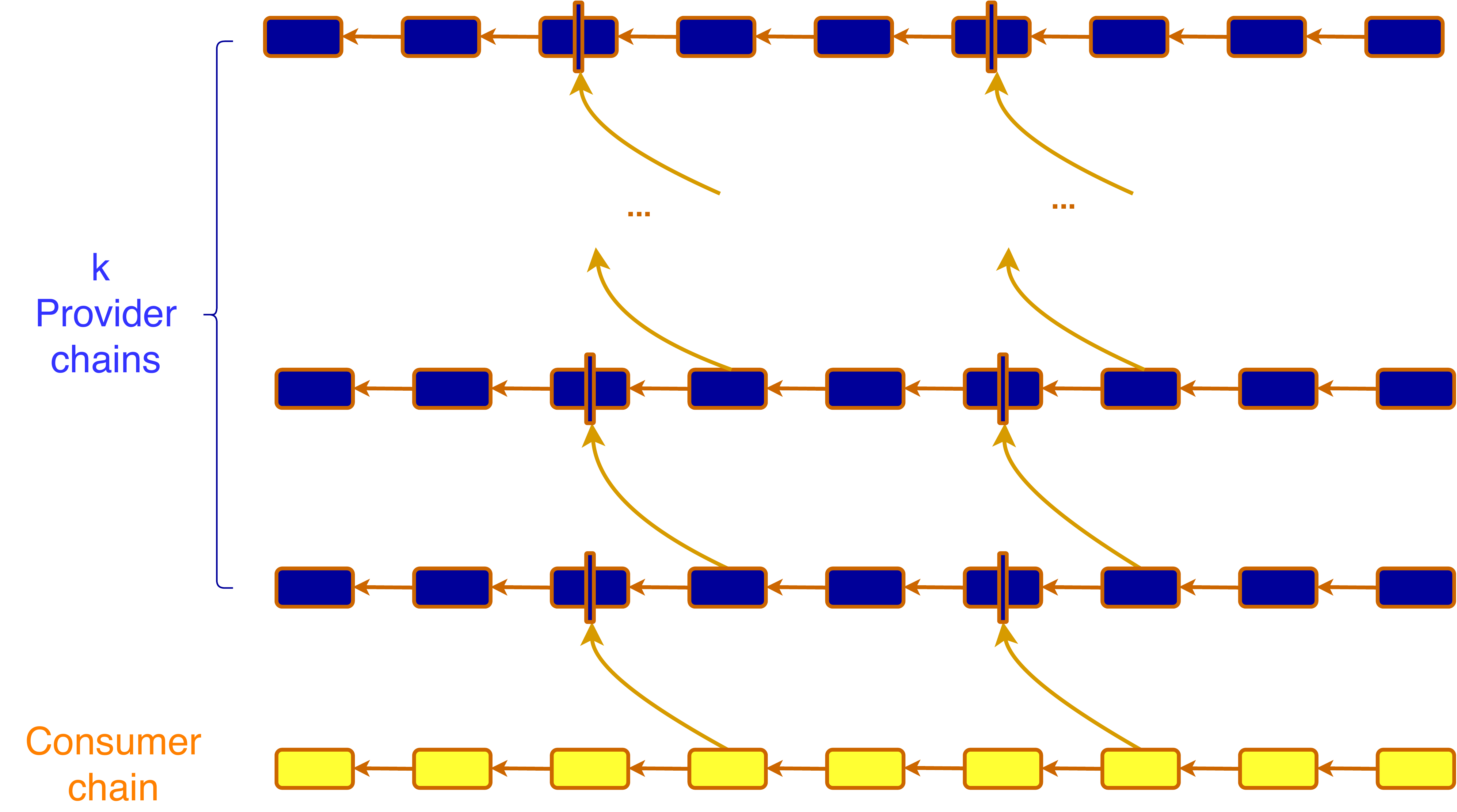}
    \caption{A consumer chain timestamping to $k$ provider chains in sequence.}
    \label{fig:ic_ts}
\end{figure}

How do we implement interchain timestamping using bridges? There are primarily two types of bridges, which connect a sender chain and a receiver chain: 1) multi-signature bridges, where bridge transfers are signed by a committee external to the validators of the blockchains; 2) light-client based bridges, where the validators of the sender chain sign block headers for transfers and the receiver chain maintains an on-chain light client of the sender chain to verify the transfers. IBC developed for Cosmos zones is a primary example of the latter. Perhaps a happy coincidence, but maintaining an on-chain light client of the sender chain in the receiver chain is nothing but putting the signed headers of the sender chain in the ledger of the receiver chain. Hence, light-client based bridges {\em automatically}  provide the timestamping functionality and no additional communication messages are needed. 
Figure~\ref{fig:ic_ts} gives an example of interchain timestamping in action. 

\begin{figure}
    \centering
    \includegraphics[width=1.0\linewidth]{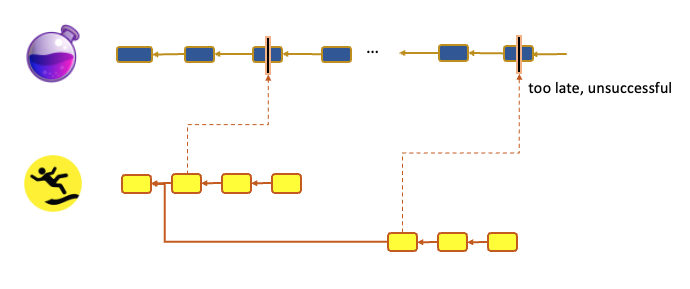}
    \caption{A new Cosmos zone "Rugpull" has an IBC channel with Osmosis, a Cosmos zone running a decentralized exchange. The headers of Rugpull are timestamped onto Osmosis through IBC. The timestamps help secure Rugpull, because if there is a safety attack to fork the Rugpull chain, the timestamps on Osmosis will determine the correct Rugpull fork. Securing Rugpull is also beneficial to Osmosis, since they are trading partners. }
    \label{fig:my_label}
\end{figure}
\subsection{Security Guarantees}

We provide a security analysis in the slashable safety framework \cite{casper, tendermint,forensics, acc_gadget}, where security of a consensus protocol is measured by two metrics: 1) how many validators can be held accountable and slashed when there is safety violation, the slashable safety resilience; 2) how many validators are needed to cause a liveness violation, the liveness resilience. 
In PoS protocols, validators stake funds in the protocol, and can be financially punished by burning, \ie, slashing their stake if they are found accountable for a safety violation.
The liveness resilience is a classical metric in consensus protocol theory, but the slashable safety resilience is a stronger concept than the classical safety resilience, since no assumption is made on how many validators are adversarial in the definition of slashable safety resilience. 
Since it directly translates to the cost of an attack, we will also call slashable safety resilience the {\em economic security} of the protocol. We will call the liveness resilience the {\em censorship resistance} of the protocol. Since censorship cannot always be provably attributed and slashed (\cf~\cite[Appendix F]{btc-pos}), this quantity does not directly translate into an economic security argument. In a classical partially synchronous protocol like Tendermint, both the economic security and the censorship resistance are $1/3$ of the total number of validators, translating into $1/3$ of the market cap of the blockchain as the funds staked in a PoS protocol typically constitute a large fraction of the market cap.

We obtain the following security results about the interchain timestamping protocol (timestamping protocol for short) in the slashable safety framework:
\begin{enumerate}
\item Interchain timestamping ensures the safety of the timestamped consumer chain as long as at least one of the consumer or provider chains is safe. Therefore, through interchain timestamping, the consumer chain can obtain {\em additional} economic security equal to the sum of the economic securities of the $k$ provider chains (Theorem~\ref{thm:interchain-timestamping-security}).
\item This economic security attained by interchain timestamping is the maximum among all possible interchain consensus protocols, which only read and write to the individual blockchains (Theorem~\ref{cor:interchain-timestamping-best-accountability}).
\item Interchain timestamping guarantees the liveness of the timestamped consumer chain if the consumer chain and all provider chains are live. Therefore, through interchain timestamping, the censorship resistance of the consumer chain becomes the minimum of the censorship resistances of the constituent chains.

\item Among all interchain protocols that only send succinct commitments from the consumer chain to the provider chain, interchain timestamping achieves optimal censorship resistance if the censorship resistance of the stand-alone consumer chain is less than the censorship resistance of every provider chain (Theorem~\ref{cor:interchain-timestamping-optimality-2}).
\end{enumerate}

Taken all together, these results say that under the reasonable assumption that the consumer chain has lower stand-alone security than each of the provider chains, the timestamping protocol is {\em simultaneously} optimal in both its slashable safety and in its liveness resilience guarantees, among all interchain consensus protocols which leave data on the consumer chain and only send commitments to the provider chains.

Although the protocol poses a trade-off between economic security and liveness resilience/latency in the general case, it gives clients of the consumer chain the \emph{flexibility} to select any provider chain without changing the behavior of the validators on these chains.
Hence, each client can determine (and later modify) its desired security-latency trade-off without coordinating with any other party, and any modification for the nodes who do not opt for enhanced economic security (Section~\ref{sec:safety-liveness-tradeoff}).

\subsection{Design Challenges}

Safety of the timestamping protocol depends \emph{only} on the safety resiliences of the constituent chains, whereas its liveness depends \emph{only} on their liveness resiliences. 
Although timestamping has been used extensively by multichain protocols (\eg Snap-and-Chat \cite{ebbandflow}, Babylon \cite{btc-pos}) for different goals, separation of safety and liveness conditions introduces unique challenges for the timestamping protocol, which limits the ability of existing protocols to address its security requirements.
For instance, a Snap-and-Chat implementation using light-client bridges would be susceptible to data availability attacks and need liveness of the constituent chains for safety.
In contrast, the Babylon protocol would stall and lose liveness after a safety violation in the consumer chain even if all chains are live.
To overcome these limitations, our protocol combines iterative timestamping on multiple providers with special \emph{stalling} and \emph{sanitization} rules.
Section~\ref{sec:related-work} analyzes how attacks against existing solutions inspired the design of the timestamping protocol.

\subsection{Interchain Timestamping for Mesh Security}

Interchain timestamping is a natural solution for mesh security, since the zones with IBC channels are already timestamping to each other.  We empirically evaluate the economic security gain in each Cosmos zone from interchain timestamping. The results are shown in the top of Figure \ref{fig:summary-zones}. The evaluation setup is described in Section \ref{sec:evals}. 
Empirical analysis shows that each Cosmos zone can derive economic security from any other zone at the cost of merely $6$s latency and $\$52,560$ per year.
We compare the results with that achieved by {\em cross-staking}, a technique proposed in \cite{sunny-mesh} to achieve mesh security. 
Cross-staking allows validators to stake their tokens not only on their native chain but also simultaneously on any other chain with which there is an IBC channel.
This means that validators are simultaneously downloading, validating and executing transactions on multiple zones as \emph{full nodes}, a significantly heavier form of security sharing than timestamping, where validators in each chain accept succinct commitments of blocks from other chains into their ledger but as \emph{light clients} do not download, validate or execute the data within these blocks on the other chains.
Similarly, the timestamping protocol allows clients of the consumer chain to run light clients of the provider chains with the same guarantees.
As such, timestamping is an example of an \emph{interchain consensus protocol}:
these protocols do not require any more interaction (\eg, running other consensus engines) among distinct validator sets than reading and writing to the ledgers of the other blockchains.

\begin{figure*}[t]
    \centering
    \begin{subfigure}{\textwidth}
        \centering
        \includegraphics[width=.9\textwidth]{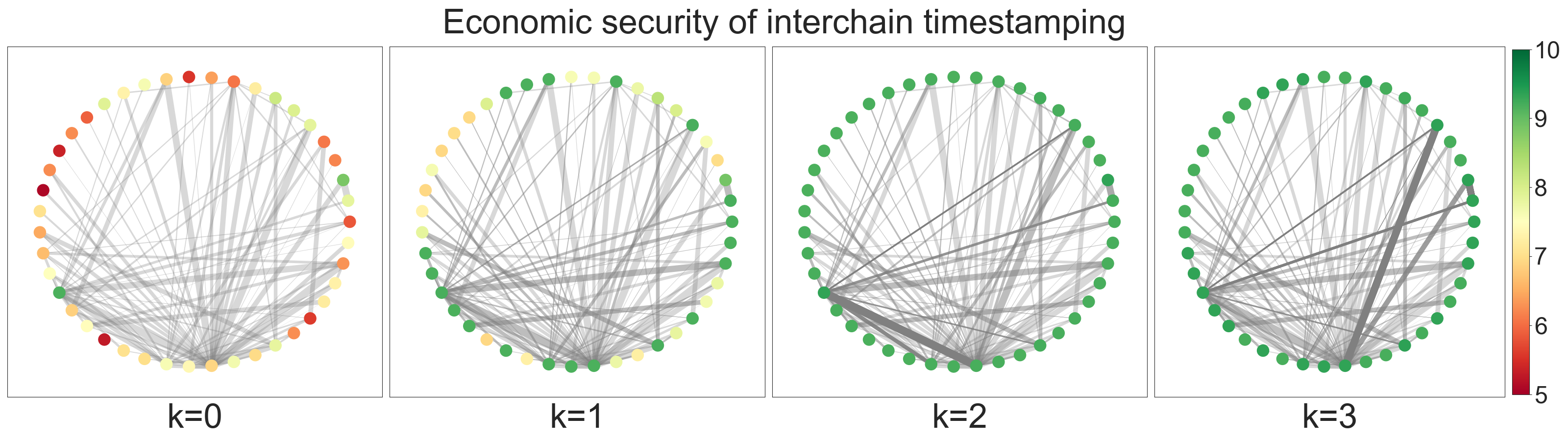}
    \end{subfigure}
    \begin{subfigure}{\textwidth}
        \centering
        \includegraphics[width=.9\textwidth]{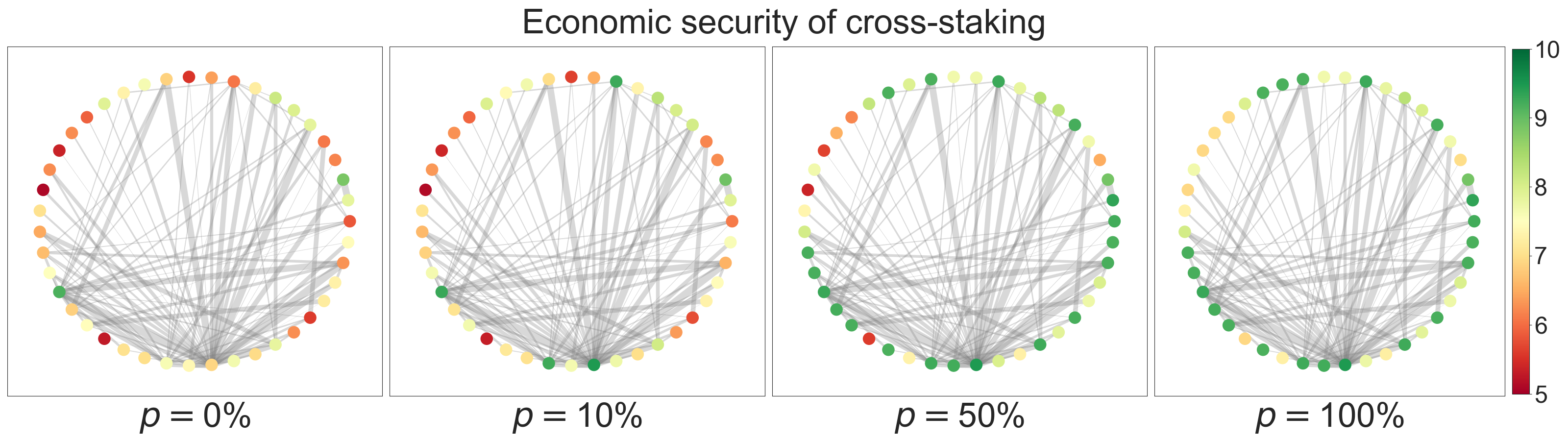}
    \end{subfigure}
    \caption{
        Economic security of interchain timestamping over Cosmos zones, with a comparison to cross-staking~\cite{cross-chain-validation-blog}.
        Each vertex represents a Cosmos zone and each edge represents an IBC channel.
        Vertex color is the maximum economic security in logscale: for example, a zone with color code $8$ achieves the economic security of $10^8$ USD.
        The edge thickness indicates the frequency of IBC transfers (thus light client updates), also in logscale.
        In interchain timestamping, parameter $k$ indicates the number of provider zones each consumer zone gets security from. 
        The edge transparency indicates whether the edge is used for providing security for a zone.
        In cross-staking, parameter $p$ in cross-staking indicates the power cap: the maximum percentage of the total stake on a consumer chain that can be cross-staked from a provider chain.
    }
    \label{fig:summary-zones}
\end{figure*}

\subsection{Outline}
Section~\ref{sec:related-work} compares the interchain timestamping prootocol with the related work, highlighting the technical challenges and novelty of our protocol with respect to older protocols that use timestamping.
Section~\ref{sec:preliminaries} presents definitions and the core assumptions.
Sections~\ref{sec:interchain-timestamping} and~\ref{sec:interchain-timestamping-security} provide a detailed description of the timestamping protocol and its security guarantees.
Section~\ref{sec:evals} gives an empirical evaluation of the security, latency and overhead of timestamping protocols for Cosmos zones.
In Section~\ref{sec:interchain-protocols}, we delineate the limits of the security properties achievable by interchain consensus protocols, characterize the settings where the timestamping protocol is optimal, and provide an achievability result for all optimal interchain consensus protocols.
Finally, in Section~\ref{sec:cross-chain-validation}, we characterize the limits of the security guarantees achievable by all possible cross-staking solutions and provide an achievability result for all optimal cross-staking protocols.

\section{Related work}
\label{sec:related-work}

\subsection{Snap-and-Chat Protocols}
\label{sec:related-work-snap-and-chat}

Snap-and-chat protocols post \emph{snapshots} of the ledger output by a permissioned longest chain (LC) protocol (\eg, Sleepy consensus~\cite{sleepy}), to a partially-synchronous BFT protocol (\eg, HotStuff~\cite{yin2018hotstuff}) to satisfy the \emph{ebb-and-flow property}, \ie to produce an available ledger, secure under dynamic participation, and a prefix ledger, secure under network partitions.
These snapshots consist of all transactions within the LC, thus making the protocol unpractical in terms of communication complexity.
Replacing them with succinct timestamps opens snap-and-chat protocols to \emph{data availability attacks}, where the timestamp of a confirmed LC (\ie, consumer) block appears in the (provider) chain output by the partially synchronous protocol, yet the block itself remains hidden from the clients. 
For instance, an adversarial majority among the LC validators can confirm a block $B$ and post its timestamp to the provider chain without revealing $B$ itself (this would not be possible if the entire LC data is posted as a snapshot).
Then, clients would have to skip the unavailable timestamp and create their ledgers using the available LC blocks.
However, the adversarial validators can later reveal $B$ to a late-coming client, which would include $B$ in its ledger.
As this is a safety violation despite the safety of the provider chain, snap-and-chat protocols fall short of the timestamping protocols, which satisfy safety when at least one chain is safe.

To avoid such safety violations, the timestamping protocol requires its clients to stall upon observing unavailable timestamps, thus \emph{trading-off} liveness for safety. 
Although the stalling rule introduces a new liveness attack vector, it does not reduce the liveness resilience of the timestamping protocol.

\subsection{Babylon}
\label{sec:related-work-babylon}

Babylon \cite{btc-pos} uses Bitcoin to protect PoS blockchains against long-range attacks and provide them with \emph{slashable safety}.
Towards this goal, it posts signed headers of PoS blocks as timestamps to the Bitcoin chain.
Unlike Babylon, our work assumes that long range attacks are resolved, either via \cite{btc-pos} or other means, and improves the economic security of PoS blockchains by timestamping their signed headers to multiple other PoS chains.

Although Babylon uses succinct timestamps and is equipped with the stalling rule, it cannot achieve the security objectives of the timestamping protocol even when Bitcoin is replaced with a PoS chain.
The timestamping protocol satisfies liveness when all constituent PoS chains are live, without any requirement on their safety.
In contrast, the chain of timestamped consumer blocks output by Babylon loses its liveness when there is a safety violation on the consumer chain.
This is because the fork-choice rule of Babylon requires the timestamped consumer blocks to be consistent with the blocks of earlier timestamps.
However, this cannot be satisfied when there is a safety violation on the consumer chain.

In the case of such safety violations, the timestamping protocol preserves liveness with a \emph{sanitization} step that resolves and merges consumer chain forks at different timestamps.
It thus ensures liveness when both the provider and consumer chains are live, without requiring their safety.

\subsection{Trustboost}
\label{sec:introduction-trustboost}

Trustboost~\cite{trustboost} is a class of consensus protocols, which treats multiple blockchains as `validators' of an external consensus protocol run on top of these blockchains (\cf~\cite{recursive-tendermint} for earlier discussions on such protocols).
The validator functionality is provided by a custom smart contract, and the emulated validators exchange messages over a cross-chain communication protocol (CCC).
Authors implement Trustboost by using smart contracts on the Cosmos zones and IBC as the CCC protocol.
Since the interaction among the validators is restricted to the IBC messages, which convey information from other zones' ledgers, Trustboost also represents an instance of the interchain consensus protocols.

Let us compare Trustboost and the timestamping protocol in terms of security properties.
Trustboost running a partially synchronous consensus protocol (\eg, HotStuff~\cite{yin2018hotstuff}, Tendermint~\cite{tendermint}, Streamlet~\cite{streamlet}) on top of the constituent blockchains satisfies security (safety and liveness) if over two-thirds of the constituent blockchains are secure~\cite{trustboost}.
For instance, Trustboost instantiated with two blockchains requires the security of both chains for the security of the Trustboost ledger (this is not a very interesting statement since a trivial protocol achieving the same guarantee is to simply use one of the constituent blockchains).
In contrast, the timestamping protocol requires the safety of only one of the constituent blockchains for the safety of the timestamped ledger.
However, unlike the timestamping protocol, Trustboost instantiated with $k>3$ blockchains does not require the liveness of all $k$ blockchains for liveness. 
As such, it is not comparable to the timestamping protocol in terms of optimality.
However, we believe that there is a potential path to unify the security guarantees of Trustboost and timestamping protocol in a single framework (\cf Section~\ref{sec:optimal-interchain}).

From an implementation point of view, Trustboost requires the deployment of custom smart contracts on the constituent blockchains and the exchange of specific messages beyond the timestamps provided by the existing IBC communication among Cosmos zones.
In contrast, given sufficient connectivity among the zones, timestamping does not require any changes or enhancements in the participating zones besides updating the clients of the zones to interpret timestamps.
Hence, the timestamping protocol maximizes the economic security achievable by any interchain protocol without any need to change or upgrade the underlying protocols.

\section{Preliminaries}
\label{sec:preliminaries}

\indent
\textbf{Notation.}
Let $[k]$ denote the set $\{0,1,2,\ldots,k\}$, and $\lambda$ denote the security parameter.
An event happens with \emph{negligible probability}, if its probability is $o(\frac{1}{\lambda^d})$ for all $d > 0$. 
An event happens with \emph{overwhelming probability} (w.o.p.) if it happens except with negligible probability.
We use `iff' as a short-hand for `if and only if'.

\textbf{Validators and clients.}
We consider two sets of protocol participants: validators and clients.
Validators receive transactions from the environment $\mathcal{Z}$, execute a state machine replication (SMR) protocol and send consensus-related messages to the clients.
Upon collecting messages from a sufficiently large quorum of validators, each client outputs a sequence of transactions called the \emph{ledger}.
In this case, the ledger is said to be \emph{finalized} in the client's view.
The validators' goal is to ensure that the clients output the same ledgers and obtain the same end states.
The set of clients includes honest validators, as well as external observers that can go offline arbitrarily and output ledgers at arbitrary times.
The protocol is permissioned and there is a public-key infrastructure (PKI): Validators have unique cryptographic identities, and their public keys are common knowledge.
In the following sections, we only consider \emph{static} validator sets, \ie, it is specified by the PKI and does not change throughout the execution.

\textbf{Blocks and chains.}
In \emph{blockchain protocols}, transactions are batched into \emph{blocks}, and the SMR protocol orders these blocks.
They are denoted by $B$, and consist of a header and transaction data.
The header contains (i) a pointer to a parent block (\eg hash of the parent block by a collision-resistant hash function), (ii) a binding and succinct vector commitment to the transaction data (\eg, a Merkle root), and (iii) consensus-related messages.
There is a genesis block $B_0$, that is common knowledge.

A block $B$ is a \emph{descendant} of $B'$ (respectively, block $B'$ is an ancestor of $B$), denoted by the prefix notation $B' \preceq B$, if $B'$ is the same as, or can be reached from $B$ by following the parent pointers.
A block is \emph{valid} iff it is a descendant of $B_0$, and all its ancestors have correct commitments to the respective transaction data in their headers. 
Thus, each valid block $B$ uniquely determines a \emph{chain}, denoted by $\chain$, starting at $B_0$ and ending at $B$.
Blocks $B$ and $B'$ \emph{conflict} if $B' \npreceq B$ and $B \npreceq B'$. 
In blockchain protocols, each client outputs a chain of blocks, from which a ledger can be extracted using the transaction data.

\textbf{Adversary.}
The adversary $\Adv$ is a PPT algorithm that corrupts a subset of the validators, hereafter called \emph{adversarial}, before the protocol execution commences.
Adversary takes control of these validators' internal states and can make them deviate from the protocol arbitrarily (Byzantine faults).
The remaining \emph{honest} validators faithfully follow the protocol rules.
We denote the number of adversarial validators by $f$ and the total number of validators by $n$.

\textbf{Networking.}
Time proceeds in discrete rounds and the clocks are synchronized\footnote{Bounded clock offset can be captured as part of the network delay.}.
Validators can send messages to each other and the clients through point-to-point channels, which are authenticated and reliable if there are honest validators or clients at the endpoints~\cite{lamport82}.
A validator is said to \emph{broadcast} a message if it is sent to all other validators and clients.
The adversary controls the schedule of message delivery and observes messages before the intended recipients.
Upon becoming online, a client observes all messages delivered to it while it was asleep.

We consider two network models:
In the \emph{synchronous} network, the adversary has to deliver a message sent by an honest validator to \emph{all} intended recipients within $\Delta$ rounds, where $\Delta$ is a known parameter.
Upon becoming online at round $t$, a client receives all messages sent to it by the honest validators before round $t-\Delta$.
In the \emph{partially synchronous} network~\cite{DLS88}, the adversary can delay messages arbitrarily until a global stabilization time ($\GST$) chosen by the adversary.
After $\GST$, the network \emph{becomes} synchronous, and the adversary has to deliver the messages sent by any honest validators to \emph{all} intended recipients within the known $\Delta$ delay.
Here, $\GST$ is unknown to the honest validators and clients, and it can be a causal function of the protocol randomness.
In the following sections, we assume that the network is partially synchronous unless stated otherwise.

\textbf{Security.}
Let $\chain^{\client}_r$ denote the chain output by a client $\client$ at round $r$.
We say that the protocol is secure with latency $\Tconfirm = \poly(\lambda)$ if:
\begin{itemize}
    \item \textbf{Safety:} For any rounds $r,r'$ and clients $\client,\client'$, either $\chain_{r}^{\client} \preceq \chain_{r'}^{\client'}$ or vice versa. 
    For any client $\client$, $\chain^{\client}_{r} \preceq \chain^{\client}_{r'}$ for all slots $r$ and $r' \geq r$.
    \item \textbf{$\mathbf{\Tconfirm}$-Liveness:} If $\mathcal{Z}$ inputs a transaction $\tx$ to an honest validator at some round $r$, then, $\tx \in \chain_{r'}^{\client}$ for all $r' \geq r+\Tconfirm$ and any client $\client$.
\end{itemize}

A protocol provides $\fS$-safety if it satisfies safety iff $f \leq \fS$, and $\fL$-$\Tconfirm$-liveness if it satisfies $\Tconfirm$-liveness iff $f \leq \fL$.

\textbf{Slashable safety.}
Each honest validator collects the exchanged consensus messages in an execution transcript.
If clients observe a safety violation, they send the conflicting chains as evidence to the validators, upon which the honest validators answer with their transcripts.
The clients then invoke a forensic protocol with these transcripts and generate a proof identifying $f_a$ adversarial validators as protocol violators~\cite{forensics}.
This proof is subsequently broadcast to all other clients, and serves as evidence of protocol violation.
\begin{definition}
\label{def:accountable-safety}
A blockchain protocol provides slashable safety with resilience $f_a$ if when there is a safety violation, (i) at least $f_a$ adversarial validators are identified by the forensic protocol as protocol violators, and (ii) no honest validator is identified (w.o.p.).
Such a protocol is said to provide \emph{$f_a$-slashable-safety}.
\end{definition}
As a stronger notion, $f_a$-slashable-safety implies $f_a$-safety. 
When safety is violated, $f_a$ adversarial validators are irrefutably identified; which cannot happen if fewer than $f_a$ validators are adversarial.

We hereafter assume a homogeneous stake distribution; since validators with more stake can be modeled as multiple unit-stake validators controlled by the same entity. 
Thus, identification of a fraction $\beta>0$ of adversarial validators imply the slashing of $\beta$ fraction of stake.
Moreover, as the total amount staked on a PoS blockchain is typically proportional to its market cap, slashable safety of a blockchain can thus be economically quantified as a fraction (\eg, $\beta$) of its market cap, allowing comparison of this economic security across different PoS chains.

\textbf{Data Availability.}
A valid block $B$ is \emph{available} in a client's view at round $r$ if the contents of $B$ and all its ancestors have been observed by the client by round $r$.
Upon outputting an ordering of the block headers, the clients can determine a total order across the transactions of available and valid blocks as the vector commitments in the headers are binding.
A validator executing a blockchain protocol is said to \emph{check for data availability} if it verifies the availability of the valid blocks before considering them valid.
In subsequent sections, all clients are \emph{full nodes} unless stated otherwise: They download all of the block headers and transaction data of the blockchain protocols.

\section{Interchain Timestamping Protocol}
\label{sec:interchain-timestamping}

We next describe the details of the timestamping protocol, and how it can be instantiated in the Cosmos ecosystem without changing the existing protocols.

\subsection{Timestamping on One Provider}
\label{sec:interchain-timestamping-two}

\begin{algorithm}[t]
    \captionsetup{font=small} 
    \caption{The function used by a bootstrapping client $\client$ to output the checkpointed ledger $\aux^\client_r$ at some round $r$. It takes the blocktrees $\T_C$ and $\T_P$ of finalized consumer and provider blocks as input, and outputs $\aux^{\client}_r$. The function $\textsc{GetCkpts}$ outputs the sequence of checkpoints on the unique provider chain extending the genesis provider block $B^P_0$. The function $\textsc{IsValid}$ checks if the given checkpoint contains pre-commit signatures by $q_C$ of the consumer validators on its block hash. The function $\textsc{GetChain}$ returns the chain of finalized consumer blocks within $\T_C$ that ends at the preimage of the hash within the checkpoint. It returns $\bot$ if the block or its prefix chain is unavailable or not finalized. The function $\clean(\aux,\chain)$ returns the sanitized sequence of blocks (\cf Figure~\ref{fig:i2qnterchain-timestamping}).}
    \label{alg.timestamping}
    \begin{algorithmic}[1]\small
    \Function{\sc OutputChain}{$\T_C, \T_P$}
        \Let{\ckpt_1, \ldots, \ckpt_m}{\textsc{GetCkpts}(\T_P)}
        \Let{\aux}{B^C_0}
        \For{$j=1$ to $m$} \Comment{Obtain the checkpointed ledger}
            \If{$\textsc{IsValid}(\ckpt_j)$} \Comment{Check validity.}
                \label{line:isvalid}
                \Let{\chain_j}{\textsc{GetChain}(\T_C, \ckpt_j)}
                \If{$\chain_j = \bot$}
                    \label{line:btc2}
                    \State\Return $\aux$ \Comment{Data Unavailable}
                \Else
                    \label{line:btc1}
                    \Let{\aux}{\clean(\aux,\chain_j)}
                    \label{line:update}
                    \Comment{Update chkpt. ledger}
                \EndIf
            \EndIf
        \EndFor
        \State\Return $\aux$ 
    \EndFunction
    \end{algorithmic}
\end{algorithm}

\begin{algorithm}[t]
    \captionsetup{font=small} 
    \caption{The timestamping protocol $\PI_I$ with $k+1$ blockchains. A client $c$ runs this protocol to determine the $\PI_I$ chain.}
    \label{alg.multitimestamping}
    \begin{algorithmic}[1]\small
    \Function{\sc $\PI_I$}{$\T_0, \dots, \T_k$}
    \For{$i = 0$ to $k-1$}
    \State $\T_P, \T_C \gets \T_{k-i}, \T_{k-i-1}$
    \If{$i < k-1$}
    \State $\T_{k-i-1} \gets {\textsc{OutputHeaderChain}(\T_C, \T_P)}$
    \Else
    \State $\T_{k-i-1} \gets {\textsc{OutputChain}(\T_C, \T_P)}$
    \EndIf
    \EndFor
    \State \Return $\T_{0}$
    \EndFunction
    \end{algorithmic}
\end{algorithm}

\begin{figure}[t]
    \centering
    \includegraphics[width=\linewidth]{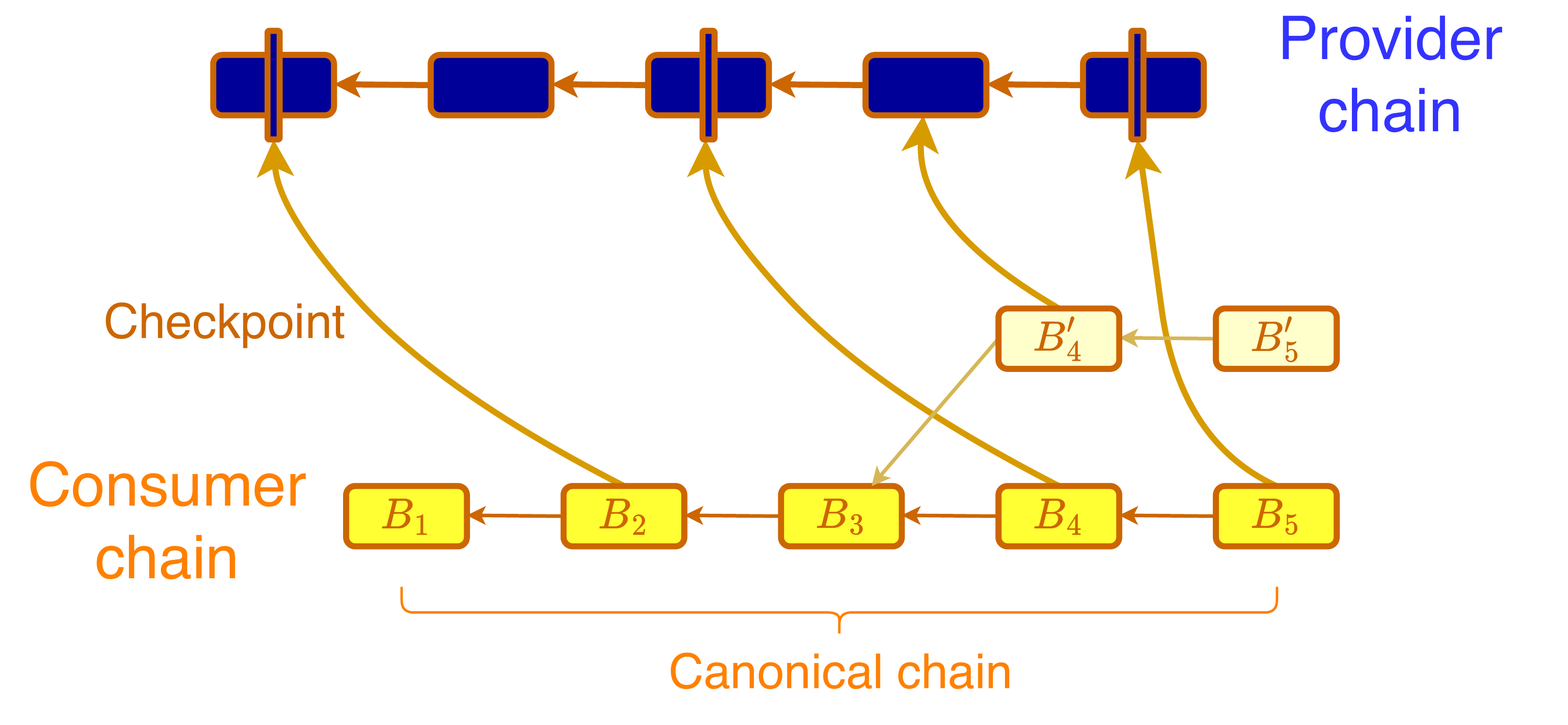}
    \caption{Interchain timestamping protocol. A client observing the consumer chain blocks $B_1$ thru $B_4$ outputs $B_1,B_2,B_3,B_4$ as the timestamped ledger ($\PI_I$ chain). After observing block $B'_4$, it updates the ledger by calling $\clean(B_1,B_2,B_3,B_4,\ B_1,B_2,B_3,B'_4)$, which after eliminating the duplicate blocks returns the sequence $B_1,B_2,B_3,B_4,B'_4$. 
    Upon observing the timestamp of $B_5$, the client outputs $\clean(B_1,B_2,B_3,B_4,B'_4\ B_1,B_2,B_3,B_4,B_5)$, which returns $B_1,B_2,B_3,B_4,B'_4,B_5$ as the timestamped ledger. Note that $B'_4$ does not have a pointer to $B_4$ in the ledger, thus in this example, it is not a chain.}
    \label{fig:i2qnterchain-timestamping}
\end{figure}

We first describe the timestamping protocol $\PI_I$ with two constituent blockchains $\PI_C$ and $\PI_P$.
We call $\PI_P$ receiving the timestamps the \emph{provider blockchain} (running the provider protocol) and $\PI_C$ at the origin of these timestamps the \emph{consumer blockchain} (running the consumer protocol).
This notation highlights the \emph{flow of security} enabled by the timestamping protocol.
We denote their blocks by $B_C$ and $B_P$, and call them the consumer and provider blocks respectively.
At a high level, validators of $\PI_C$ receive transactions from the environment and output the $\PI_C$ ledger.
Validators of $\PI_P$ receive timestamps of the $\PI_C$ ledger as their input transactions, and order these snapshots within the $\PI_P$ ledger.
At any round $r$, each online client $\client$ inspects the timestamps of the $\PI_C$ ledger in the $\PI_P$ ledger.
They extract the ledgers, whose timestamps were finalized by the $\PI_C$ validators, and output these ledgers in the order they appear in the $\PI_P$ ledger.
Finally, they eliminate the duplicate transactions appearing in multiple ledgers (Figure~\ref{fig:i2qnterchain-timestamping}) and output a timestamped ledger of consumer blocks as the $\PI_I$ ledger.

For constituent blockchains, we focus on quorum-based BFT protocols such as PBFT \cite{pbft}, Tendermint \cite{tendermint}, HotStuff \cite{yin2018hotstuff} and Streamlet \cite{streamlet} with slashable safety, and assume $\PI_C$ and $\PI_P$ are run by $n_C$ and $n_P$ validators with quorums of size $q_C$ and $q_P$ respectively.
These protocols enable clients to verify the finality of blocks by checking their quorum of signatures and allow validators to generate \emph{transferrable} and succinct proofs of finality consisting of these signatures (\cf certificate producing protocols \cite{lewispyeroughgardenccs}).
Therefore, finalized consumer (resp. provider) blocks hereafter refer to valid blocks, whose ancestors (including itself) have gathered a quorum of $q_C$ (resp. $q_P$) signatures from among the $n_C$ (resp. $n_P$) validators of $\PI_C$ (resp. $\PI_P$)\footnote{Exact name of the signatures depends on the protocol, it is called pre-commit in Tendermint \cite{tendermint} and commit in HotStuff \cite{yin2018hotstuff}).}.

\noindent
\textbf{Timestamps.}
The timestamps are succinct representations of finalized blocks that convey their finality.
They consist of the hash of the block (or blocks), its height, and a quorum of $q_C$ signatures on the hash.
When a new consumer block is finalized, an honest consumer validator (or a designated client) sends a timestamp to the provider blockchain as a transaction\footnote{Not every consumer block has to be timestamped and periodic timestamping is sufficient, albeit making latency larger for the $\PI_I$ ledger}.

\noindent
\textbf{Clients.}
We assume that each client $\client$ is a full node of the consumer chain and downloads the consumer blocks.
In contrast, from the provider chain, $\client$ has to obtain only the timestamps of the consumer chain.
Thus, it is sufficient for it to download the header chain of provider blocks and run a \emph{light client} of the provider chain akin to simple payment verification \cite{bitcoin}.
Indeed, $\client$ can succinctly verify the finality of the provider blocks, and receive all consumer chain timestamps within the finalized blocks from an honest full node (\eg validator) of the provider chain.
By using namespaced Merkle trees \cite{albassam2019lazyledger} to organize the provider chain data, these validators can convince $\client$ that it has received all consumer timestamps within a finalized provider block.

\noindent
\textbf{Sanitization.}
The sanitization function $\clean(\aux,\chain)$ takes a sequence of blocks $\aux$ (not necessarily a chain with consistent parent pointers) and a chain $\chain$ (\cf \cite{ebbandflow}).
It inspects their concatenation $\aux \mathbin\Vert \chain$, and eliminates the duplicate blocks that appear later (Figure~\ref{fig:i2qnterchain-timestamping}).
It outputs the remaining sequence of blocks.

\noindent
\textbf{Fork-choice rule (Alg.~\ref{alg.timestamping}).}
To identify the $\PI_I$ ledger at some round $r$, each client $\client$ first downloads all finalized consumer blocks and constructs a blocktree denoted by $\mathcal{T}_C$ with quorums of signatures on the blocks.
It also identifies the sequence $\ckpt_i$, $i \in [m]$, of consumer chain timestamps within the chain of finalized provider blocks, listed from the genesis to the tip of the provider header chain (if $\client$ observes a fork in the provider header chain, it outputs timestamps only from the portion preceding the fork).

Starting at the genesis consumer block, $\client$ constructs a \emph{timestamped} ledger of finalized consumer blocks, denoted by $\aux^{\client}_r$, by sequentially going through the timestamps.
This ledger is a sequence of blocks and imposes a total order across the consumer chain blocks and the transactions therein.
This total order is the same in the view of every client, and the clients can agree on a clean ledger by eliminating double-spends etc. in the same order.
However, the timestamped ledger need not be a chain, since a block in the ledger does not necessarily have a parent pointer to the previous block\footnote{If the consumer chain is safe, the timestamped ledger is a chain of consumer blocks.}.

For $i=1,\ldots,m$, let $\chain_i$ denote the chain of consumer blocks ending at the block, denoted by $B^C_i$, at the preimage of the hash within $\ckpt_i$, if $B^C_i$ and its prefix chain is available in $\client$'s view at round $r$.
Suppose $\client$ has gone through the sequence of timestamps until $\ckpt_j$ for some $j \in [m]$, and obtained $\aux$ as the latest timestamped ledger based on the blocktree $\mathcal{T}_C$ and $\ckpt_1 \ldots, \ckpt_j$.
The timestamp $\ckpt_{j+1}$ is said to be \emph{valid} if it contains $q_C$ pre-commit signatures by the consumer validators on its block hash. Then,

\noindent
\textbf{(1)} If (i) $\ckpt_{j+1}$ is valid, and (ii) every block in $\chain_{j+1}$ is available and finalized in $\client$'s view, then $\client$ sets $\aux \xleftarrow[]{} \clean(\aux,\chain_{j+1})$.

\noindent
\textbf{(2)} If (i) $\ckpt_{j+1}$ is valid, and (ii) a block in $\chain_{j+1}$ is either unavailable or not finalized in $\client$'s view, then $\client$ stops going through the sequence $\ckpt_j$, $j \in [m]$, and outputs $\aux$ as its final timestamped ledger.
This premature \emph{stalling} of the fork-choice rule is necessary to prevent data availability attacks.

After going through all timestamps or stalling early, $\client$ outputs the timestamped ledger $\aux^{\client}_r$ as the $\PI_I$ ledger.

\subsection{Timestamping on Multiple Providers}
\label{sec:interchain-timestamping-many}
We generalize the construction above by describing a timestamping protocol $\PI_I$ (Alg.~\ref{alg.multitimestamping}) with $k+1$ constituent blockchains $\PI_i$, $i \in [k]$, each with $n_i$ validators and a quorum of $q_i$ respectively.
Validators of $\PI_0$ receive transactions from the environment and output the $\PI_0$ chain.
For each $i=1, \ldots, k$, validators of $\PI_{i}$ receive timestamps of the finalized $\PI_{i-1}$ chain as their input transactions, and order them within their output chains.
The finalized $\PI_k$ chain is \emph{not} timestamped on any other chain.

At any round $r$, each online client $\client$ inspects the timestamps of the finalized $\PI_{i-1}$ chain in the $\PI_{i}$ chain for $i=1, \ldots, k$ ($\client$ is a light client of all blockchains except $\PI_0$).
It first determines the timestamped ledger of $\PI_{k-1}$ \emph{block headers} by using the protocol in Section~\ref{sec:interchain-timestamping-two} with $\PI_k$ acting as the provider blockchain and $\PI_{k-1}$ as the consumer blockchain.
Note that while running the protocol, $\client$ acts as a light client towards $\PI_{k-1}$ and obtains only the headers of the $\PI_{k-1}$ blocks constituting the timestamped $\PI_{k-1}$ ledger.
Then, for each $i=k-1, \ldots, 1$, by treating the timestamped ledger of $\PI_{i}$ headers as the finalized provider header chain and the finalized $\PI_{i-1}$ as the consumer blockchain, $\client$ iteratively repeats the protocol in Section~\ref{sec:interchain-timestamping-two} to output the timestamped ledger of $\PI_{i-1}$ headers.
In the final step, $\client$ acts as a full node towards $\PI_0$ and outputs the full timestamped ledger of $\PI_0$ blocks as the $\PI_I$ ledger.
In this architecture, all blockchains $\PI_i$, $i=1,\ldots,k$ act as providers, directly or \emph{indirectly}, towards $\PI_0$, which in turn acts as a consumer of all others.

\subsection{Interchain Timestamping via IBC}
\label{sec:ibc}

A promising real-world tool for realizing the interchain timestamping protocol is IBC (inter-blockchain communication).
It is the bonding agent of the Cosmos ecosystem, and has attracted attention from other PoS ecosystems.
It allows Cosmos zones, sovereign blockchains running specific applications, to send IBC packets (\eg, cross-chain transfers) to each other via an IBC channel.
A receiver zone verifies IBC packets from the sender zone by maintaining a light client of the sender zone.
The ordered IBC packets in the ledger are then executed at the application layer defined using Cosmos SDK~\cite{cosmos-sdk-source}.
The light client protocol of IBC has already implemented our desired timestamping mechanism.
Thus, utilizing it to extract security will not require any breaking changes to the existing systems.

Figure~\ref{fig:ibc_ts} depicts the light client protocol.
It consists of two main components:

\begin{figure}[t]
    \centering
    \includegraphics[width=1.0\linewidth]{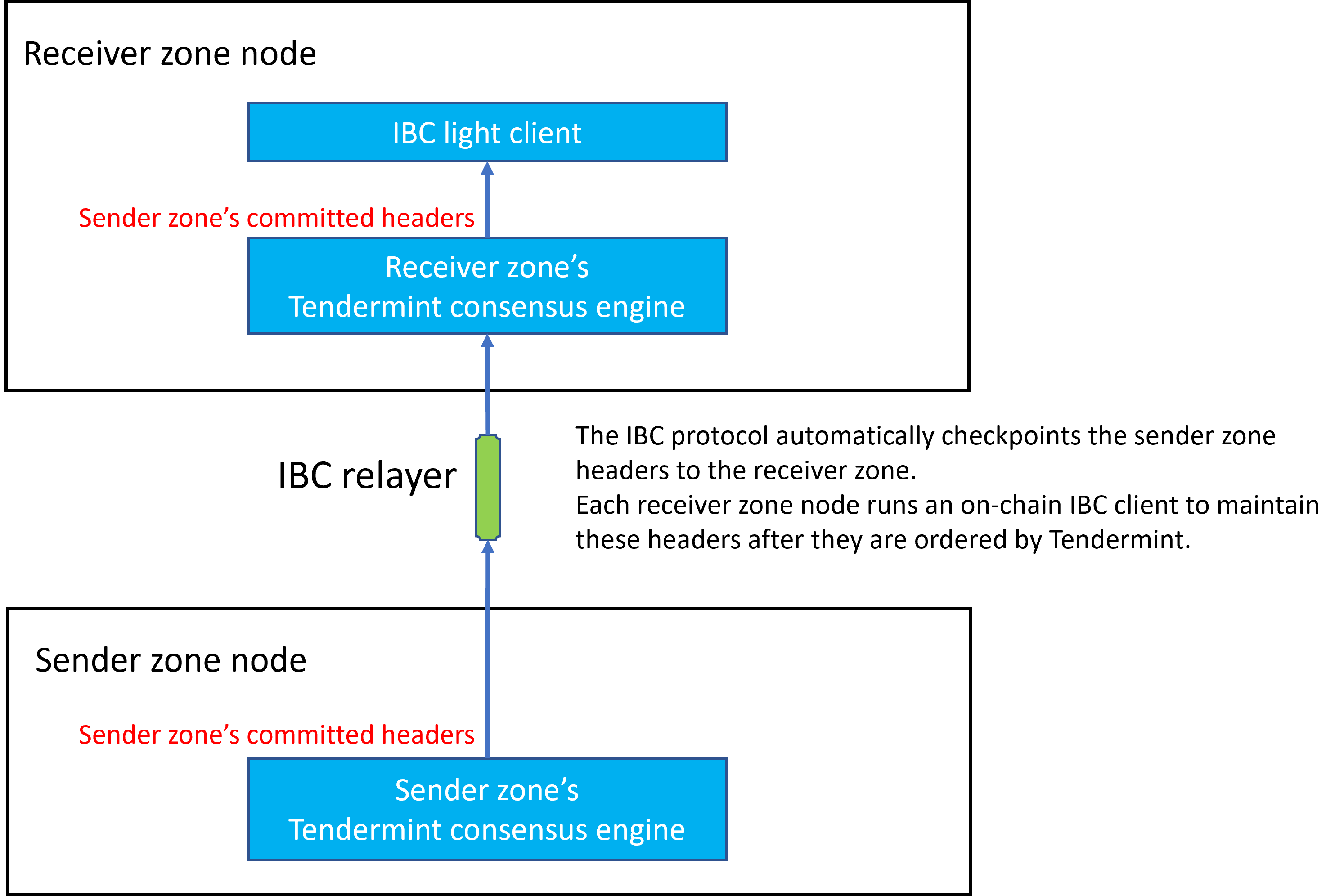}
    \caption{The IBC light client protocol does automatic timestamping.}
    \label{fig:ibc_ts}
\end{figure}

\noindent

\noindent
\textbf{(1)} An IBC relayer, which is an independent program connecting two validators from two zones. It relays headers and transactions of the sender zone to the receiver zone by sending them as the receiver zone’s transactions, so that the receiver zone's Tendermint engine can add them to the ledger.

\noindent
\textbf{(2)} An IBC light client, which is a Cosmos application layer module sitting inside each full node. It is responsible for maintaining the headers and verifying the transactions of the counter-party zone supplied by its Tendermint consensus engine.

Via the IBC light client, the receiver zone effectively already timestamps the sender zone's headers.
Since IBC is two-way, the two zones mutually timestamp each other and are ready to extract security from each other.
Implementing this security extraction only requires adding new APIs to the existing systems and running a standalone monitoring program. This monitor comprises a client (full node) of both the consumer and provider zones, and also a polling mechanism that:

\noindent
\textbf{(1)} periodically obtains the latest finalized consumer headers timestamped by the provider zone from its own provider zone client;

\noindent
\textbf{(2)} runs the sanitization function using these timestamped consumer zone headers with the consumer zone ledger obtained by its own consumer zone client. If they are conflicting, alerts with a proof of equivocation to slash the malicious consumer zone validators.

Thus, unless the provider zone is also compromised, the monitor always alerts if its consumer zone client is in a fork that is not timestamped by the provider zone. The consumer zone thus successfully extracts the security of the provider zone. 

Monitor can also be improved to only run light clients instead of full nodes. 
This requires an extra step in the polling, which is to verify the successful execution of the timestamped transactions in the provider zone\footnote{The current Tendermint implementation is a lazy consensus that follows the order-then-execute flow. This means a transaction's inclusion in the ledger does not mean successful execution and acceptance to the application state. }. This is a standard verification already supported by the Tendermint light client.
Akin to timestamping on multiple providers, one can extend the monitoring to any sequence of interconnected zones to extract their total economic security.

\section{Interchain Timestamping Security}

\label{sec:interchain-timestamping-security}
In this section, we state the security theorem for the interchain  timestamping protocol.
\begin{theorem}[Security of the Timestamping Protocol]
\label{thm:interchain-timestamping-security}
Given a set of adversarial validators, the timestamping protocol $\PI_I$ of Section~\ref{sec:interchain-timestamping-many} instantiated with $k+1$ blockchains $\PI_i$ with latency $T_i$, $i \in [k]$, under partial synchrony, satisfies
\begin{itemize}
    \item liveness with latency $\sum_{i=0}^k T_i$ (w.o.p.) for all PPT $\mathcal{A}$ iff \emph{all} of the constituent blockchains are live (w.o.p.) for all PPT $\mathcal{A}$.
    \item safety (w.o.p.) for all PPT $\mathcal{A}$ iff at least \emph{one} of the constituent blockchains is safe (w.o.p.) for all PPT $\mathcal{A}$.
    \item slashable safety (w.o.p.) such that there is a safety violation in $\PI_I$ iff safety is violated in \emph{all} of the constituent blockchains.
\end{itemize}
\end{theorem}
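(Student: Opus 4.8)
I would prove the three items in turn. The second and third are linked: the ``only if'' half of the safety claim is the contrapositive of an achievability statement, and once it is in hand the slashable-safety claim follows quickly. The engine common to the safety arguments is a \emph{monotonicity lemma} for the client fork-choice function (\textsc{OutputChain} and \textsc{OutputHeaderChain}): if the sequence of timestamps one client processes is an initial segment of the sequence another client processes, then the two ledgers they output are prefix-comparable, \emph{irrespective} of how the two clients' data-availability views differ and \emph{irrespective} of where each one's stalling rule fires. I would prove this by a case analysis on which client stalls first, viewed as a position in the (shared prefix of the) timestamp sequence: strictly before that position both clients perform the identical deterministic updates --- timestamp validity is a client-independent check, a valid timestamp pins down its referenced chain by collision-resistance, and $\clean(\cdot,\cdot)$ only ever appends --- so the earlier-stalling client's ledger is a snapshot of, hence a prefix of, the other's.

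\textbf{Liveness.} For ``$\Leftarrow$'' (all chains live $\Rightarrow$ $\PI_I$ live) I would run the timestamp-propagation argument: a transaction input to an honest $\PI_0$ validator is finalized on $\PI_0$ within $T_0$ by $\PI_0$-liveness; an honest party then posts its timestamp to $\PI_1$, which finalizes it within $T_1$; iterating up the tower, the induced timestamp is finalized on $\PI_k$ within $\sum_{i=0}^{k}T_i$ (once the network has stabilized). No client stalls en route: a valid level-$i$ timestamp carries $q_i$ signatures, and since $f$ is below $\PI_i$'s liveness resilience fewer than $q_i$ of its validators are adversarial, so some \emph{honest} $\PI_i$ validator signed it and had therefore already verified availability and finality of the referenced block and its prefix (honest validators of the constituent protocols check data availability before voting); that data reaches every client shortly after $\GST$, so \textsc{GetChain} does not return $\bot$ and the fork-choice reaches the new $\PI_0$ block. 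For ``$\Rightarrow$'' I argue the contrapositive: if $\PI_j$ is not live, an adversary within the same corruption budget censors the timestamp transactions that would carry new $\PI_{j-1}$ data into $\PI_j$ --- or censors the transaction itself if $j=0$ --- so no fresh $\PI_0$ block ever enters the $\PI_I$ ledger. Hence the liveness resilience of $\PI_I$ is $\min_i$ of those of the constituents.

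\textbf{Safety and slashable safety.} Suppose $\PI_j$ is safe for some $j\in[k]$. Then its finalized blocks form a single chain $\bar{\chain}_j$, so every valid $\PI_j$-timestamp points onto $\bar{\chain}_j$ and every chain $\chain_i$ processed at that level is a prefix of $\bar{\chain}_j$; since the sanitized concatenation of prefixes of a common chain is just the longest of them, and the stalling rule only truncates, the level-$j$ ledger held by \emph{any} client at \emph{any} round is a prefix of $\bar{\chain}_j$, so these ledgers are pairwise prefix-comparable (and per-client monotonicity over rounds then follows by having each client output the longest ledger its fork-choice has so far produced). Feeding this downward, any two clients' extracted $\PI_{j-1}$-timestamp sequences are initial segments of one another, so by the monotonicity lemma their level-$(j-1)$ ledgers are prefix-comparable; iterating $j$ more times shows all clients' $\PI_I$ ledgers (the level-$0$ ledgers), at all rounds, are prefix-comparable, i.e.\ $\PI_I$ is safe; thus the safety resilience of $\PI_I$ is $\max_i$ of those of its constituents. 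The converse is the contrapositive of an achievability claim: if every $\PI_i$ is unsafe, a coordinated safety attack on all $k+1$ chains drives their forks down to conflicting $\PI_I$ ledgers. For slashable safety: whenever $\PI_I$ suffers a safety violation, the ``only if'' direction just established forces a safety violation on \emph{every} $\PI_i$, and the conflicting $\PI_i$-chains are exactly the mutually inconsistent objects that the downward cascade exposed, so the clients feed them to $\PI_i$'s forensic protocol; since each $\PI_i$ has $f_{\mathrm{a},i}$-slashable-safety and the validator sets are disjoint, at least $\sum_i f_{\mathrm{a},i}$ adversarial validators are identified and no honest one is, giving $\PI_I$ economic security equal to the sum of the constituents', and the matching achievability direction shows no more is attainable.

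\textbf{Main obstacle.} The hard part is the ``$\Leftarrow$'' direction of safety when the unique safe chain $\PI_j$ is an \emph{intermediate} provider ($0<j<k$): the chains above $\PI_j$ may fork, so different clients --- and one client at different rounds --- truncate their views of those chains at different fork points and therefore genuinely see different \emph{orderings} and different \emph{subsets} of the $\PI_j$-timestamps, on top of the usual per-client data-availability discrepancies and the early-stalling behavior. All of this has to collapse to ``prefix-comparable outputs,'' which is precisely what the monotonicity lemma absorbs; getting its case analysis right --- in particular, disentangling ``which client stalls first'' from ``which timestamps each client can even observe'' --- is the crux of the proof.
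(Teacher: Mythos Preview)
Your proposal is correct and follows essentially the same route as the paper's proof: the paper first handles the two-chain case by splitting into a ``consumer safe'' sub-case (all finalized consumer blocks lie on one chain, so every output is a prefix of it --- your argument at level~$j$) and a ``provider safe'' sub-case (the two clients' timestamp sequences are prefix-related, and a case analysis on which client's stalling condition fires first shows the outputs are prefix-comparable --- exactly your monotonicity lemma), and then iterates down the tower just as you do. Your liveness direction is slightly more explicit than the paper's about \emph{why} no client stalls (the honest-signer-in-every-valid-timestamp observation), a point the paper leaves implicit; otherwise the arguments coincide.
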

The proof is provided in Appendix~\ref{sec:appendix-timestamping-security}.
Since the transactions input to $\PI_0$ are directly or indirectly ordered by all of the constituent blockchains $\PI_i$, $i \in [k]$, $\PI_I$ remains safe as long as one of $\PI_i$, $i \in [k]$, satisfies safety and provides a consistent ordering.
In contrast, since all of the blockchains participate in the ordering, $\PI_I$ loses liveness as soon as one blockchain $\PI_i$, $i \in [k]$, loses liveness.
Note that the lack of liveness on one blockchain $\PI_j$ implies the lack of liveness for $\PI_I$ but not the lack of liveness for the blockchains $\PI_i$, $i \in [k], i \neq j$.

We next state a corollary of Theorem~\ref{thm:interchain-timestamping-security} for the timestamping protocol instantiated with $k+1$ blockchains, each running Tendermint with $n_i=3f_i+1$ validators and a quorum of $q_i = 2f_i+1$, $i=0,\ldots,k$.
It will be used in Section~\ref{sec:evals} when we analyze mesh security for Cosmos zones.
The corollary utilizes the security properties of Tendermint:
\begin{proposition}[Tendermint security, from~\cite{tendermint,tendermint_thesis}]
\label{prop:tendermint}
Tendermint with $n=3f+1$ validators and a quorum of $q=2f+1$ satisfies $f$-safety, $f$-liveness and $f+1$-slashable safety.
\end{proposition}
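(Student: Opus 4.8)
The plan is to treat the three claims separately. The $f$-safety and $f$-liveness statements are the textbook guarantees of a quorum-based partially synchronous BFT protocol with $n=3f+1$ and $q=2f+1$; the real work is in the $(f+1)$-slashable-safety claim, which is the forensic strengthening of safety and makes no assumption on the number of corruptions. Throughout I would lean on the single combinatorial fact that any two quorums intersect in at least $2q-n=f+1$ validators, and hence in at least one honest validator whenever $f\le f$. I also use the paper's convention that a \emph{finalized} block is a valid block all of whose ancestors (including itself) carry a quorum of $2f+1$ precommits.

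\textbf{Safety.} Recall that within a height Tendermint proceeds in rounds, each with a propose/prevote/precommit phase; a block is finalized in round $r$ when a quorum of $2f+1$ round-$r$ precommits is gathered, an honest validator that precommits a block \emph{locks} on it, and a locked validator refuses to prevote a conflicting block in a later round unless it sees a $2f+1$-prevote certificate for that other block in a later round. The core is the lock-consistency lemma: if block $B$ is finalized in round $r$ at some height, then for every round $r'\ge r$ any block collecting $2f+1$ prevotes at that height equals $B$. The base case $r'=r$ is quorum intersection with the precommit quorum of $B$; the inductive step uses the hypothesis to show every honest validator reaching round $r'$ is still locked on $B$ and will not prevote a conflicting block, so no conflicting block can reach $2f+1$ prevotes. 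Hence no block conflicting with $B$ is ever finalized at the same height, and since a finalized block drags its whole prefix chain with it, no two conflicting chains are ever finalized — this is $f$-safety. Tightness follows from the split-world attack: with $f+1$ corruptions there are $2f$ honest validators; partition them into two groups of $f$, let the $f+1$ adversarial validators equivocate, and each group together with the adversarial set forms a $2f+1$ quorum finalizing conflicting blocks.

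\textbf{Liveness.} After $\GST$ the adversary must deliver honest messages within $\Delta$; Tendermint's per-round timeouts grow with the round number, so there is a round past $\GST$ from which on every round lasts long enough for all honest messages to propagate. Since leaders rotate round-robin and at most $f$ validators are adversarial, within $f+1$ such rounds some round has an honest leader, who proposes a valid block extending the highest locked value it knows; every one of the $\ge 2f+1$ honest validators then prevotes and precommits it (their locks are all consistent with the proposal), so the $2f+1$ precommits finalize a new block, and iterating finalizes every pending transaction within a number of rounds that is $\poly(\lambda)$. This gives $f$-$\Tconfirm$-liveness with $\Tconfirm=\poly(\lambda)$; it is tight because $f+1$ adversarial validators can simply withhold their votes, leaving only $2f<2f+1$ honest, so no quorum ever forms.

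\textbf{Slashable safety and the main obstacle.} Now drop all assumptions on corruptions and suppose safety is violated, i.e.\ two conflicting chains are finalized. Descending to the lowest height at which they diverge yields two precommit quorum certificates $Q,Q'$ for conflicting blocks $B,B'$ at that height, say in rounds $r\le r'$. If $r=r'$, every validator in $Q\cap Q'$ — a set of size $\ge f+1$ — signed precommits for two conflicting blocks in the same round, an equivocation evident from the two certificates. If $r<r'$, run the lock-consistency induction forward from round $r$: there is a \emph{least} round $r^\ast\in(r,r']$ carrying a $2f+1$-prevote certificate $P$ for a block conflicting with $B$, and by minimality no round in $(r,r^\ast)$ carries a prevote certificate for any block conflicting with $B$, so no validator could have \emph{legally} unlocked from $B$ before round $r^\ast$; hence every validator in $Q\cap P$ — again at least $f+1$ of them — precommitted (thus locked on) $B$ in round $r$ yet prevoted a conflicting block in round $r^\ast$ with no valid unlock certificate, a contradiction between their signed round-$r$ precommit and signed round-$r^\ast$ prevote that any client can check from the broadcast transcripts. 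In every case at least $f+1$ validators are irrefutably identified, and since each identified validator broke a deterministic rule that honest validators always obey, no honest validator is ever accused; this is exactly $(f+1)$-slashable-safety, matching the forensic-support analysis of~\cite{forensics}. The delicate step — and the one I expect to be the main obstacle — is precisely this round case-analysis: defining the first diverging round $r^\ast$ just so, so that the messages of the $f+1$ intersecting validators form a proof that is simultaneously verifiable by any client and unsatisfiable by a rule-following validator, which is what pushes the resilience to $f+1$ rather than merely $f$.
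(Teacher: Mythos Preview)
The paper does not prove this proposition at all: it is stated as a known result imported from~\cite{tendermint,tendermint_thesis} (and implicitly~\cite{forensics} for the slashable-safety part) and used only to derive Corollary~\ref{cor:interchain-timestamping-security}. So there is no ``paper's own proof'' to compare against.

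That said, your sketch is correct and is exactly the standard argument one finds in those references. The $f$-safety and $f$-liveness parts are the textbook quorum-intersection and post-$\GST$ round-robin arguments, and your tightness examples are the right ones. For $(f+1)$-slashable-safety your case split on $r=r'$ (double-precommit equivocation) versus $r<r'$ (locate the minimal round $r^\ast$ with a conflicting prevote certificate and intersect it with the round-$r$ precommit quorum to exhibit $\ge f+1$ lock-rule violators) is precisely the forensic analysis of~\cite{forensics}. The point you flag as delicate---that the two signed messages from each of the $f+1$ intersecting validators constitute a self-contained, client-verifiable contradiction that no honest validator can produce---is indeed where the work lies, and your treatment of it is at the right level for a sketch; a fully formal proof would need to spell out that the protocol either requires a validator to attach its unlock justification to any prevote against its lock, or that the forensic protocol has access to enough of the transcript to rule out a legitimate unlock, which is how~\cite{forensics} closes the argument.
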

\begin{corollary}
\label{cor:interchain-timestamping-security}
The timestamping protocol $\PI_I$ of Section~\ref{sec:interchain-timestamping-many} instantiated with $k+1$ blockchains $\PI_i$, $i \in [k]$, each running Tendermint with $n_i = 3f_i+1$ distinct validators and a quorum of $q_i = 2f_i+1$ respectively, under partial synchrony, satisfies
\begin{itemize}
    \item liveness (w.o.p.) for all PPT $\mathcal{A}$ iff for \emph{all} blockchains $\PI_i$, $i \in [k]$, the number of adversarial $\PI_i$ validators is $f_i$ or less.
    \item safety (w.o.p.) for all PPT $\mathcal{A}$ iff for at least \emph{one} blockchain $\PI_i$, $i \in [k]$, the number of adversarial $\PI_i$ validators is $f_i$ or less.
    \item slashable safety (w.o.p.) such that if there is a safety violation, for all $i \in [k]$, $f_i+1$ adversarial validators are identified from the validator set of $\PI_i$, and no honest validator is identified (w.o.p.).
\end{itemize}
\end{corollary}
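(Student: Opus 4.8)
The plan is to derive the corollary directly from Theorem~\ref{thm:interchain-timestamping-security} by plugging in the Tendermint resiliences given by Proposition~\ref{prop:tendermint}. Since each $\PI_i$ runs Tendermint with $n_i=3f_i+1$ validators and quorum $q_i=2f_i+1$, Proposition~\ref{prop:tendermint} states that $\PI_i$ provides $f_i$-safety, $f_i$-liveness and $(f_i+1)$-slashable safety. Unfolding the definitions of these notions: $\PI_i$ is live iff at most $f_i$ of its validators are adversarial, $\PI_i$ is safe iff at most $f_i$ of its validators are adversarial, and whenever $\PI_i$ has a safety violation its forensic protocol identifies $f_i+1$ adversarial validators while implicating no honest validator (w.o.p.).

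For the liveness clause, Theorem~\ref{thm:interchain-timestamping-security} asserts that $\PI_I$ is live (with latency $\sum_{i=0}^k T_i$) iff every constituent chain is live; substituting the Tendermint liveness characterization for each $\PI_i$ yields that $\PI_I$ is live iff for all $i\in[k]$ the number of adversarial $\PI_i$ validators is at most $f_i$. For the safety clause, Theorem~\ref{thm:interchain-timestamping-security} asserts that $\PI_I$ is safe iff at least one constituent chain is safe; substituting gives that $\PI_I$ is safe iff for at least one $i\in[k]$ the number of adversarial $\PI_i$ validators is at most $f_i$.

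For the slashable-safety clause, Theorem~\ref{thm:interchain-timestamping-security} states that there is a safety violation in $\PI_I$ iff safety is violated in all of the constituent chains. Assuming $\PI_I$ has a safety violation, each $\PI_i$ then has a safety violation, so applying the $(f_i+1)$-slashable safety of $\PI_i$ (Proposition~\ref{prop:tendermint} together with Definition~\ref{def:accountable-safety}) shows that, for each $i$, running the forensic protocol on the transcripts of $\PI_i$ identifies $f_i+1$ adversarial validators of $\PI_i$ and, except with negligible probability, no honest validator of $\PI_i$. A union bound over the $k+1$ per-chain negligible failure events shows that this holds simultaneously for all $i\in[k]$ w.o.p., which is the claimed statement.

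The only point that requires care is the logical bookkeeping of the ``iff'' directions: one must verify that the equivalences of Theorem~\ref{thm:interchain-timestamping-security} compose correctly with the two-directional ``iff'' characterizations of Tendermint safety and liveness (not merely the one-directional resilience bounds), and that the slashable-safety clause lines up, in particular that ``safety is violated in $\PI_i$'' is exactly the hypothesis under which Proposition~\ref{prop:tendermint} guarantees the $f_i+1$ identifications. Beyond this, the only quantitative step is the union bound over the $k+1$ chains turning per-chain w.o.p.\ guarantees into a joint w.o.p.\ guarantee; since $k$ is constant (or at most $\poly(\lambda)$), negligibility is preserved, and no new probabilistic or combinatorial argument is needed beyond what Theorem~\ref{thm:interchain-timestamping-security} already supplies.
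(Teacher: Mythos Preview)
Your proposal is correct and takes essentially the same approach as the paper: the corollary is presented there as an immediate consequence of Theorem~\ref{thm:interchain-timestamping-security} combined with the Tendermint resiliences of Proposition~\ref{prop:tendermint}, and the paper does not even spell out a separate proof. Your write-up simply unfolds this composition explicitly (with the added nicety of the union bound over the $k+1$ chains), which is exactly what is intended.
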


\subsection{Slashable Safety-Liveness/Latency Tradeoff}
\label{sec:safety-liveness-tradeoff}

Theorem~\ref{thm:interchain-timestamping-security} implies a trade-off between slashable safety, \ie economic security, and liveness resilience, \ie censorship resistance.
Using more provider chains to improve the economic security of the timestamped ($\PI_I$) ledger increases both the latency and the risk of losing liveness.
However, the timestamping protocol gives clients of each blockchain the ability to independently determine their trade-off. 
For illustration, consider $k+1$ blockchains $\PI_i$, $i=0,\ldots,k$, which are fully connected via IBC links.
Then, clients of $\PI_0$ can instantiate the timestamping protocol with any subset of the $k$ chains as their providers.
This decision affects only the clients' interpretation of the fork-choice rule as specified by Algs.~\ref{alg.timestamping} and~\ref{alg.multitimestamping}.
Hence, it does not require any change in the protocol code run by the validators of the constituent chains.

Flexibility of the timestamping protocol is not restricted to unanimous decisions by clients.
Different clients can use different telescopic sets of provider chains without sacrificing interoperability.
For instance, a conservative client $\client_1$ favoring safety can designate both $\PI_1$ and $\PI_2$ as its provider chains, whereas $\client_2$ favoring liveness might only use $\PI_1$.
While $\PI_0$ remains safe, \ie, under normal operation, $\client_1$'s timestamped ledger remains a prefix of $\client_2$'s ledger so that the clients agree on a single transaction history.

Flexibility of timestamping also allows conservative clients to resolve liveness violations by modifying the trade-off.
For instance, when $\client_1$'s timestamped ledger stops growing due to $\PI_2$ stalling, it can independently decide to stall or continue running the protocol with only $\PI_1$ after a timeout period.
In the latter case, it avoids stalling; however, the economic security of the new blocks will be proportional to the sum of the remaining provider chains.

\section{Mesh Security for Cosmos Zones}

\label{sec:evals}

As interchain timestamping is automatically provided by IBC, a natural application of our protocol is to provide mesh security to the Cosmos ecosystem. 
Here, we give an empirical evaluation of interchain timestamping for Cosmos zones.

\subsection{Goals of Evaluation}

The evaluation aims to answer the following questions:

\noindent
\textbf{Economic security upper bound:} Given the current market cap distribution and IBC channels of Cosmos zones, what is the economic security (\ie, slashable safety resilience) upper bound for Cosmos zones? (\ref{subsec:summary-zones-eval})

\noindent
\textbf{Security:} Given a consumer Cosmos zone and an upper bound on the number of provider zones, what is the best achievable economic security and the corresponding censorship resistance (\ie, liveness resilience) of the zone? (\ref{subsec:security-eval})

\noindent
\textbf{Latency:} What is the confirmation latency of the interchain timestamping protocol? (\ref{subsec:latency-eval})

\noindent
\textbf{Interchain timestamping v.s. cross-staking:} How does interchain timestamping compare with cross-staking~\cite{cross-chain-validation-blog} in terms of security? (\ref{subsec:comparison-cross-staking})

\textbf{Economic Security Upper Bound.}
Since the Cosmos zones form a mesh and each zone runs Tendermint with a quorum size equal to $2/3$ of the number of its validators, the economic security upper bound for each zone is $1/3$ of the total funds staked in all of the zones, and is achieved by the timestamping protocol instantiated with all of the zones (\cf Corollaries~\ref{cor:interchain-timestamping-security} and~\ref{cor:interchain-timestamping-best-accountability}).
As the total amount staked on each zone is proportional to its total market cap, the upper bound on economic security is proportional to the total market cap of all zones.
Given a consumer zone, a timestamping protocol using all other zones as providers can be constructed when there exists a directed IBC path that starts at the consumer zone and touches all other zones.
In the presence of a Hamiltonian path, all zones can act as consumer zones and extract the maximum economic security.
When zones have good connectivity, \eg, when the mesh of zones is fully connected, then every zone can achieve this upper bound.

\textbf{Security.}
The economic security of a Cosmos zone is proportional to the sum of the market caps of the zones on the path with the maximum total market cap among all paths going through this zone.
However, since censorship resistance is determined by the zone with the lowest market cap in a path, a long path is likely to reduce resilience to censorship.
In addition, if a path involves more zones, then the latency for the consumer zone increases, eventually becoming impractical for long paths.
Therefore, we quantify economic security for bounded path lengths $k$, and analyze the censorship resistance of the path that gives the best economic security among paths of bounded length.
For example, when $k=0$, both the economic security and censorship resistance of a zone depends only on its own market cap.
When $k=1$, the economic security of a zone is proportional to the sum of the market caps, while its censorship resistance is characterized by the market cap of the zone with the smaller market cap.

\textbf{Latency.}
The latency for the header of a consumer zone to be finalized in a provider zone consists of two parts: 
1) the latency for the header to be sent to the provider zone, and
2) the latency for finalizing a transaction in the provider zone.
The former depends on the frequency of updates for the IBC light clients, and the latter depends on the congestion level of the provider zone.
In turn, the frequency of updates in a connection depends on the IBC relayers.
The Cosmos community has noted that client updates are mostly triggered by pending IBC packets~\cite{client-update-frequency}, while the majority of IBC packets are IBC token transfers~\cite{ibc-token-transfer}.
Thus, we use the frequency of IBC token transfers as a proxy for the frequency of client updates.

The latency evaluation is two fold.
First, we evaluate the latency while assuming that relayers follow their current behaviours in our collected data.
Second, in order to show the trade-off between latency and the cost of interchain timestamping, we evaluate the latency and its corresponding cost w.r.t. the frequency of relaying headers.
The cost is quantified by the gas fee of transactions carrying \texttt{MsgUpdateClient}~\cite{msgupdateclient}, the message type that carries a header in the IBC protocol.

\textbf{Interchain Timestamping v.s. Cross-Staking.}
Another approach for implementing mesh security is cross-staking~\cite{cross-chain-validation-blog}.
In cross-staking, validators of a provider zone replicate their staked tokens to a consumer zone, such that these validators can participate in the consensus of the consumer zone.
A consumer zone $\Pi_i$ can set a parameter $q_j \in [0, 1]$ to limit the ratio between the value of stake from another zone $\Pi_j$ with an IBC channel and the total value of stake.
A blog post~\cite{cross-chain-validation-blog} from Informal Systems provides a model for analyzing the security of Cosmos zones with cross-staking.
In summary, given a consumer zone $\Pi_i$ with $X_i$ market cap from its native validators, it can borrow $X_{i,j}$ market cap from validators of other provider zones $\{\Pi_j\}_{j \in [k] \setminus i}$ with a direct IBC channel such that 
$\frac{X_{i,j}}{X_i+X_{i,j}} \leq q_j$, \ie, $X_{i,j} \leq \frac{q_j}{1-q_j}X_i$.
The economic security and censorship resistance of $\Pi_i$ are then at most $\frac{X_i + \sum_{j \in [k] \setminus i} X_{i,j}}{3}$.
A large $q_j$ allows $\Pi_i$ to borrow more security from $\Pi_j$, but weakens the sovereignty in the sense that validators from other zones have more voting power on this zone's consensus.

\subsection{Data Collection and Experimental Setting}

Answering these questions requires data of Cosmos zones, including (i) information of zones (e.g., chain ID), (ii) market caps of zones, and (iii) information of all IBC channels, (e.g., IBC transfers).

We obtained such data from \emph{Map of Zones}~\cite{map-of-zones}, an explorer for all Cosmos zones and their IBC channels.
Specifically, we retrieved the data by using the GraphQL~\cite{graphql} APIs provided by Map of Zones on Jan.\ 13, 2023. 
It is refreshed by Map of Zones every 24 hours.

After retrieving the data, we parsed it to a graph of all zones, where each vertex denotes a zone and each edge denotes the IBC channel between two zones (\cf Figure~\ref{fig:summary-zones}).
Each vertex carries the market cap of the corresponding zone as a property.
Each edge carries the frequency of IBC transfers in the corresponding IBC channel as a property.
Such a graph allows us to derive the security and latency of zones by using the above methodology, as well as the security of cross-staking by using the model in~\cite{cross-chain-validation-blog}.
We use Python's \texttt{NetworkX}~\cite{networkx} library for processing the data to such a graph.
The collected data and source code for processing it is available at~\cite{source-code}.

Increasing the frequency of client updates improves the latency but requires more transaction fees.
To quantify the trade-off between latency and monetary cost, we deploy a private Cosmos zone and connect it to multiple public Cosmos zones (Akash, Injective, Juno, Osmosis, Secret, Sei) via IBC relayer~\cite{ibc-relayer}, so that headers of these zones can be checkpointed to the private zone and the gas cost can be measured.
Note that the cost of checkpointing the header of different zones is the same because they all use the Tendermint-specific block header format.

\subsection{Summary of Cosmos Zones}
\label{subsec:summary-zones-eval}

The results show that there are 43 Cosmos zones with IBC activity and native tokens on their main net.
Their total market cap is about 9.1 billion USD, leading to the economic security upper bound of 3 billion USD when the zones are fully connected.

Figure~\ref{fig:summary-zones} depicts the graph of zones for different $k \in \{0, 1, 2, 3\}$.
In these figures, the vertex color indicates economic security, \ie, $1/3$ of the total market cap of the zone itself plus $k$ other zones as described above.
The color map is in logscale: if the color code is $8$, then the economic security is $10^8$ USD.
The edge thickness indicates the frequency of IBC transfers (thus light client updates), also in logscale.
The edge transparency indicates whether the edge is used for providing security for a zone.
If an edge is less transparent, then it is used by some zones for providing security.
Inspecting the color of the vertices, we observe that, with larger $k$, vertices become greener, meaning that zones obtain more security.
When $k = 3$, most zones obtain an economic security of $\geq 10^8$ USD except for some zones without any IBC channel.

From the transparency of edges, we find that zones choose a small subset of zones and channels to extract economic security.
These zones either have a high market cap, or connect to many other zones.
This is because a zone with higher market cap is more likely to provide better security, and a zone with better connectivity is more likely to connect to other zones with high market cap.

\subsection{Security Evaluation}
\label{subsec:security-eval}

\begin{figure*}[t]
    \centering
    \begin{subfigure}{.49\textwidth}
        \centering
        \includegraphics[width=\textwidth]{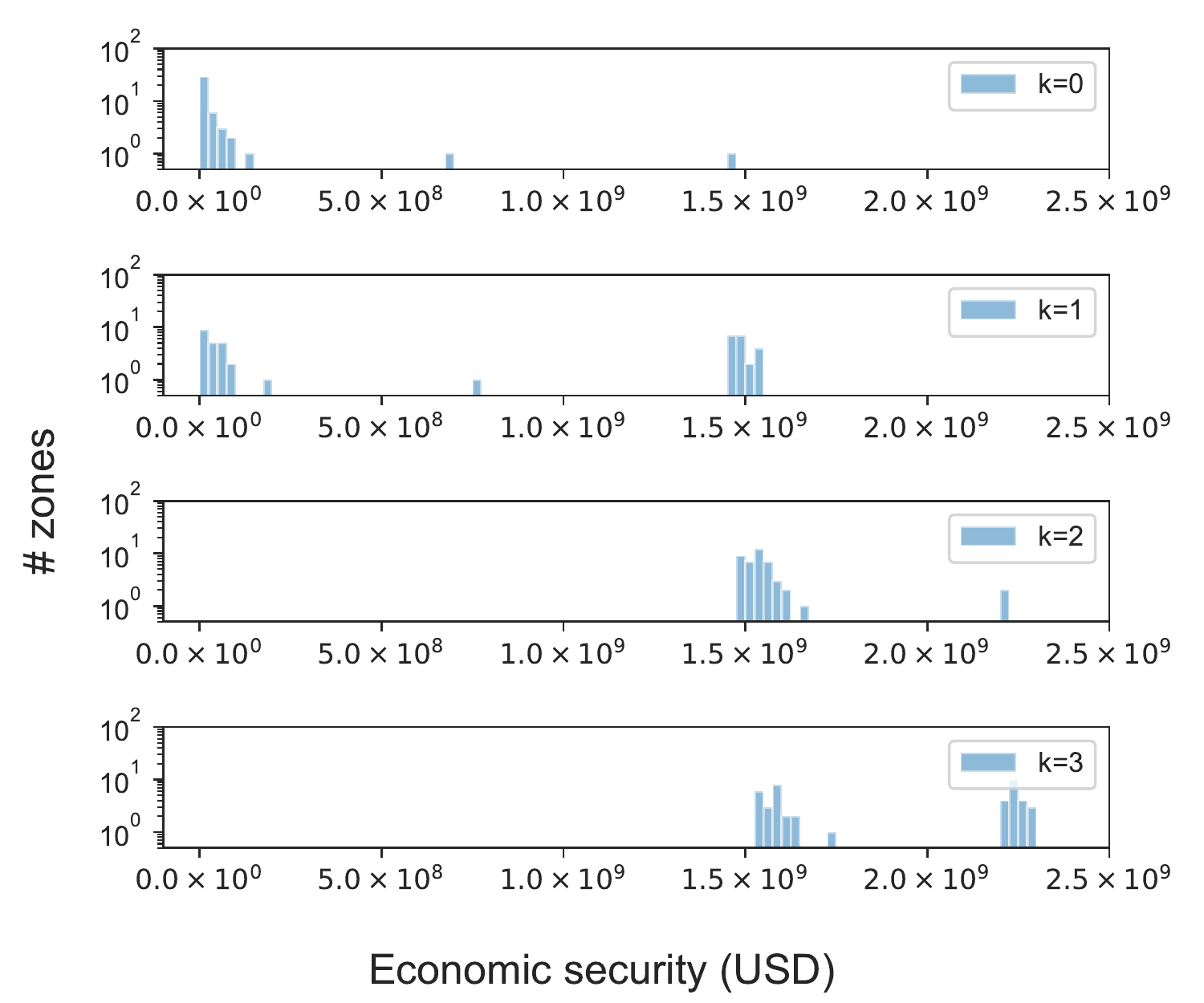}
        \label{fig:interchain-timestamping-economic-security}
    \end{subfigure}
    \begin{subfigure}{.49\textwidth}
        \centering
        \includegraphics[width=\textwidth]{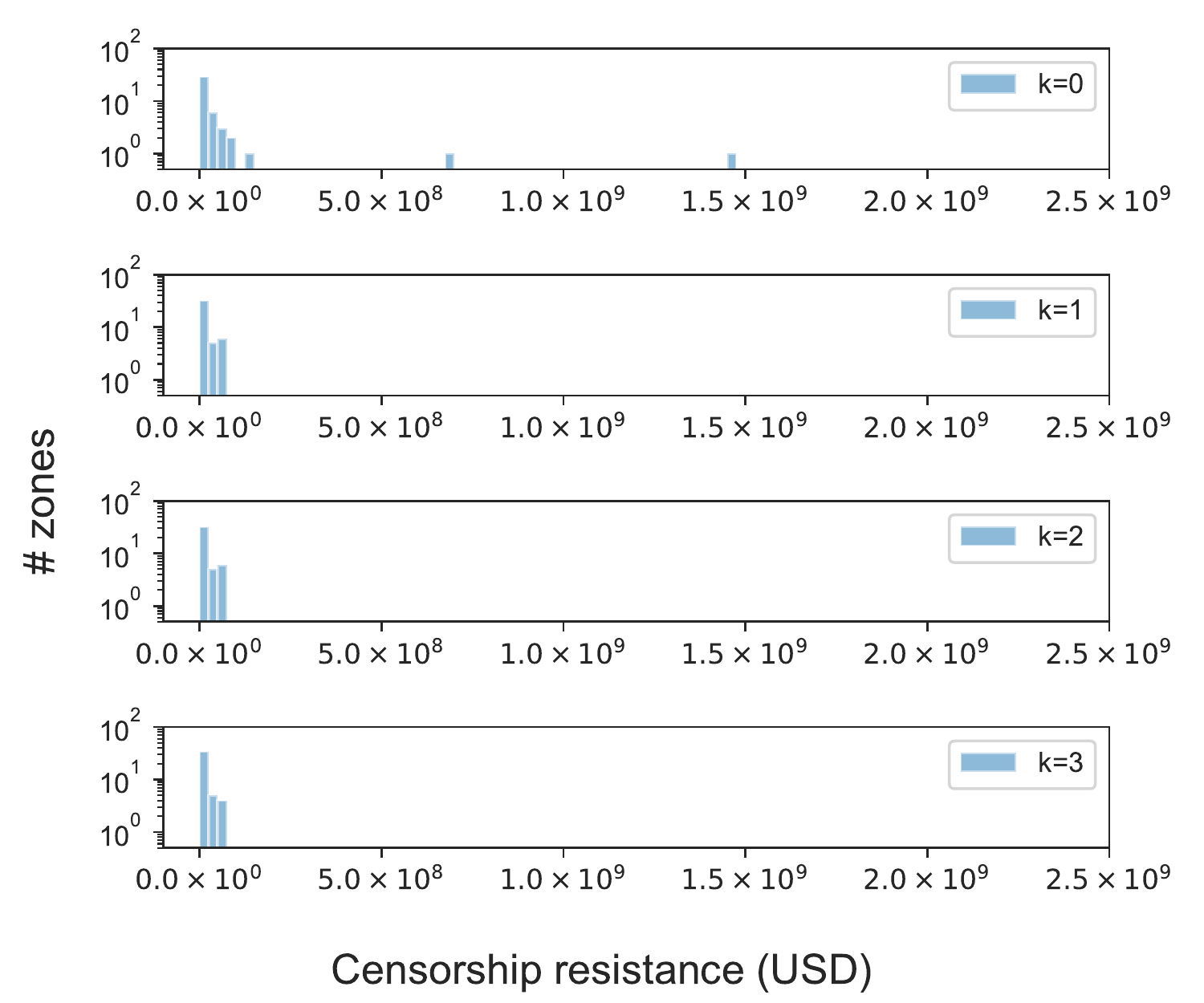}
        \label{fig:interchain-timestamping-censorship-resistance}
    \end{subfigure}
    \caption{Best achievable economic security of zones with various values for $k$, and the resulting censorship resistance.}
    \label{fig:security-hist}
\end{figure*}

Figure~\ref{fig:security-hist} provides the histogram of the best achievable economic security with different $k$ for Cosmos zones, as well as the censorship resistance resulting from paths that give them the best achievable economic security.
With larger $k$, zones achieve higher economic security except those without any IBC channels, similar to our observation in Figure~\ref{fig:summary-zones}.
In addition, we observe two gaps in the economic security, one is before $1.5$ billion USD and the other is between $1.5$ billion USD and $2.5$ billion USD.
The gap is because when $k$ becomes $2$ from $0$, zones with less than $0.2$ billion USD obtain more security from a specific zone with $0.7$ billion USD.
When $k$ becomes $3$ from $2$, some zones with the economic security of about $1.5$ billion USD obtain more security from that zone with $0.7$ billion USD as well.

Censorship resistance either remains stable or decreases with larger $k$ as it is determined by the zone with the lowest market cap in a path.
When $k$ increases by $1$, if the newly chosen zone has a higher market cap than any existing zone in the path, then the economic security remains stable; if it has a lower market cap than any existing zone, then the economic security decreases.
For example, when $k$ becomes $1$ from $0$, the censorship resistance of Cosmos Hub decreases to less than $0.2$ billion USD from $1.5$ billion USD.

\subsection{Latency and Cost Evaluation}
\label{subsec:latency-eval}

\begin{figure}[t]
    \centering
    \includegraphics[width=.85\linewidth]{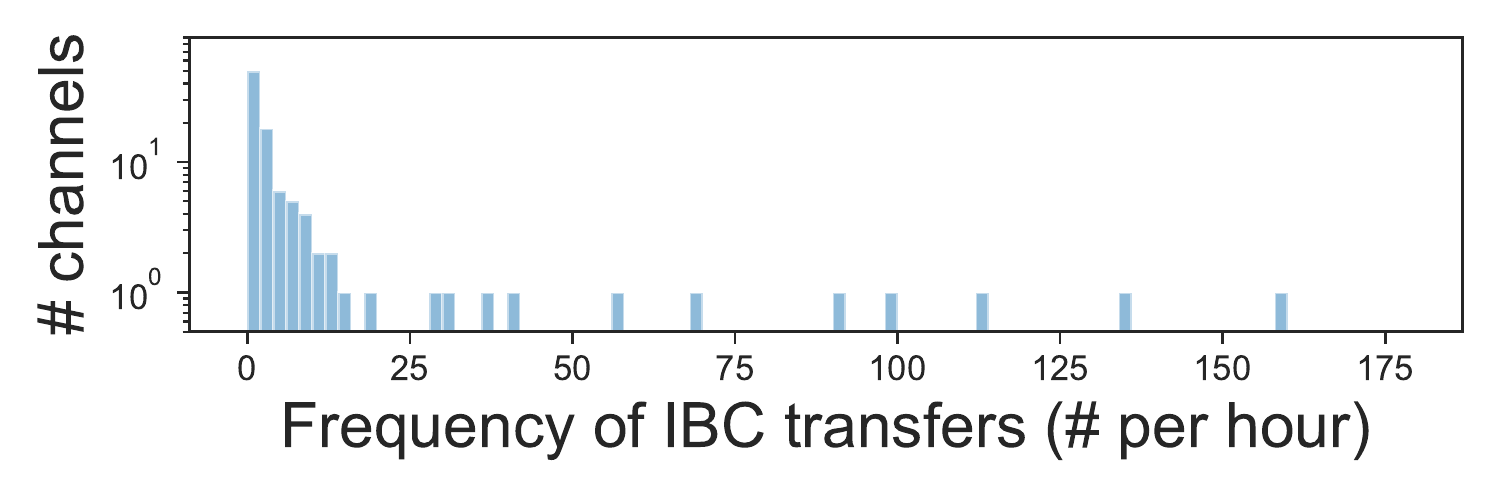}
    \caption{Frequency of IBC transfers for IBC channels.}
    \label{fig:sibc-transfer-frequency-hist}
\end{figure}

Figure~\ref{fig:sibc-transfer-frequency-hist} shows the frequency of IBC transfers for all IBC channels.
A small number of IBC channels are active, with up to 155 IBC transfers per hour, leading to at least $24$ seconds for relaying a header.
Meanwhile, most IBC channels have less than 25 IBC transfers (thus client updates) per hour, leading to at least $144$ seconds for relaying a header.
This is much longer compared to a zone's transaction confirmation latency, which is about no more than $6$ seconds in Cosmos Hub according to Mintscan~\cite{mintscan-cosmos-blocks}.
Therefore, when piggybacking the interchain timestamping to existing relayers without changing their behaviors, the latency for the provider zone to confirm a header of the consumer zone is $30$ seconds in the best case, and is at least 150 seconds when client updates are not often.

Zones can lower the latency by increasing the frequency of client updates.
This requires IBC relayers to pay more gas fees, leading to a trade-off between the latency and gas cost.
We evaluate the gas cost of a client update by running the relayer for about 12 hours, during which the relayer relayed 625 headers in total.
The results show that \emph{on average, each client update costs $217052.8 \pm 53472.1$ gases}, which are worth $0.01$ USD in Cosmos Hub according to the gas fee~\cite{mintscan-example-tx-client-update} and the price of ATOM~\cite{coinmarketcap} at the date of the experiment.

Therefore, with $0.01$ USD per header and the existing $6$-second transaction confirmation latency, relaying headers per $6$ seconds costs $144$ USD per day and $52560$ USD per year.
Assuming that the relayer's latency is also $x$ seconds per header, the worst-case latency of obtaining a timestamp for a header is $x+6$ seconds.
If the relayer's latency is $60$ seconds, then the worst-case latency is $66$ seconds, and the cost will be $5256$ USD per year.

We also observe that the standard deviation of the gas cost is large compared to the average value.
This is because verifying a header includes a quorum intersection, whose overhead largely depends on the size of the intersected signature set.

\subsection{Comparison to Cross-Staking}
\label{subsec:comparison-cross-staking}

Based on the model in~\cite{cross-chain-validation-blog}, we quantify the economic security upper bound for Cosmos zones with cross-staking in Figure~\ref{fig:summary-zones}, similar to the evaluation for Cosmos zones with interchain timestamping.
For every consumer zone, we set $q_j = p$ for every provider zone with an IBC channel, where we test $p = \{0\%, 10\%, 50\%, 100\%\}$.
Compared to interchain timestamping, vertices are less green in cross-staking, \ie, cross-staking provides a lower economic security upper bound than interchain timestamping, even with $p = 100\%$.
This is because cross-staking allows a zone to get security from only the zones with a direct IBC channel, bounding the total security obtainable from other zones.
Meanwhile, interchain timestamping allows a zone to get security from any zone reachable via a path of IBC channels.
To achieve better security in cross-staking, a zone needs to establish direct IBC channels with more zones.

In terms of latency, cross-staking allows validators to directly participate in the consensus of any zone with an IBC channel, so does not suffer from the latency of relaying headers or finalizing blocks as in interchain timestamping.

\subsection{Beyond Cosmos}
\label{subsec:beyond-cosmos}

Our interchain timestamping protocol can support light-client based bridges in other blockchain ecosystems besides Cosmos.
There have been efforts for bringing IBC functionalities to certain blockchains such as Ethereum (\eg, Polymer~\cite{polymer} and Electron~\cite{electron}), Near (\eg, Electron~\cite{electron}), Polkadot (\eg, Composable~\cite{composable}), and for building hubs supporting cross-chain communication (\eg, Axelar~\cite{axelar}).
The interchain timestamping protocol can be integrated with them to expand its adoption beyond Cosmos.
However, it cannot be directly applied to multi-signature bridges.
Thus, supporting those bridges remains as future work.

\section{Optimality of Interchain Timestamping}
\label{sec:interchain-protocols}

Among the class of interchain consensus protocols, is interchain timestamping optimal? In what sense is it optimal? We explore these questions in this section.

\subsection{Interchain Consensus Protocols}
\label{sec:interchain-protocols-definition}

An interchain consensus protocol (interchain protocol for short) $\PI_I$ is a SMR protocol executed using \emph{existing blockchains} with disjoint validator sets.
Its participants are the clients and validators of the constituent blockchains $\PI_i$, $i \in [k]$.
Clients and honest validators of each blockchain act as clients towards all other blockchains: for each $i$, validators of $\PI_i$ can read the output ledgers of $\PI_j$, $j \neq i$, and use the observed ledgers to determine the transactions to be input to $\PI_i$.
This communication among blockchains is captured by the \emph{cross-chain communication (CCC)} abstraction~\cite{ccc,trustboost} (\eg IBC, Section~\ref{sec:ibc}):
Each chain exposes its output ledger, whose \emph{finality} is verified by the clients and validators of the other chains; however, its internal mechanisms, \eg, validator set, is hidden, except as used by CCC to verify finality.
This encapsulation of the blockchains imposes limits on the properties of the interchain protocols.

\subsection{Quorum and Fail-Prone Systems}
\label{sec:quorum-failprone-definitions}
Before we analyze the interchain protocols, we introduce the notation needed to express their security properties.

\noindent
\textbf{Background.}
Consensus security is typically quantified by the maximum number $f$ of adversarial validators, for which the protocol remains secure.
For instance, Tendermint~\cite{tendermint} is secure for $3f+1 \leq n$.
To capture less uniform assumptions on the validators, we use the \emph{quorum and fail-prone systems} as defined by Malkhi and Reiter~\cite{malkhi98}.
Let $\UU$ denote the set of validators. 
A quorum system $\Q$ is a non-empty set of subsets of $\UU$, where each set $Q \in \Q$ is called a \emph{quorum}, and $\forall Q_1,Q_2 \in \Q \colon Q_1 \not\subset Q_2$.
Quorums represent smallest collections of validators that can drive consensus on behalf of the whole validator set.

A fail-prone system $\B$ is a non-empty set of subsets of $\UU$ such that $\forall B_1,B_2 \in \B \colon B_1 \not\subset B_2$.
Each set $B \in \B$ represents a potential set of adversarial validators under which the protocol remains secure.
\begin{definition}[Closure]
\label{def:closure}
Given a validator set $\UU$, we define the closure operation $\close(\Q)$ on a quorum system $\Q$ as: $\close(\Q) = \{S \subseteq \UU \colon \exists Q \in \Q, S \supseteq Q\}$.
We define the closure operation $\close(\B)$ on a fail-prone system $\B$ as: $\close(\B) = \{S \subseteq \UU \colon \exists B \in \B, S \subseteq B\}$.
\end{definition}
Intuitively, $\Q$ represents the set of \emph{smallest} quorums needed for liveness (\eg, sets of $2f+1$ validators), whereas $\close(\Q)$ gives the set of all possible quorums (\eg, sets with $2f+1$ or more validators).
Similarly, $\B$ represents the set of \emph{largest} adversarial validator sets (\eg, sets of $f$ validators), whereas $\close(\B)$ gives the set of all tolerable adversarial validator sets (\eg, sets with $f$ or less validators).

\noindent
\textbf{Fail-prone Systems for Liveness, Safety and Slashable Safety.}
To express the liveness, safety and slashable safety guarantees of interchain protocols, we extend the original definition for quorum and fail-prone systems (for a similar extension, \cf \cite{CachinT19}).
Let $F$ denote the set of adversarial validators.
A protocol $\PI$ is said to be $\B_s$-safe if $\PI$ is safe (w.o.p.) for all PPT $\mathcal{A}$ iff $\exists B_s \in \B_s \colon F \subseteq B_s$.
Namely, $\B_s$ is the set of validator sets such that the protocol is safe (w.o.p) for all PPT $\mathcal{A}$ iff the adversarial validators are covered by a set in $\B_s$.
A protocol $\PI$ is said to be $\B_a$-slashably-safe if whenever there is a safety violation, the validators in a set $B_a \in \B_a$, $B_a \subseteq F$, are identified by the forensic protocol as protocol violators, and no honest validator is identified (w.o.p.).

Since quorums have protocol-specific definitions, we redefine the quorum system $\Q$ as a security parameter applicable to all protocols.
A protocol $\PI$ is said to be $\Q$-live if $\PI$ is live (w.o.p.) for all PPT $\mathcal{A}$ iff $\exists Q \in \Q \colon F \cap Q = \emptyset$.
Intuitively, $Q$ represents sets of validators such that if the validators in $Q$ are honest, then the protocol is live (w.o.p.) for all PPT $\mathcal{A}$.
As an example, Tendermint with $n=4f+1$ validators and a quorum of $3f+1$ requires at least $3f+1$ honest validators for liveness, 
whereas it remains safe up to $2f$ adversarial validators.
Hence, it is $\Q$-live for $\Q$ containing every subset of $\UU$ with $3f+1$ validators, 
and $\B_s$-safe for $\B_s$ containing every subset with $2f$ validators.

We next define the class of \emph{pareto-optimal}\footnote{There are no games or rational actors in our model. We adopt the nomenclature of pareto-optimality to emphasize that a protocol dominating a pareto-optimal one in terms of one security property, \eg, safety, must necessarily have worse security in terms of the other properties, \eg, liveness.} interchain protocols that cannot be dominated in all dimensions of security by any other interchain protocol.
\begin{definition}[Dominating Points]
\label{def:dominating-points}
A tuple of quorum and fail-prone systems $(\Q,\B_s,\B_a)$ \emph{dominates} another tuple $(\Q',\B'_s,\B'_a)$ if $\close(Q) \supseteq \close(Q')$, $\close(\B_s) \supseteq \close(\B'_s)$, $\close(\B_a) \supseteq \close(\B'_a)$, and at least one of $\close(\Q)$, $\close(\B_s)$ or $\close(\B_a)$ is a strict superset of $\close(\Q')$, $\close(\B'_s)$ or $\close(\B'_a)$ respectively.
\end{definition}
\begin{definition}[Pareto-Optimal Interchain and SMR Protocols]
\label{def:protocol-optimality}
An interchain (SMR) protocol with the validator set $\UU$ and the quorum and fail-prone systems $Q$, $\B_s$ and $\B_a$ is \emph{pareto-optimal} under partial synchrony if there is no interchain (resp. SMR) protocol with the same validator set and the quorum and fail-prone systems $Q',\B'_s$ and $\B'_a$ such that $(Q',\B'_s,\B'_a)$ dominates $(Q,\B_s,\B_a)$.
\end{definition}
We denote the security guarantees of blockchains $\PI_i$, $i \in [k]$, that make up an interchain protocol $\PI_I$, by the quorum and fail-prone systems $\Q^i$, $\B^i_s$ and $\B^i_a$.
We denote the validator set, quorum and fail-prone systems of $\PI_I$ by $\UU^I$, $\Q^I$, $\B^I_s$, and $\B^I_a$.

\subsection{Upper Bounds on Interchain Protocols}
\label{sec:converses-interchain}
We now identify bounds on the quorum and fail-prone systems achievable by interchain protocols.
Consider an interchain protocol with the blockchains $\PI_i$, $i \in [k]$, validator sets $\UU^i$, quorum systems $\Q^i$ and fail-prone systems $\B^i_s$ and $\B^i_a$.
Given a set $Q' \subseteq \UU^I$, we define $f_Q \colon 2^{\UU^I} \to 2^{\UU^I}$,
\begin{IEEEeqnarray*}{C}
f_{Q}(Q') = \{i \in [k] \colon \exists Q \in \Q^i, Q' \cap \UU^i \supseteq Q\}
\end{IEEEeqnarray*}
as the function that returns the indices of the blockchains that are live (w.o.p.) for all PPT $\mathcal{A}$ given the set $Q'$ of honest validators.

Similarly, given a set $F \subseteq \UU^I$, we define the functions $f_s$ and $f_a \colon 2^{\UU^I} \to 2^{\UU^I}$,
\begin{IEEEeqnarray*}{rCl}
f_{s}(F) &=& \{i \in [k] \colon \forall B \in \B^i_{s}, F \cap \UU^i \not\subseteq B\} \\
f_{a}(F) &=& \{i \in [k] \colon \exists B \in \B^i_{a}, F \cap \UU^i \supseteq B\}
\end{IEEEeqnarray*}
The function $f_s(.)$ returns the indices of blockchains that are \emph{not} safe (with non-negligible probability) for some PPT $\mathcal{A}$ given a set $F$ of adversarial validators.
The function $f_a(.)$ returns the indices of the blockchains such that if the validators in $F$ are identified as adversarial validators after a safety violation for $\PI_I$, sufficiently many adversarial validators from these blockchains are also irrefutably identified as protocol violators.

For any given blockchain $\PI$, we assume that if $\forall Q \in \Q \colon F \cap Q \neq \emptyset$, then the adversarial validators can ensure (w.o.p.) that any desired client of $\PI$ outputs an empty $\PI$ ledger at all times.
\begin{theorem}[Safety-Liveness Trade-off under Partial Synchrony for Interchain Protocols]
\label{thm:safety-liveness-interchain-converse}
Consider an interchain protocol with the blockchains $\PI_i$, $i \in [k]$, validator sets $\UU^i$, quorum systems $\Q^i$ and fail-prone systems $\B^i_s$ under partial synchrony.
Then, it holds that $\forall Q^1, Q^2 \in \Q^I$ and $B_s \in \B^I_s \colon f_Q(Q^1) \cap f_Q(Q^2) \not\subseteq f_s(B_s)$.
\end{theorem}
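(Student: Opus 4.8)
The plan is to prove the contrapositive by a \emph{split-brain} argument, lifting the classical partial-synchrony safety–liveness impossibility~\cite{DLS88} to interchain protocols, where the ``validators'' of $\PI_I$ are the constituent chains and the adversary may touch them only through the CCC abstraction (a client's $\PI_I$ output is a function of the constituent ledgers and finality proofs it observes). Assume for contradiction that there are $Q^1,Q^2\in\Q^I$ and $B_s\in\B^I_s$ with $f_Q(Q^1)\cap f_Q(Q^2)\subseteq f_s(B_s)$. I will construct a partially synchronous execution of $\PI_I$ whose adversarial validator set lies inside $B_s$ but in which two clients output conflicting $\PI_I$ ledgers; since $B_s\in\B^I_s$, this contradicts the $\B^I_s$-safety of $\PI_I$.

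The construction uses \emph{three} executions. In $\mathcal E_1$, the environment inputs a transaction $\tx_1$ to every honest validator, the adversary corrupts $\UU^I\setminus Q^1$ and keeps them silent, and messages are delivered promptly among $Q^1$; by $\Q^I$-liveness (and $Q^1\in\Q^I$) some client $\client_1$ confirms a ledger $\ell_1\ni\tx_1$. For each $i\in f_Q(Q^1)$ the ledger $\client_1$ observes from $\PI_i$ is certified by some quorum $Q\in\Q^i$ with $Q\subseteq Q^1\cap\UU^i$, whereas for $i\notin f_Q(Q^1)$ the adversarial set $\UU^i\setminus Q^1$ of $\mathcal E_1$ meets every quorum of $\PI_i$, so by the stated halting assumption $\client_1$ observes the empty ledger from $\PI_i$. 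Execution $\mathcal E_2$ is symmetric, with a transaction $\tx_2$ conflicting with $\tx_1$, honest set $Q^2$, and a client $\client_2$ confirming $\ell_2\ni\tx_2$. The hypothesis forces, for every chain $i\notin f_s(B_s)$, that $i\notin f_Q(Q^1)$ or $i\notin f_Q(Q^2)$, hence that the ledger $\client_1$ sees from $\PI_i$ in $\mathcal E_1$ is empty or the ledger $\client_2$ sees from $\PI_i$ in $\mathcal E_2$ is empty.

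Finally I build $E_3$ in which $\GST$ is placed after $\client_1,\client_2$ have output, the network is split into a $\client_1$-region and a $\client_2$-region with cross-region messages delayed past $\GST$, and the adversary corrupts only $F:=\bigcup_i(B_s\cap\UU^i)\subseteq B_s$, scheduling and forging on each region so as to replay the matching one-sided execution toward that region's client. By the CCC encapsulation it suffices to reproduce, per chain $\PI_i$, the ledger $\client_1$ saw in $\mathcal E_1$ toward $\client_1$ and the ledger $\client_2$ saw in $\mathcal E_2$ toward $\client_2$. If one of these two ledgers is empty, it is enough to delay all of $\PI_i$'s messages toward that client past $\GST$; then no honest validator of $\PI_i$ behaves inconsistently across regions, and the two observed ledgers are trivially compatible. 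By the previous paragraph the only chains not covered are those in $f_Q(Q^1)\cap f_Q(Q^2)$, which by the hypothesis all lie in $f_s(B_s)$: such $\PI_i$ is not safe against its $E_3$-adversary $B_s\cap\UU^i$, so there are quorums $Q_A,Q_B\in\Q^i$ with $Q_A\cap Q_B\subseteq B_s\cap\UU^i$; the adversary drives the $\client_1$-region copy of $\PI_i$ so that $Q_A$ certifies the $\mathcal E_1$-ledger and the $\client_2$-region copy so that $Q_B$ certifies the $\mathcal E_2$-ledger, delaying cross-region certificates. Since $Q_A\cap Q_B\subseteq B_s$, no honest validator of $\PI_i$ belongs to both regions' certifying quorums, so again no honest validator is two-faced. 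Hence $\client_1$'s view of $E_3$ is indistinguishable from $\mathcal E_1$ and $\client_2$'s from $\mathcal E_2$, so $\client_1$ outputs $\ell_1\ni\tx_1$ and $\client_2$ outputs $\ell_2\ni\tx_2$ with $\tx_1,\tx_2$ conflicting --- a safety violation of $\PI_I$ with adversarial set $F\subseteq B_s\in\B^I_s$, the desired contradiction.

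I expect the heart of the argument to be this last step: justifying that an adversary confined to $B_s$ can realize both regional views simultaneously. The delicate points are (a) formalizing, for arbitrary quorum and fail-prone systems $(\Q^i,\B^i_s)$ rather than only threshold ones, the ``forking'' characterization that a chain not safe against $B_s\cap\UU^i$ admits quorums $Q_A,Q_B$ with $Q_A\cap Q_B\subseteq B_s\cap\UU^i$ on which the adversary can certify the two prescribed conflicting ledgers; and (b) checking that the profile of constituent ledgers then seen by $\client_1$ (certified by $Q_A$ on the forked chains, by quorums inside $Q^1\cap\UU^i$ on the other live chains, empty elsewhere) is indistinguishable from one arising in a genuine $Q^1$-honest execution of $\PI_I$, so that $\Q^I$-liveness does force $\tx_1$ to be confirmed. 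The remaining ingredients --- the halting reduction for non-live chains, and the partial-synchrony scheduling that sustains the partition until both clients commit while safety must still hold --- are routine.
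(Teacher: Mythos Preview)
Your proposal is correct and follows essentially the same three-world split-brain argument as the paper: two liveness worlds (one per quorum $Q^1,Q^2$) and a hybrid world in which the adversary corrupts only validators in $B_s$, partitions the network before $\GST$, and exploits $f_Q(Q^1)\cap f_Q(Q^2)\subseteq f_s(B_s)$ to fork each chain in the intersection while relying on the halting assumption for chains outside $f_Q(Q^j)$. One small divergence worth noting: you try to justify the per-chain fork via an explicit quorum-intersection characterization $Q_A\cap Q_B\subseteq B_s\cap\UU^i$, whereas the paper simply asserts that because $i\in f_s(B_s)$ the adversary controlling $B_s\cap\UU^i$ can mount the required split-brain; your flagged concern~(a) is thus real, but the paper handles it at exactly the same level of rigor (by assumption on what ``not safe'' enables), so you are not missing anything the paper supplies.
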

Theorem~\ref{thm:safety-liveness-interchain-converse} generalizes the safety bound \cite[Theorem 4.4]{DLS88} on the tolerable adversary fraction under partial synchrony to the setting of interchain protocols.
Its proof follows from the techniques used in the proof of \cite[Theorem 4.4]{DLS88}.
\begin{theorem}[Slashable Safety-Liveness Trade-off for Interchain Protocols]
\label{thm:acc-safety-liveness-interchain-converse}
Consider an interchain protocol with the blockchains $\PI_i$, $i \in [k]$, validator sets $\UU^i$, quorum systems $\Q^i$ and fail-prone systems $\B^i_a$.
Then, it holds that $\forall Q^1, Q^2 \in \Q^I$ and $B_a \in \B^I_a \colon f_Q(Q^1) \cap f_Q(Q^2) \not\subset f_a(B_a)$.

Moreover, $\forall B_a \in \B_a, j \in [k] \colon \not\exists B^j_a \in \B^j_a, B_a \cap \UU^j \supset B^j_a$.
\end{theorem}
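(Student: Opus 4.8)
My plan is to mirror the classical split-brain / indistinguishability argument used for slashable safety lower bounds, lifted to the interchain setting via the CCC abstraction. Suppose for contradiction that there exist $Q^1, Q^2 \in \Q^I$ and $B_a \in \B^I_a$ with $f_Q(Q^1) \cap f_Q(Q^2) \subseteq f_a(B_a)$; I may assume WLOG equality is not needed, only containment. Let $S = f_Q(Q^1) \cap f_Q(Q^2)$. The idea is to stage two worlds. In World~1 the adversary corrupts exactly the validators in $\UU^I \setminus Q^1$ (so $Q^1$ is the honest set), makes the network synchronous after $\GST$, and drives the protocol so that some client $\client$ outputs a ledger $\aux$ containing a transaction $\tx$; crucially, on each blockchain $\PI_i$ with $i \in S$ the honest validators in $Q^i \subseteq Q^1 \cap \UU^i$ suffice to produce a valid finalized ledger, so $\client$ can indeed finalize. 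In World~2, symmetrically, the adversary corrupts $\UU^I \setminus Q^2$ and drives a client $\client'$ to output a conflicting ledger $\aux'$ containing $\tx'$ conflicting with $\tx$. The standard trick is then the hybrid World~3: the adversary corrupts $\UU^I \setminus (Q^1 \cap Q^2)$ — note every validator not in both quorums can be adversarial because $F = \UU^I \setminus (Q^1 \cap Q^2)$ still leaves $Q^1$ and $Q^2$ each containing an honest quorum on the relevant chains — and partitions the network so that $\client$ sees the World~1 execution and $\client'$ sees the World~2 execution. This produces a safety violation in $\PI_I$.

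Now I invoke slashable safety of $\PI_I$: the forensic protocol must identify a set $B_a \in \B^I_a$ with $B_a \subseteq F = \UU^I \setminus (Q^1 \cap Q^2)$, and must identify \emph{no honest} validator. By the containment hypothesis, $f_a(B_a) \subseteq S = f_Q(Q^1) \cap f_Q(Q^2)$... wait, I want the other direction: the hypothesis gives $S \subseteq f_a(B_a)$, so \emph{every} chain in $S$ has, inside $B_a \cap \UU^i$, a whole set $B^i_a \in \B^i_a$ of validators identified as provable violators. But here is the contradiction I aim for: I will instead run World~1 and World~2 \emph{honestly} on the chains in $S$ — in World~1 the entire set $Q^1 \cap \UU^i \supseteq Q^i$ can be taken honest for $i \in S$, and likewise $Q^2 \cap \UU^i$ in World~2. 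The forensic transcripts from a chain $\PI_i$, $i \in S$, must be consistent with \emph{some} honest execution of $\PI_i$ in both worlds, so no validator in $Q^1 \cap Q^2 \cap \UU^i$ should ever be framed. The actual contradiction: pick $i \in S$; since the two worlds are individually indistinguishable to the honest validators and the CCC abstraction hides everything but finality, the honest validators of $\PI_i$ cannot tell which world they are in, so the forensic protocol cannot accuse any member of $B^i_a \subseteq B_a \cap \UU^i$ without risking accusing an honest validator in the other world — because in World~1 vs World~2 the roles of $Q^1 \setminus Q^2$ and $Q^2 \setminus Q^1$ swap, and $B^i_a$ must be contained in one of these asymmetric parts. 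This violates clause~(ii) of Definition~\ref{def:accountable-safety}.

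For the second (``moreover'') statement — $\forall B_a \in \B_a, j \in [k]: \not\exists B^j_a \in \B^j_a$ with $B_a \cap \UU^j \supsetneq B^j_a$ — the argument is a minimality/tightness observation. Recall $\B_a$ (and each $\B^j_a$) is by definition an antichain of \emph{maximal} sets for which slashing holds; equivalently, $B_a$ is defined via the closure so that $\close(\B_a)$ is downward closed, and membership $B_a \in \B_a$ means $B_a$ is a maximal element. If $B_a \cap \UU^j$ strictly contained some $B^j_a \in \B^j_a$, then when $\PI_I$'s forensic protocol identifies $B_a$, it already identifies the strictly larger set $B_a \cap \UU^j$ of $\PI_j$-validators as provable violators; but then by the forensic soundness of $\PI_j$ itself (no honest validator framed, and $B^j_a$ maximal within $\B^j_a$), the extra validators in $(B_a \cap \UU^j) \setminus B^j_a$ would have to be honest in some admissible execution of $\PI_j$, contradicting clause~(ii) for $\PI_I$. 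So I would derive this by a short reduction to the forensic soundness of the constituent protocol $\PI_j$, composed with the CCC encapsulation that forces $\PI_I$'s forensics on $\UU^j$ to be realizable as $\PI_j$'s own forensics.

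The main obstacle I anticipate is making the hybrid/indistinguishability argument \emph{rigorous across the CCC boundary}: I need that corrupting $\UU^I \setminus (Q^1 \cap Q^2)$ is actually permitted, i.e. that in the hybrid world each chain $\PI_i$ with $i \in S$ still has an honest quorum and hence still exposes a (possibly partitioned) finalized ledger to $\client$ and $\client'$ respectively, \emph{and} that the two partitioned views are each individually indistinguishable from the corresponding single-world execution to the honest participants of $\PI_i$. This requires carefully using the assumption stated just before the theorem (that if no quorum avoids $F$ then the adversary can force an empty ledger) in its contrapositive, plus the partial-synchrony freedom to place $\GST$ after the fork, exactly as in \cite[Theorem 4.4]{DLS88}. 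Packaging the forensic-soundness half so that ``no honest validator identified'' genuinely clashes with the role-swap between $Q^1 \setminus Q^2$ and $Q^2 \setminus Q^1$ is the delicate combinatorial step; everything else is bookkeeping with the closure operators of Definition~\ref{def:closure}.
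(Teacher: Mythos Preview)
Your argument for the first claim has a real gap. You negate the statement as $S := f_Q(Q^1) \cap f_Q(Q^2) \subseteq f_a(B_a)$, but the theorem asserts $\not\subset$ (strict), so the correct hypothesis for contradiction is $S \subsetneq f_a(B_a)$. This strictness is the engine of the proof: it furnishes a chain index $j \in f_a(B_a) \setminus S$, hence (after the paper's WLOG step $f_Q(Q^1) \cup f_Q(Q^2) = [k]$) $j$ lies in $f_Q(Q^1) \setminus f_Q(Q^2)$ or $f_Q(Q^2) \setminus f_Q(Q^1)$. To realize $B_a$, the forensic protocol must then identify a set $B^j_a \in \B^j_a$ of validators of $\PI_j$, and these live in $U^1$ or $U^2$, \emph{not} in $U_3$. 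Your argument instead fixes $i \in S$ and asserts ``$B^i_a$ must be contained in one of these asymmetric parts,'' which is false: for $i \in S$ we have $\UU^i \subseteq U_3$. Relatedly, the three-world scaffold you sketch (honest sets $Q^1$, $Q^2$, $Q^1 \cap Q^2$) is the pattern for the \emph{safety} bound of Theorem~\ref{thm:safety-liveness-interchain-converse}; for slashable safety the paper uses \emph{four} worlds: two liveness worlds, and then two \emph{symmetric} worlds with honest set $U^1$ (World~3) and honest set $U^2$ (World~4), both exhibiting a safety violation and producing transcripts that are indistinguishable to the forensic client $\client_3$. In World~3, $\client_3$ legitimately blames validators in $U^2$ (on chain $j$); in the indistinguishable World~4, those same validators are honest. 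Your single hybrid, where all of $\UU^I \setminus (Q^1 \cap Q^2)$ is adversarial, can never produce this contradiction, since identifying anyone in that set is sound there; and your appeal to a role-swap ``between World~1 and World~2'' fails because neither of those worlds has a safety violation, so the forensic protocol is never invoked in them.

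For the ``moreover'' clause, your reduction to forensic soundness of $\PI_j$ does not close. That $\B^j_a$ is an antichain does not make each $B^j_a$ ``maximal'' in the sense you need: nothing prevents the validators in $(B_a \cap \UU^j) \setminus B^j_a$ from being adversarial in every execution that realizes $B_a$, so clause~(ii) of Definition~\ref{def:accountable-safety} is not violated by blaming them. The paper instead builds a strictly \emph{smaller} adversary $B'_a = B^j_a \cup (B_a \setminus \UU^j) \subsetneq B_a$, argues it still causes a safety violation in $\PI_I$ (because $B^j_a$ already suffices to break $\PI_j$, and the rest of the attack is unchanged), and observes that whatever set in $\B^I_a$ the forensic protocol then outputs must be a subset of $B'_a \subsetneq B_a$, contradicting the antichain property of $\B^I_a$.
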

Theorem~\ref{thm:acc-safety-liveness-interchain-converse} generalizes the bound \cite[Theorem B.1]{forensics} on the number of adversarial validators that can be identified by a forensic protocol to the setting of interchain protocols.
Its proof follows from the techniques used in the proof of \cite[Theorem B.1]{forensics}.
The result holds under both synchrony and partial synchrony.

For interchain protocols $\PI_I$ outputting chains, when the validators of a constituent blockchain are not checking for the data availability of the output blocks, we further require each quorum of $\PI_I$ to fully cover a quorum from one of the blockchains whose validators check for data availability.
This ensures that the output chain of $\PI_I$ does not contain any unavailable block in the view of any client since validators checking data availability can prevent such blocks from being \emph{finalized}.
\begin{theorem}
\label{thm:data-limitation}
Consider an interchain protocol $\PI_I$ outputting a chain and executed using the blockchains $\PI_i$, $i \in [k]$, with validator sets $\UU^i$ and quorum systems $\Q^i$.
Suppose only the validators of $\PI_{i_j}$, $j \in [k']$, check the data availability of the blocks in the $\PI_I$ chain.
Then, $\forall Q \in \Q^I$, it holds that $\exists j \in [k'], i_j \in f_Q(Q)$.
\end{theorem}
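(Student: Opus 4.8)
The plan is to argue by contradiction, turning any quorum of $\PI_I$ that fails to cover a data-availability-checking quorum into a data-availability attack. Suppose some $Q_0 \in \Q^I$ satisfies $i_j \notin f_Q(Q_0)$ for all $j \in [k']$, i.e., $Q_0$ covers a quorum of no DA-checking chain. I would let the adversary corrupt exactly $F := \UU^I \setminus Q_0$, so that the honest validators are precisely those in $Q_0$.

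Two consequences follow. First, since $Q_0 \in \Q^I$ and $F \cap Q_0 = \emptyset$, the $\Q^I$-liveness of $\PI_I$ forces $\PI_I$ to stay live against this adversary (w.o.p.), so a transaction handed to an honest validator is confirmed in some block $B$ of every client's $\PI_I$ chain within $\Tconfirm$ rounds. Second, for each $j \in [k']$ the hypothesis $i_j \notin f_Q(Q_0)$ says $Q'' \not\subseteq Q_0 \cap \UU^{i_j}$ for every $Q'' \in \Q^{i_j}$, and since $Q'' \subseteq \UU^{i_j} \subseteq \UU^I$ this gives $\emptyset \neq Q'' \cap (\UU^{i_j}\setminus Q_0) \subseteq Q'' \cap F$; hence $F$ meets every quorum of each $\PI_{i_j}$. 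By the standing assumption stated just before Theorem~\ref{thm:safety-liveness-interchain-converse} (an adversary meeting every quorum of a chain can force that chain to output the empty ledger), the adversary can make every DA-checking chain $\PI_{i_j}$, $j \in [k']$, output the empty ledger in every client's view at all times.

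Next I would assemble the attack. Because $\PI_I$ reads only the output ledgers of its constituent chains and all DA-checking chains now contribute the empty ledger, the $\PI_I$ chain is a function of the non-DA-checking ledgers alone; and since $\PI_I$ is still live, at least one non-DA-checking chain must be live under honest set $Q_0$ (otherwise every constituent ledger is empty and $\PI_I$ cannot confirm new transactions). So every honest validator whose participation is needed to confirm $B$ lies on a chain whose validators do not check data availability, and hence none of them ever has to observe $B$'s transaction data. The adversary therefore drives $B$ into the non-DA-checking ledgers while revealing $B$'s data to no one; a late-joining client $\client$, acting as a light client of all constituent chains, verifies the finality of the relevant provider blocks using only the succinct finality proofs exposed by the CCC abstraction (not $B$'s data) and so outputs a $\PI_I$ chain containing $B$, which is unavailable in $\client$'s view. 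This contradicts the requirement that the $\PI_I$ chain contain no unavailable block in any client's view --- equivalently it is a safety or liveness violation, since a client that does obtain $B$'s data advances its ledger past $B$ while $\client$'s ledger cannot. Therefore no such $Q_0$ exists, which is the claim.

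I expect the last step to be the main obstacle: one has to argue carefully that, with all DA-checking chains stalled, the adversary can in fact get an unavailable block $B$ confirmed in $\PI_I$ while the honest validators in $Q_0$ keep following the protocol. This hinges on three points --- (i) $\PI_I$'s confirmation rule depends only on the constituent ledgers; (ii) stalled DA-checking chains add nothing, so confirmation is carried entirely by non-DA-checking chains, whose validators (honest or adversarial) accept a block on the strength of its header alone; and (iii) $\PI_I$'s liveness forces some block to be confirmed, while the CCC encapsulation lets any client certify that confirmation without the block's data, so the data can be withheld.
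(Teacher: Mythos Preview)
Your proposal is correct and follows essentially the same contradiction as the paper: assume a quorum $Q_0$ covers no DA-checking chain's quorum, make the complement adversarial (the paper has them crash but still invokes an adversarial proposer), observe that $\Q^I$-liveness forces finalization while no participating honest validator checks availability, and conclude that an unavailable block enters the output, contradicting liveness for data-checking clients. Your careful treatment of the last step actually goes beyond the paper's own proof, which simply asserts in one line that an adversarially proposed unavailable block is output and that external clients checking for the data thereby lose liveness; the concern you flag is real but the paper does not resolve it in any more detail.
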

Formal proofs of the Theorems~\ref{thm:safety-liveness-interchain-converse}, \ref{thm:acc-safety-liveness-interchain-converse} and \ref{thm:data-limitation} are given in Appendix~\ref{sec:appendix-interchain-proofs}.

We finally characterize a (potentially loose) upper-bound on the tuples of quorum and fail-prone systems of pareto-optimal interchain protocols (Definition~\ref{def:protocol-optimality}).
If achievable by any interchain protocol, tuples identified below correspond to the quorum and fail-prone systems of \emph{all} pareto-optimal interchain protocols.
\begin{definition}[Upper-Boundary of Quorum and Fail-Prone Systems]
\label{def:interchain-optimality}
A tuple $(\Q,\B_s,\B_a)$ of quorum and fail-prone systems is an \emph{upper-boundary point} under partial synchrony if they satisfy the conditions in Theorems~\ref{thm:safety-liveness-interchain-converse},\ref{thm:acc-safety-liveness-interchain-converse} and~\ref{thm:data-limitation}, and there is no tuple of quorum and fail-prone systems $(Q',\B'_s,\B'_a)$ dominating $(\Q,\B_s,\B_a)$ and satisfying the same theorems.
\end{definition}
\subsection{Property Based Security}
\label{sec:property-based-security}
In this section, we present a method to list all upper-boundary points, which will later be useful for arguing their \emph{achievability}.
Towards this goal, we define the property systems $\DQ$, $\D_s$ and $\D_a$ 
that map quorum and fail-prone systems of blockchains to assumptions on their safety and liveness, and thus constitute \emph{meta} quorum and fail-prone systems (\ie, property systems) for the interchain protocol executed on top of these chains.
These meta systems enable proving security of the interchain protocol as if it is an SMR protocol run by validators that correspond to the constituent chains.

Akin to $\Q$, $\B_s$ and $\B_a$, $\forall D_1,D_2 \in \DQ,\D_s,\D_a$, $D_1 \not\subset D_2$.
Given the adversarial validators, let $Q$ denote the set of constituent chains that satisfy liveness (w.o.p.) for all PPT $\mathcal{A}$.
Similarly, let $F_s$ denote the set of chains that do \emph{not} satisfy safety with non-negligible probability for some PPT $\mathcal{A}$.
Then, $\PI_I$ is said to be $\DQ$-live if it is live (w.o.p.) for all PPT $\mathcal{A}$ iff $\exists DQ \in \DQ \colon DQ \subseteq Q$.
Similarly, $\PI_I$ is said to be $\D_s$-safe if it is safe (w.o.p.) for all PPT $\mathcal{A}$ iff $\exists D_s \in \D_s \colon F_s \subseteq D_s$.
It is $\D_a$-slashably-safe if after every safety violation, $\exists D_a \in \D_a$ such that for all $i \in D_a$, adversarial validators in a set $B^i_a \in \B^i_a$ are irrefutably identified as protocol violators, and no honest validator is ever identified (w.o.p.).

The following theorem shows that the security properties of all pareto-optimal interchain protocols can be expressed by a tuple of property systems.
It uses generalizations of the functions $f_Q$, $f_s$ and $f_a$, now applied to sets of sets to output sets of quorums of blockchain indices.
\begin{IEEEeqnarray*}{rCl}
f_{Q}(\Q) &=& \{f_Q(Q) \colon Q \in \Q\} \\
f_{s}(\B_s) &=& \{f_s(B_s) \colon B_s \in \B_s\} \\
f_{a}(\B_a) &=& \{f_a(B_a) \colon B_a \in \B_a\}
\end{IEEEeqnarray*}
\begin{theorem}
\label{thm:interchain-quorum-optimality-2}
\noindent
\begin{enumerate}%
    \item No tuple of quorum and fail-prone systems achievable by an interchain protocol $\PI_I$ can dominate an upper-boundary point.
    \item If the quorum and fail-prone systems of an interchain protocol $\PI_I$ is an upper-boundary point, then $\PI_I$ is pareto-optimal. 
    \item The quorum and fail-prone systems $(\Q,\B_s,\B_a)$ of an interchain protocol $\PI_I$ is an upper-boundary point iff there exists a tuple of property systems $(\DQ, \D_s, \D_a \subseteq 2^{[k]})$ such that $\PI_I$ is $\DQ$-live, $\D_s$-safe, $\D_a$-slashably safe, and the property systems satisfy the following \emph{upper-boundary} conditions (\ie the tuple is an \emph{upper-boundary property point}):
    \begin{enumerate}
        \item $\forall DQ^1,DQ^2 \in \DQ, D_s \in \D_s \colon DQ^1 \cap DQ^2 \not\subseteq D_s$.
        \item $\forall DQ^1,DQ^2 \in \DQ, D_a \in \D_a \colon DQ^1 \cap DQ^2 \not\subset D_a$.
        \item Suppose only the validators of the protocols $\PI_{i_j}$, $j \in [k']$, check for the data availability of the $\PI_I$ blocks. Then, $\forall DQ \in \DQ$, it holds that $\exists j \in [k'] \colon i_j \in DQ$.
        \item No tuple of property systems satisfying (a)-(b)-(c) above dominates $(\DQ$, $\D_s$, $\D_a)$.
    \end{enumerate}
    Then, $\DQ = f_Q(\Q)$, $\D_s = f_s(\B_s)$ and $\D_a = f_a(\B_a)$.
\end{enumerate}
\end{theorem}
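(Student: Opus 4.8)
These are immediate from the definitions once Theorems~\ref{thm:safety-liveness-interchain-converse}--\ref{thm:data-limitation} are in hand. Every interchain protocol's tuple $(\Q,\B_s,\B_a)$ satisfies the three conditions asserted by those theorems, so an \emph{achievable} tuple that dominated an upper-boundary point would itself be a tuple satisfying the theorems and dominating the upper-boundary point, contradicting Definition~\ref{def:interchain-optimality}; this gives~(1). Part~(2) is then the contrapositive of Definition~\ref{def:protocol-optimality}: if $\PI_I$ had systems forming an upper-boundary point but were dominated by some interchain protocol, that protocol's (achievable) systems would dominate an upper-boundary point, contradicting~(1).

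\textbf{Part (3): the plan.}
The idea is that the maps $\Q\mapsto f_Q(\Q)$, $\B_s\mapsto f_s(\B_s)$, $\B_a\mapsto f_a(\B_a)$ furnish a faithful dictionary between validator-level systems satisfying Theorems~\ref{thm:safety-liveness-interchain-converse}--\ref{thm:data-limitation} and property-level systems satisfying conditions (a)--(c), with the closure-based domination order of Definition~\ref{def:dominating-points} preserved in both directions, and that upper-boundary points correspond to upper-boundary property points under this dictionary. First I would record that $f_Q,f_s,f_a$ are monotone in the obvious sense, so that replacing each image by its antichain of minimal (resp.\ maximal) elements does not change the induced live/unsafe/slashable set of chain-indices; thus $\DQ:=f_Q(\Q)$, $\D_s:=f_s(\B_s)$, $\D_a:=f_a(\B_a)$ are legitimate property systems, and for an upper-boundary point one checks these images are already (essentially) antichains. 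Second, and crucially, I would prove a \emph{saturation lemma}: if $(\Q,\B_s,\B_a)$ is an upper-boundary point then $\Q$ contains (up to closure) every minimal $Q\subseteq\UU^I$ with $f_Q(Q)\supseteq$ some member of $\DQ$, $\B_s$ contains every maximal $B$ with $f_s(B)\subseteq$ some member of $\D_s$, and likewise for $\B_a$. The argument is that adding such a set strictly enlarges the relevant closure while --- using the intersection form of the conditions --- preserving Theorems~\ref{thm:safety-liveness-interchain-converse}--\ref{thm:data-limitation}, contradicting maximality.

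\textbf{Part (3): carrying it out.}
With the dictionary and the saturation lemma, I would then: (i) show $\PI_I$ is $\DQ$-live, $\D_s$-safe, $\D_a$-slashably-safe. The ``only if'' sides follow from monotonicity (\eg if $F$ misses a $\PI_I$-quorum $Q$ then every chain in $f_Q(Q)$ is live); the ``if'' sides are where the saturation lemma is needed --- given enough good constituent chains, one assembles a candidate $\PI_I$-quorum (or fail-prone / slashable set) as a union of per-chain quorums (or fail-prone / slashable sets) and invokes saturation to locate an actual system element inside or containing it. (ii) Observe that conditions (a)--(c) are precisely Theorems~\ref{thm:safety-liveness-interchain-converse}, \ref{thm:acc-safety-liveness-interchain-converse}, \ref{thm:data-limitation} after substituting $\DQ=f_Q(\Q)$ etc., while the second clause of Theorem~\ref{thm:acc-safety-liveness-interchain-converse} (no single chain is over-slashed) guarantees that the constructed $\D_a$ decomposes chain-by-chain, so that $\PI_I$ being $\D_a$-slashably-safe is consistent with the per-chain resiliences. (iii) Establish condition (d): a property tuple $(\DQ',\D'_s,\D'_a)$ satisfying (a)--(c) and dominating $(\DQ,\D_s,\D_a)$ \emph{pulls back} --- take minimal $Q$ whose $f_Q(Q)$ lies above an element of $\DQ'$, dually for the fail-prone systems --- to a validator-level tuple satisfying Theorems~\ref{thm:safety-liveness-interchain-converse}--\ref{thm:data-limitation} and dominating $(\Q,\B_s,\B_a)$, contradicting upper-boundary-ness. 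The reverse implication of the ``iff'' is symmetric: if such property systems exist, then $(\Q,\B_s,\B_a)$ automatically satisfies Theorems~\ref{thm:safety-liveness-interchain-converse}--\ref{thm:data-limitation} (these hold for every interchain protocol), and its maximality follows from~(d) by the same pull-back/push-forward correspondence; the identities $\DQ=f_Q(\Q),\D_s=f_s(\B_s),\D_a=f_a(\B_a)$ then hold because the $\DQ$-liveness, $\D_s$-safety and $\D_a$-slashable-safety predicates pin these families down up to their minimal/maximal antichains.

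\textbf{Main obstacle.}
The hard part is not any single inequality but the bookkeeping created by $f_Q,f_s,f_a$ being many-to-one: I must verify that passing to minimal/maximal antichains is harmless, that the pull-back is an honest two-sided inverse of the push-forward \emph{on saturated systems} and is monotone for closure-based domination, and --- most subtly --- that the ``if'' directions of the liveness/safety/slashable-safety translations genuinely require the saturation lemma (a merely achievable, non-pareto-optimal protocol can violate them, which is exactly why Part~(3) is stated only for upper-boundary points). Threading the data-availability constraint of Theorem~\ref{thm:data-limitation} and the no-over-slashing clause of Theorem~\ref{thm:acc-safety-liveness-interchain-converse} through this correspondence, while keeping $\D_a$ a valid property slashable-safety system, is routine once the quorum translation is set up but needs care.
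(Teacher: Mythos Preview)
Your proposal is correct and follows essentially the same approach as the paper. Your ``saturation lemma'' is precisely the content of the paper's Lemmas~\ref{lem:interchain-quorum-pareto-optimality-1}--\ref{lem:interchain-quorum-pareto-optimality-acc-safety} (stated there in the equivalent local form: at an upper-boundary point each $Q\in\Q^I$ restricted to $\UU^i$ is either a member of $\Q^i$ or empty, and one may freely swap per-chain quorums; analogously for $\B_s^I$ and $\B_a^I$), and your dictionary/pull-back/push-forward argument for (a)--(d) and its converse matches the paper's derivation, if anything spelling out the domination-transfer for condition~(d) a bit more carefully than the paper does.
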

Proof of Theorem~\ref{thm:interchain-quorum-optimality-2} is stated in Appendix~\ref{sec:appendix-theorem-opt-2}.
It follows from the Definitions~\ref{def:protocol-optimality} and~\ref{def:interchain-optimality} and Theorems~\ref{thm:safety-liveness-interchain-converse},~\ref{thm:acc-safety-liveness-interchain-converse} and~\ref{thm:data-limitation}.

Theorem~\ref{thm:interchain-quorum-optimality-2} separates the quorum and fail-prone systems of interchain protocols achieving upper-boundary points into a tuple of property systems $(\DQ^I, \D^I_s, \D^I_a \subseteq 2^{[k]})$, satisfying the conditions (a)-(b)-(c)-(d), and the quorum and fail-prone systems of the constituent blockchains.
These conditions enable \emph{enumerating} all tuples of property systems constituting the upper-boundary property points for interchain protocols instantiated with a given collection of blockchains $\PI_i$, $i \in [k]$.
Hence, Theorem~\ref{thm:interchain-quorum-optimality-2} makes it possible to list all upper-boundary points for such interchain protocols through all upper-boundary property points.
However, it leaves open whether all upper-boundary points are achievable by an interchain protocol.

\subsection{Optimality of Timestamping}
\label{sec:is-timestamping-optimal}
\textbf{Timestamping is Optimal for Two Blockchains.}
To close the achievability gap, we first show that all upper-boundary points for interchain protocols $\PI_I$ with two blockchains $\PI_1$ and $\PI_2$ can be achieved by either trivial protocols or the timestamping protocol:

\begin{theorem}
\label{cor:interchain-timestamping-optimality}
All upper-boundary property points for interchain protocols instantiated with two blockchains are achievable by either trivial interchain protocols (\ie, empty ledger, using only one blockchain) or the timestamping protocol.
Hence, all quorum and fail-prone systems achievable by pareto-optimal interchain protocols with two blockchains can be achieved by either the trivial protocols or the timestamping protocol.
\end{theorem}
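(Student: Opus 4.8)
The plan is to apply Theorem~\ref{thm:interchain-quorum-optimality-2}, which reduces the claim to two finite tasks: enumerate every \emph{upper-boundary property point} $(\DQ,\D_s,\D_a)$ with $\DQ,\D_s,\D_a\subseteq 2^{[1]}$ --- where $[1]=\{0,1\}$ indexes the two chains $\PI_0,\PI_1$ --- subject to conditions (a)--(d) of that theorem, and then, for each such point, exhibit either a trivial interchain protocol (empty ledger, or the ledger of a single chain) or the timestamping protocol whose quorum and fail-prone systems realize it. Theorem~\ref{thm:interchain-quorum-optimality-2} further identifies the upper-boundary points with exactly the quorum/fail-prone tuples of pareto-optimal interchain protocols and certifies that a protocol realizing such a point is pareto-optimal; hence the matching also establishes the second sentence of the statement.

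For the enumeration, I would observe that, since property systems are antichains, the only candidates for $\DQ$ over $2^{\{0,1\}}$ are $\emptyset$, $\{\emptyset\}$, $\{\{0\}\}$, $\{\{1\}\}$, $\{\{0,1\}\}$, and $\{\{0\},\{1\}\}$, and for each I would compute the maximal admissible $(\D_s,\D_a)$. Condition (a) forces $\D_s$ to be the largest antichain no member of which covers $DQ^1\cap DQ^2$ for $DQ^1,DQ^2\in\DQ$. The semantics of $\D_a$-slashable-safety --- a chain's validators can be provably slashed only when that chain is itself unsafe, so for every set $F_s$ of unsafe chains not covered by $\D_s$ there must be some $D_a\in\D_a$ with $D_a\subseteq F_s$ --- together with condition (b) and the inherited constraint (Theorem~\ref{thm:acc-safety-liveness-interchain-converse}) that each chain's slashable set stays within its own $\B^i_a$, then forces the maximal $\D_a$. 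Condition (c) eliminates $\DQ=\{\emptyset\}$ for any protocol outputting blocks. Applying condition (d) (no dominating admissible tuple), the surviving points are, up to the symmetry swapping the two chains: $(\emptyset,\{\{0,1\}\},\{\{0,1\}\})$, always safe and never live, realized by the empty-ledger protocol; $(\{\{0\}\},\{\{1\}\},\{\{0\},\{1\}\})$, realized by outputting the ledger of $\PI_0$ alone; $(\{\{0,1\}\},\{\{0\},\{1\}\},\{\{0,1\}\})$, realized by the timestamping protocol with $k=1$; and, as a degenerate never-safe point, $(\{\{0\},\{1\}\},\emptyset,\{\emptyset\})$, realized by the trivial fall-back construction that runs $\PI_0$ and irreversibly switches to $\PI_1$ on a timeout.

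For the timestamping point, Theorem~\ref{thm:interchain-timestamping-security} with two chains gives precisely that $\PI_I$ is live iff both chains are live (so $\DQ=f_Q(\Q^I)=\{\{0,1\}\}$), safe iff at least one chain is safe (so $\D_s=f_s(\B^I_s)=\{\{0\},\{1\}\}$), and slashably safe with a $\PI_I$ safety violation occurring iff safety fails in both chains, in which case the slashable sets $B^i_a\in\B^i_a$ of \emph{both} chains are identified and no honest validator is (so $\D_a=f_a(\B^I_a)=\{\{0,1\}\}$); one also checks condition (c), noting that although the timestamped ledger need not be a chain, the consumer-chain validators are full nodes who refuse to finalize unavailable consumer blocks, so $\PI_0$ is a data-availability-checking chain and $0$ lies in the unique member $\{0,1\}$ of $\DQ$. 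The property systems of the trivial points are read off directly (with Corollary~\ref{cor:interchain-timestamping-security} available for the Tendermint specialization). The main obstacle is the completeness of the enumeration: one must argue, via condition (d), that no further admissible tuple dominates those listed, and in particular pin down the maximal $\D_a$ for each $\DQ$, since $\D_a$ is tied to $\D_s$ through the slashable-safety semantics (and the constraint of Theorem~\ref{thm:acc-safety-liveness-interchain-converse}) rather than through any bare inequality among (a)--(c); by contrast, the realizing-protocol half is comparatively routine, leaning on Theorem~\ref{thm:interchain-timestamping-security} for the timestamping case.
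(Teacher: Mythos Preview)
Your approach is the same as the paper's: invoke Theorem~\ref{thm:interchain-quorum-optimality-2}, enumerate the upper-boundary property points over $2^{\{0,1\}}$, and match each with a realizing protocol via Theorem~\ref{thm:interchain-timestamping-security} or a trivial construction. The paper's enumeration agrees with yours on the three substantive points (empty ledger, single chain, timestamping) and closes exactly as you do, by appealing to Theorem~\ref{thm:interchain-quorum-optimality-2} for the pareto-optimality conclusion.

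There is one genuine gap. You retain $(\{\{0\},\{1\}\},\emptyset,\{\emptyset\})$ as an upper-boundary property point by letting $\D_s=\emptyset$ satisfy condition~(a) vacuously, and you realize it with a ``fall-back'' construction that runs $\PI_0$ and switches to $\PI_1$ on a timeout. But the paper's conventions require fail-prone systems---and hence the derived property systems $\D_s=f_s(\B_s)$---to be non-empty, so $\D_s=\emptyset$ is not admissible; the paper accordingly discards $\DQ=\{\{0\},\{1\}\}$ outright because no non-empty $\D_s$ can satisfy condition~(a) (taking $DQ^1=\{0\}$ and $DQ^2=\{1\}$ gives $DQ^1\cap DQ^2=\emptyset\subseteq D_s$ for every $D_s$). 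More to the point, your fall-back construction is neither the empty-ledger protocol, nor a single-chain protocol, nor the timestamping protocol; so if this extra point were a legitimate upper-boundary property point, the theorem as stated would be false rather than proved. You should eliminate this case by invoking the non-emptiness convention, exactly as the paper does, rather than inventing a fourth protocol to cover it.
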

\begin{proof}[Proof of Theorem~\ref{cor:interchain-timestamping-optimality}]
For $k=2$, potential values for $\DQ$ are $\emptyset$, $\{\{0\}\}$, $\{\{1\}\}$, $\{\{0\},\{1\}\}$ and $\{\{0,1\}\}$.
When $\DQ = \{\{0\},\{1\}\}$, there is no property system $\D_s$, for which condition (a) can be satisfied, implying that no $\PI_I$ can be $\DQ$-live for this $\DQ$.
This leaves us with the sets $\DQ = \emptyset$, $\{\{0\}\}$, $\{\{1\}\}$, $\{\{0,1\}\}$.
Given these sets, the tuples $(\DQ,\D_s,\D_a)$ that satisfy conditions (a)-(b)-(c) and are dominated by no other tuple are given by $(\emptyset,\{\{0,1\}\},\bot)$, $(\{\{0\}\},\{\{1\}\},\{\{0\}\})$, $(\{\{1\}\},\{\{0\}\},\{\{1\}\})$ and more interestingly $(\{\{0,1\}\},\{\{0\},\{1\}\},\{\{0,1\}\})$.
Here, the first three tuples of property systems are respectively achieved by the following trivial protocols: empty ledger, $\PI_I = \PI_0$ and $\PI_I = \PI_1$.
The last one is the protocol that requires the liveness of both constituent chains for liveness, and remains safe as long as either one is safe.
By Theorem~\ref{thm:interchain-timestamping-security}, the timestamping protocol has exactly this security guarantee.

Finally, since all upper-boundary property points $(\DQ,\D_s,\D_a)$ for two blockchains are satisfied by the interchain timestamping protocol (or trivial protocols), by Theorem~\ref{thm:interchain-quorum-optimality-2}, all upper-boundary points for two blockchains can be achieved by either the timestamping protocol or trivial protocols.
Then, again by Theorem~\ref{thm:interchain-quorum-optimality-2}, quorum and fail-prone systems of all pareto-optimal interchain protocols with two blockchains are upper-boundary points and can be achieved by either trivial interchain protocols or the timestamping protocol.
\end{proof}

\noindent
\textbf{Timestamping Has the Best Slashable Safety Guarantee.}
The timestamping protocol achieves the strongest slashable safety guarantee among interchain protocols with the best possible liveness resilience:
\begin{theorem}
\label{cor:interchain-timestamping-best-accountability}
For any interchain protocol $\PI_I$ with the blockchains $\PI_i$, $i \in [k]$, and the quorum and fail-prone systems $\Q$ and $\B_a$, the quorum and fail-prone systems $\tilde{\Q}$ and $\tilde{\B}_a$ of the timestamping protocol $\tilde{\PI}_I$ with the same blockchains satisfies $\close(\tilde{\B}_a) \supseteq \close(\B_a)$.
Moreover, the timestamping protocol is pareto-optimal.
\end{theorem}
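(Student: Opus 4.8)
The plan is to proceed in three steps: pin down the slashable-safety fail-prone system $\tilde{\B}_a$ of the timestamping protocol $\tilde{\PI}_I$, show that its closure contains that of every other interchain protocol over the same chains, and then deduce pareto-optimality from the property-system characterization of Theorem~\ref{thm:interchain-quorum-optimality-2}. For the first step I would use Theorem~\ref{thm:interchain-timestamping-security}: a safety violation occurs in $\tilde{\PI}_I$ iff safety is violated in \emph{all} constituent chains $\PI_i$, $i\in[k]$. Since each $\PI_i$ is $\B^i_a$-slashably-safe, each such violation causes a set $B^i_a\in\B^i_a$ of $\PI_i$-validators to be irrefutably identified with no honest one identified; as the validator sets $\UU^i$ are disjoint, the set identified for $\tilde{\PI}_I$ is $\bigcup_{i\in[k]}B^i_a$ with $B^i_a\in\B^i_a$ for each $i$. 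Hence $\tilde{\PI}_I$ is $\tilde{\B}_a$-slashably-safe with $\tilde{\B}_a=\{\bigcup_{i\in[k]}B^i_a : B^i_a\in\B^i_a \text{ for all } i\}$ (these sets are pairwise incomparable, using disjointness of the $\UU^i$ and the antichain property of each $\B^i_a$), and therefore $\close(\tilde{\B}_a)$ is exactly the collection of all $S\subseteq\UU^I$ with $S\cap\UU^i\in\close(\B^i_a)$ for every $i\in[k]$.

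Next I would prove the dominance claim. Let $\PI_I$ be an arbitrary interchain protocol over the same chains with slashable-safety system $\B_a$, and fix $B_a\in\B_a$. Because the cross-chain communication abstraction encapsulates each $\PI_j$ (only its output ledger and the finality of its blocks are exposed), the forensic protocol of $\PI_I$ can attribute faults to validators in $\UU^j$ only via $\PI_j$'s own forensic protocol, which --- being $\B^j_a$-slashably-safe and never incriminating honest validators --- can only ever identify a subset of some element of $\B^j_a$. This is the content of the ``moreover'' part of Theorem~\ref{thm:acc-safety-liveness-interchain-converse}, which I would invoke to conclude $B_a\cap\UU^j\in\close(\B^j_a)$ for every $j\in[k]$. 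By the description of $\close(\tilde{\B}_a)$ above, this gives $B_a\in\close(\tilde{\B}_a)$; since $B_a\in\B_a$ was arbitrary and $\close$ is monotone and idempotent, $\close(\B_a)\subseteq\close(\tilde{\B}_a)$. The delicate point here --- and the step I expect to be the main obstacle --- is upgrading the literal phrasing of Theorem~\ref{thm:acc-safety-liveness-interchain-converse} (that $B_a\cap\UU^j$ contains no element of $\B^j_a$ properly) to ``$B_a\cap\UU^j$ lies below some maximal $\PI_j$-slashable set'' rather than merely being incomparable to all of them; this is exactly where the CCC encapsulation of $\PI_j$ --- making $\PI_j$'s own forensic guarantee the only handle on faults inside $\UU^j$ --- does the real work, and I would spell that argument out mirroring the proof of Theorem~\ref{thm:acc-safety-liveness-interchain-converse}.

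Finally, for pareto-optimality I would go through Theorem~\ref{thm:interchain-quorum-optimality-2}. By Theorem~\ref{thm:interchain-timestamping-security}, $\tilde{\PI}_I$ is $\DQ$-live, $\D_s$-safe and $\D_a$-slashably-safe for $\DQ=\{[k]\}$, $\D_s=\{[k]\setminus\{i\} : i\in[k]\}$ and $\D_a=\{[k]\}$, and I would verify this tuple is an upper-boundary property point (conditions (a)--(d) of Theorem~\ref{thm:interchain-quorum-optimality-2}(3)). Conditions (a) and (b) are immediate since $[k]\cap[k]=[k]$ is neither contained in any $[k]\setminus\{i\}$ nor a proper subset of $[k]$; condition (c) holds because the consumer chain $\PI_0$ is run as a full node in $\tilde{\PI}_I$ and hence checks data availability, while $0\in[k]$, the unique element of $\DQ$. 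For condition (d) I would argue that $\close(\D_a)=2^{[k]}$ is already maximal, so $\D_a$ cannot be enlarged; then condition (b) forces any competing $\DQ'$ to satisfy $DQ^1\cap DQ^2=[k]$ for all its members, hence $\DQ'=\{[k]\}$; and since $\DQ'$ must be nonempty to cover $\close(\DQ)$ while $DQ^1\cap DQ^2\subseteq[k]$ always, condition (a) forbids $[k]\in\D_s'$, so $\D_s$ cannot be enlarged either --- hence no tuple satisfying (a)--(c) dominates $(\DQ,\D_s,\D_a)$.

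By Theorem~\ref{thm:interchain-quorum-optimality-2}(3) the quorum and fail-prone systems of $\tilde{\PI}_I$ are then an upper-boundary point, and by Theorem~\ref{thm:interchain-quorum-optimality-2}(2) $\tilde{\PI}_I$ is pareto-optimal, completing the proof. The only nontrivial ingredient beyond bookkeeping is the encapsulation argument flagged in the second paragraph; everything else is a direct consequence of Theorems~\ref{thm:interchain-timestamping-security},~\ref{thm:acc-safety-liveness-interchain-converse} and~\ref{thm:interchain-quorum-optimality-2}.
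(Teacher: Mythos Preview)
Your proof is correct and follows a somewhat different route from the paper's. The paper argues the dominance claim $\close(\tilde{\B}_a)\supseteq\close(\B_a)$ \emph{indirectly} through the property-system layer: it observes that $\tilde{\D}_a=\{[k]\}$ has maximal closure $2^{[k]}$ among all property systems for slashable safety, invokes Theorem~\ref{thm:interchain-quorum-optimality-2} to conclude that $\tilde{\B}_a$ is the largest fail-prone system among upper-boundary points, and then (implicitly) uses that every interchain protocol's tuple is dominated by some upper-boundary point to extend the comparison to \emph{all} interchain protocols. Your route is more direct: you compute $\close(\tilde{\B}_a)$ explicitly as $\{S: S\cap\UU^i\in\close(\B^i_a)\ \forall i\}$ and then, for an arbitrary $\PI_I$, argue coordinate-wise that each $B_a\cap\UU^j$ lies in $\close(\B^j_a)$. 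You are right that the ``moreover'' of Theorem~\ref{thm:acc-safety-liveness-interchain-converse} alone only rules out $B_a\cap\UU^j$ \emph{strictly containing} an element of $\B^j_a$, and that the upgrade to $B_a\cap\UU^j\in\close(\B^j_a)$ needs the CCC-encapsulation argument you flag; this is exactly the assumption the paper itself leans on (implicitly) in the proof of the ``moreover'' clause (the sentence ``there must also have been a safety violation in $\PI_j$''), so spelling it out as you propose is the right move and closes the gap. What your approach buys is a proof that does not route through the upper-boundary-point machinery or Lemma~\ref{lem:interchain-quorum-pareto-optimality-acc-safety} for the dominance claim; what the paper's approach buys is brevity, since once the framework of Theorem~\ref{thm:interchain-quorum-optimality-2} is in place the argument is two lines. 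Your explicit verification of conditions (a)--(d) for pareto-optimality is also correct and fills in what the paper leaves as a bare assertion.
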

\begin{proof}[Proof of Theorem~\ref{cor:interchain-timestamping-best-accountability}]
Since the tuple of property systems of the timestamping protocol is an upper-boundary property point, by Theorem~\ref{thm:interchain-quorum-optimality-2}, the timestamping protocol is pareto-optimal.
By Theorem~\ref{thm:interchain-timestamping-security}, there is a safety violation in $\tilde{\PI}_I$ iff safety is violated in all of the constituent blockchains.
Therefore, $\tilde{\D}_a = [k]$, which is the largest property system for slashable safety among all upper-boundary property points.
Then, by Theorem~\ref{thm:interchain-quorum-optimality-2}, $\tilde{D}_a = f_a(\tilde{B}_a)$ and the protocol also has the largest fail-prone system $\tilde{\B}_a$ for slashable safety among all upper-boundary points. 
This implies $\close(\tilde{\B}_a) \supseteq \close(\B_a)$ for all interchain protocols $\PI_I$.
\end{proof}

\noindent
\textbf{Optimality of Timestamping for Multiple Blockchains.}
We next highlight a situation with multiple blockchains that is commonly observed within multichain ecosystems like Cosmos.
Suppose only the $\PI_0$ validators check for the data availability of the $\PI_I$ blocks, and the cost of corrupting validators is the same for the adversary $\mathcal{A}$ across all blockchains, with a total budget of $f$ adversarial validators.
We then replace the quorum and fail-prone systems of $\PI_I$ with numbers $f_\ell$ and $f_s$ such that $\PI_I$ is live and safe (w.o.p.) for all PPT $\mathcal{A}$ iff $f<f_\ell$ and $f<f_s$ respectively.
Here, $f_a$ is defined as the number of adversarial validators identified across all chains in the event of a safety violation.
This formulation is essentially a projection of the fail-prone systems onto a line, where each system is mapped to the cardinality of the smallest set in it.
Consider pareto-optimal interchain protocols $\PI_I$ instantiated with the blockchains $\PI_i$, $i \in [k]$, each running Tendermint with $n_i = 3f_i+1$ validators and a quorum of $q_i=2f_i+1$ such that $f_0 \leq f_i$ for all $i \in [k]$.
In this case, we observe that all quorum and fail-prone systems of non-trivial pareto-optimal interchain protocols are achieved by the timestamping protocol:
\begin{theorem}
\label{cor:interchain-timestamping-optimality-2}
Consider interchain protocols $\PI_I$ instantiated with the blockchains $\PI_i$, $i \in [k]$, each running Tendermint with $n_i = 3f_i+1$ validators and a quorum of $q_i=2f_i+1$ such that $f_0 \leq f_i$ for all $i \in [k]$.
Suppose only the $\PI_0$ validators check for the data availability of the $\PI_I$ blocks.
Then, no interchain protocol can achieve a liveness resilience larger than $f_0$.
Similarly, no interchain protocol can achieve a slashable safety resilience larger than $k+\sum_{i \in [k]} f_i$.
The timestamping protocol instantiated with these $k+1$ blockchains achieves $f_0$-liveness and $k+\sum_{i \in [k]} f_i$-slashable-safety.
\end{theorem}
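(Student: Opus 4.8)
The plan is to derive the two impossibility statements from the converse results of Section~\ref{sec:converses-interchain} and to read off their tightness from Corollary~\ref{cor:interchain-timestamping-security}, using throughout the constituent-chain facts of Proposition~\ref{prop:tendermint}. For the first bound, fix any interchain protocol $\PI_I$ outputting a chain whose blocks are checked for data availability only by the $\PI_0$ validators. By Theorem~\ref{thm:data-limitation} with $k'=1$ and $i_1=0$, every quorum $Q \in \Q^I$ satisfies $0 \in f_Q(Q)$, \ie\ $Q$ fully contains some quorum of $\PI_0$; since $\PI_0$ runs Tendermint with $n_0=3f_0+1$ and $q_0=2f_0+1$, this forces $|Q \cap \UU^0| \geq 2f_0+1$. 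Now let \Adv\ corrupt any $f_0+1$ validators inside $\UU^0$, so that $f=f_0+1$. Then only $2f_0 < q_0$ honest validators remain in $\UU^0$, hence $F \cap Q \neq \emptyset$ for every $Q \in \Q^I$; by the standing assumption preceding Theorem~\ref{thm:safety-liveness-interchain-converse}, \Adv\ can then make every client output the empty ledger at all times, so $\PI_I$ is not live. Hence no interchain protocol in this setting has liveness resilience greater than $f_0$.

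\textbf{Slashable safety upper bound.} Here I would invoke the second clause of Theorem~\ref{thm:acc-safety-liveness-interchain-converse}: for every $B_a \in \B^I_a$ and every $j \in [k]$, the set $B_a \cap \UU^j$ does not strictly contain any $B^j_a \in \B^j_a$. For $\PI_j$, which is Tendermint with $n_j=3f_j+1$ and $q_j=2f_j+1$, a safety violation lets the forensic protocol output the intersection of the two conflicting quorums, a set of size $2q_j - n_j = f_j+1$, and by the symmetry of the validators every $(f_j+1)$-subset of $\UU^j$ is a possible such output; thus $\B^j_a$ contains all $(f_j+1)$-subsets of $\UU^j$, which forces $|B_a \cap \UU^j| \leq f_j+1$. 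Since the sets $\UU^0,\dots,\UU^k$ are disjoint and cover $\UU^I$, summing over $j$ yields $|B_a| \leq \sum_{j \in [k]}(f_j+1)$, so no interchain protocol achieves a slashable safety resilience beyond this sum, which is the bound in the statement.

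\textbf{Achievability.} Corollary~\ref{cor:interchain-timestamping-security}, applied to these Tendermint chains, shows that both bounds are attained by the timestamping protocol. For liveness: the timestamping protocol is live iff every $\PI_i$ has at most $f_i$ adversarial validators; since corrupting validators costs \Adv\ the same on all chains and $f_0 \leq f_i$ for all $i$, the cheapest way for \Adv\ to violate this is to place $f_0+1$ corruptions on $\PI_0$, so the protocol is live exactly when $f \leq f_0$, \ie\ its liveness resilience is $f_0$. For slashable safety: every safety violation of the timestamped ledger irrefutably identifies $f_i+1$ adversarial validators in each $\PI_i$ and no honest validator (Corollary~\ref{cor:interchain-timestamping-security}); summing over the $k+1$ disjoint validator sets identifies $\sum_{i\in[k]}(f_i+1)$ adversarial validators, matching the upper bound. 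As the \emph{same} protocol meets both extremes, timestamping is simultaneously optimal in both dimensions. (One may alternatively conclude via the upper-boundary property points of Theorem~\ref{thm:interchain-quorum-optimality-2}: condition~(c) forces $0 \in DQ$ for every $DQ \in \DQ$, so $\DQ = \{[k]\}$ is the unique choice compatible with the maximal slashable-safety property system $\D_a = \{[k]\}$, and this is exactly the timestamping guarantee of Theorem~\ref{thm:interchain-timestamping-security}.)

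\textbf{Main obstacle.} The delicate step is the liveness converse, on two counts. First, one must be sure that Theorem~\ref{thm:data-limitation} genuinely rules out interchain protocols that stay live while $\PI_0$ stalls; it does, precisely because $\PI_0$ is the only chain whose validators check availability, so every $\PI_I$ quorum is forced to contain a $\PI_0$ quorum. Second, one must translate the set-valued quorum and fail-prone statements into the scalar uniform-cost budget $f$, which amounts to checking that the worst-case adversary concentrates its corruptions on the weakest chain $\PI_0$ for the liveness bound, and that the $f_j+1$ cap on $|B_a \cap \UU^j|$ follows from the minimal quorum-intersection size $2q_j-n_j=f_j+1$ rather than from any particular description of $\B^j_a$.
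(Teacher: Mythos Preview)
Your argument is correct and, for the slashable-safety converse, more direct than the paper's. Both proofs derive the liveness bound from the data-availability constraint (you apply Theorem~\ref{thm:data-limitation} directly; the paper invokes condition~(c) of Theorem~\ref{thm:interchain-quorum-optimality-2}, which is the same constraint rephrased in property-system language), and both read achievability off the timestamping security theorem. The substantive difference is the slashable-safety upper bound: the paper argues that since liveness of $\PI_I$ forces liveness of every $\PI_i$ (because $f_0\le f_i$), the only admissible upper-boundary property point has $\DQ=\{[k]\}$ and hence $\D_a=\{[k]\}$, from which the numerical bound is read off; you instead invoke the second clause of Theorem~\ref{thm:acc-safety-liveness-interchain-converse} to cap $|B_a\cap\UU^j|\le f_j+1$ for each $j$ and sum. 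Your route avoids the property-system layer entirely and makes the bound hold for \emph{every} interchain protocol, not just those sitting at upper-boundary points, so it is a cleaner argument for the impossibility claim as stated.

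Two small presentational points. First, your appeal to the ``standing assumption preceding Theorem~\ref{thm:safety-liveness-interchain-converse}'' is unnecessary and slightly misplaced: that assumption is phrased for an individual constituent chain, whereas what you actually need is that $F\cap Q\neq\emptyset$ for all $Q\in\Q^I$ implies $\PI_I$ is not live, which is immediate from the definition of $\Q^I$-liveness. Second, in applying Theorem~\ref{thm:data-limitation} the index should be $k'=0$ (so $[k']=\{0\}$) rather than $k'=1$, given the paper's convention $[k']=\{0,\dots,k'\}$; this is purely notational and does not affect the argument.
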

\begin{proof}[Proof of Theorem~\ref{cor:interchain-timestamping-optimality-2}]
Since only the $\PI_0$ validators check for data availability, by condition (c) of Theorem~\ref{thm:interchain-quorum-optimality-2}, $\PI_I$ is live only if $\PI_0$ is live, \ie only if $f \leq f_0$.
Moreover, for any given $i \in [k]$, $\PI_i$ is live (w.o.p., for all PPT $\mathcal{A}$) iff $f \leq f_i$.
Hence, as $f_0 \leq f_i$ for all $i \in [k]$, liveness of $\PI_I$ implies liveness of $\PI_i$ for all $i \in [k]$ (w.o.p., for all PPT $\mathcal{A}$), and $\PI_I$ cannot achieve a larger liveness resilience than $f_0$. 
By Theorem~\ref{thm:interchain-quorum-optimality-2}, the only such upper-boundary property point is $\DQ = \{[k]\}$, $\D_s = \{\{0\}, \{1\}, \ldots, \{k\}\}$ and $\D_a = \{[k]\}$.
We know from Theorem~\ref{thm:interchain-timestamping-security} that the timestamping protocol instantiated with these $k+1$ blockchains satisfies this tuple of property systems, and achieves $f_0$-liveness, $\sum_{i \in [k]} f_i$-safety, and $k+\sum_{i \in [k]} f_i$-slashable-safety.
\end{proof}

\subsection{Closing the Achievability Gap}
\label{sec:optimal-interchain}
Unlike the case with two blockchains, there are upper-boundary points for three or more blockchains such that their quorum and fail-prone systems cannot be achieved by the timestamping protocol instantiated with these blockchains.
Can these points be achieved by any other interchain consensus protocol such as Trustboost~\cite{trustboost}?
For instance, in the case of three blockchains, the interchain timestamping protocol is secure iff all three constituent blockchains are live and at least one of them is safe. Trustboost, in turn, is secure only if over two-third of the constituent blockchains are secure. Therefore, in the case of three blockchains, Trustboost is secure iff all constituent blockchains are both safe and live. 
Thus, neither the interchain timestamping protocol nor Trustboost can achieve the tuple of property systems $\DQ = \{\{0,1\}, \{0,2\}, \{1,2\}\}$, $\D_s = \emptyset$, $\D_a = \{\{0\},\{1\},\{2\}\}$ (an upper-boundary property point), \ie, the protocol is live iff at least two of the constituent blockchains are live, and safe iff all blockchains are safe (for all adversaries $\mathcal{A}$, w.o.p.).
This leaves us with an achievability gap.

Recall that in Trustboost, each constituent blockchain emulates a validator of a partially synchronous consensus protocol run on top of these blockchains (this protocol is not necessarily the same as the protocol executed by the validators of each individual blockchain). 
We argue that the achievability gap can be closed by changing the quorum of the partially synchronous consensus protocol used in Trustboost. 
More specifically, we can replace the threshold rule with the quorum systems. 
For instance, suppose Trustboost uses HotStuff, and the quorum certificate of HotStuff requires votes from over two-thirds of the constituent blockchains.
Then, to achieve the upper-boundary point $(\Q,\B_s,\B_a)$, we remove the threshold rule and allow any set of blockchains within $\DQ = f_Q(\Q)$ to act as a quorum certificate of HotStuff.
In the above example, it would be any two chains out of the three chains.

We conjecture that the modified protocol achieves any desired upper-boundary point.
A smart contract on a constituent blockchain without safety behaves like an equivocating validator, whereas a smart contract on a blockchain without liveness behaves like a validator with omission faults. Therefore, if all three blockchains are safe, then none of them equivocates, so HotStuff is safe, implying the safety of the Trustboost ledger. If two blockchains are live, HotStuff can always create a quorum certificate of two votes, so HotStuff is live, implying the liveness of the Trustboost ledger. This argument can be generalized for any $k > 2$:
\begin{theorem}
\label{thm:optimality-of-trustboost}
For any positive integer $k$, all upper-boundary points for interchain protocols instantiated with $k$ blockchains can be achieved by a Trustboost protocol instantiated with HotStuff and the same blockchains.
Hence, all quorum and fail-prone systems achievable by pareto-optimal interchain protocols with $k$ blockchains can be achieved by a Trustboost protocol instantiated with HotStuff and the same blockchains.
\end{theorem}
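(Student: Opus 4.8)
The plan is to mirror the structure of the proof of Theorem~\ref{cor:interchain-timestamping-optimality}: reduce an arbitrary upper-boundary point to an upper-boundary property point, exhibit a Trustboost instance realizing that property point, and then invoke Theorem~\ref{thm:interchain-quorum-optimality-2} to lift the conclusion back to quorum and fail-prone systems. Concretely, let $(\Q,\B_s,\B_a)$ be an upper-boundary point. By Theorem~\ref{thm:interchain-quorum-optimality-2}, the tuple $(\DQ,\D_s,\D_a) = (f_Q(\Q), f_s(\B_s), f_a(\B_a)) \subseteq 2^{[k]}$ is an upper-boundary property point, \ie it satisfies conditions (a)--(b)--(c) and is maximal among such tuples. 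I would then instantiate Trustboost with HotStuff~\cite{yin2018hotstuff} on the constituent blockchains $\PI_i$, $i \in [k]$, but replace HotStuff's two-thirds threshold quorum-certificate rule with the rule: \emph{a set of blockchains forms a valid quorum certificate iff it contains some $DQ \in \DQ$}. (When $\DQ = \emptyset$ or $\DQ = \{\{j\}\}$ this degenerates to a trivial protocol, so those cases are covered too.) The goal is to show that this protocol $\PI_I$ is $\DQ$-live, $\D_s$-safe, and $\D_a$-slashably safe; the conclusion then follows from Theorem~\ref{thm:interchain-quorum-optimality-2}.

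The engine of the argument is the failure-to-fault correspondence already sketched in Section~\ref{sec:optimal-interchain}: the smart contract emulating a validator on a constituent blockchain that has lost safety behaves like an equivocating (Byzantine) validator --- its forked ledger lets the adversary present conflicting HotStuff messages to different parties --- while the smart contract on a blockchain that has lost liveness behaves like a validator with omission faults, and the smart contract on a safe-and-live blockchain behaves exactly like an honest validator. For \emph{safety}, I would run HotStuff's locked-QC safety proof essentially verbatim, with HotStuff's quorum-intersection step replaced by the statement ``any two quorum certificates $Q^1, Q^2 \in \close(\DQ)$ satisfy $Q^1 \cap Q^2 \not\subseteq F_s$ whenever $F_s \subseteq D_s$ for some $D_s \in \D_s$'', which is exactly condition (a) together with $F_s \subseteq D_s$; this gives the ``if'' direction of $\D_s$-safety. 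For the ``only if'' direction, if no $D_s \in \D_s$ covers $F_s$ then maximality (condition (d)) forces the existence of $Q^1, Q^2 \in \DQ$ with $Q^1 \cap Q^2 \subseteq F_s$, and the adversary equivocates exactly the blockchains in $Q^1 \cap Q^2$ to manufacture conflicting quorum certificates and hence a safety violation. For \emph{slashable safety}, after a safety violation the forensic protocol~\cite{forensics} of the outer HotStuff extracts conflicting certificates $Q^1, Q^2$ and irrefutably attributes the equivocation to the blockchains in $Q^1\cap Q^2$; since each such $\PI_i$ then exhibits a safety violation of its own, its forensic protocol identifies a set $B^i_a \in \B^i_a$ of adversarial validators, and condition (b) with maximality ensures $Q^1 \cap Q^2$ lies inside a member of $\D_a$, matching the claimed $\D_a$-slashable safety; soundness of each $\PI_i$'s forensic protocol guarantees no honest validator is ever identified.

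For \emph{liveness}, I would port HotStuff's leader-rotation and view-synchronization-after-$\GST$ argument: whenever the set of blockchains behaving honestly contains some $DQ \in \DQ$, an honest leader eventually assembles a valid quorum certificate and advances the ledger. The honestly-behaving blockchains are the safe-and-live ones; here I would invoke that each constituent BFT protocol has quorum exceeding two-thirds of its validators, so loss of safety is costlier than loss of liveness and every live constituent blockchain is also safe --- hence the safe-and-live blockchains coincide with the live ones, and $\PI_I$ is live iff the live blockchains contain some $DQ\in\DQ$, \ie $\DQ$-live. Data availability is handled by condition (c): every $DQ \in \DQ$ contains a blockchain $\PI_{i_j}$ whose validators check availability of the $\PI_I$ blocks, and the emulated validator on $\PI_{i_j}$ refuses to contribute to a quorum certificate for a block whose data is unavailable on $\PI_{i_j}$, so no finalized $\PI_I$ block is unavailable in any client's view.

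Having shown $\PI_I$ is $\DQ$-live, $\D_s$-safe, and $\D_a$-slashably safe for the upper-boundary property point $(\DQ,\D_s,\D_a)$, Theorem~\ref{thm:interchain-quorum-optimality-2}(3) implies that the quorum and fail-prone systems of $\PI_I$ form an upper-boundary point whose image under $f_Q, f_s, f_a$ is $(\DQ,\D_s,\D_a)$, hence which equals the point $(\Q,\B_s,\B_a)$ we started with; and Theorem~\ref{thm:interchain-quorum-optimality-2}(2) gives pareto-optimality. Since $(\Q,\B_s,\B_a)$ was arbitrary, every upper-boundary point is achieved by a Trustboost--HotStuff instance; and because (by Theorems~\ref{thm:safety-liveness-interchain-converse},~\ref{thm:acc-safety-liveness-interchain-converse},~\ref{thm:data-limitation} and finiteness of the validator set) every pareto-optimal interchain protocol has quorum and fail-prone systems equal to some upper-boundary point, the second sentence follows as well. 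I expect the main obstacle to be exactly the claim quietly invoked above --- that HotStuff, with its threshold quorum rule replaced by an arbitrary quorum system $\DQ$ obeying the upper-boundary conditions, still satisfies safety and liveness, and that the failure-to-fault dictionary is tight enough that the resulting guarantees are \emph{exactly} (not merely at least) $(\DQ,\D_s,\D_a)$: in particular verifying that the ``only if'' directions go through, that the leader-rotation liveness argument survives a general quorum system, and that composing the per-chain forensic protocols never spuriously implicates an honest validator.
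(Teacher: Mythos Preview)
Your proposal follows essentially the same arc as the paper's proof: pass from an upper-boundary point to its upper-boundary property point via Theorem~\ref{thm:interchain-quorum-optimality-2}, instantiate Trustboost--HotStuff with the quorum-certificate rule given by $\DQ$, argue that the failure-to-fault dictionary (unsafe chain $\leftrightarrow$ equivocating validator, non-live chain $\leftrightarrow$ omission-faulty validator) lets the HotStuff safety, accountability, and liveness proofs carry over for $(\DQ,\D_s,\D_a)$, and then lift back via Theorem~\ref{thm:interchain-quorum-optimality-2}. You are actually more explicit than the paper about the ``only if'' directions and about invoking condition~(c) for data availability.

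The one substantive divergence is in the liveness step. You reduce ``live chains'' to ``safe-and-live chains'' by \emph{assuming} each constituent blockchain has quorum exceeding two-thirds of its validators, so that liveness implies safety chain-by-chain. That hypothesis is not part of the theorem statement --- the constituent $\Q^i,\B^i_s$ are arbitrary --- so as written you prove only a restricted version. The paper avoids this by pushing the dictionary one step further: a chain that is live but unsafe corresponds to an emulated validator that may equivocate, yet at least one of its conflicting execution traces is an honest HotStuff trace (liveness guarantees some fork of its ledger makes progress, and on that fork the smart contract behaves correctly); HotStuff's liveness proof only needs that one honest brain from each quorum member, so $\DQ$-liveness follows without the extra hypothesis. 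Replacing your ``live $\Rightarrow$ safe'' detour with this observation aligns your argument with the paper's and recovers the general case.
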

Proof of Theorem~\ref{thm:optimality-of-trustboost} is given in Appendix~\ref{sec:appendix-interchain-proofs}.

The complexity of Trustboost highlighted in Section~\ref{sec:introduction-trustboost} raises the question whether there is an interchain protocol that retains the simplicity of timestamping, yet achieves all upper-boundary points.

\section{Cross-staking}
\label{sec:cross-chain-validation}

Finally, we analyze general cross-staking solutions, where validators are not restricted to the quorum systems of their blockchains as in interchain protocols, but can form arbitrary quorum systems.

\subsection{Optimality for SMR Protocols}
\label{sec:converses-general}
We first identify the limits of quorum and fail-prone systems achievable by any SMR protocol under partial synchrony.
\begin{theorem}[Safety-Liveness Trade-off under Partial Synchrony]
\label{thm:psync-safety-liveness-converse}
For every SMR protocol $\PI$ that is $\Q$-live and $\B_s$-safe under partial synchrony, it holds that $\forall Q^1, Q^2 \in \Q$ and $B_s \in \B_s \colon Q^1 \cap Q^2 \not\subseteq B_s$.
\end{theorem}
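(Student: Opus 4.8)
The plan is to establish the claim by contradiction, using the classical partitioning (``split brain'') argument of \cite{DLS88} generalized from threshold adversaries to quorum and fail-prone systems. Suppose there were $Q^1, Q^2 \in \Q$ and $B_s \in \B_s$ with $Q^1 \cap Q^2 \subseteq B_s$. I would partition the validator set $\UU$ into $A = Q^1 \setminus Q^2$, $C = Q^1 \cap Q^2$, $B = Q^2 \setminus Q^1$ and $D = \UU \setminus (Q^1 \cup Q^2)$, so that $Q^1 = A \cup C$, $Q^2 = B \cup C$, and the ``overlap'' $C$ is covered by $B_s$. The goal is to exhibit an execution in which the adversary corrupts only $C$ (hence stays within $\B_s$, so that safety must hold w.o.p.) yet two clients finalize conflicting ledgers.

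I would first define an execution $E_1$ in which the network is synchronous ($\GST=0$), the validators in $Q^1 = A\cup C$ are honest, and those in $B\cup D$ are faulty and permanently silent. Since the faulty set $B\cup D$ is disjoint from the quorum $Q^1 \in \Q$, and $\PI$ is $\Q$-live (hence live precisely when some quorum avoids the corrupted set), the protocol is live (w.o.p.); so if $\Env$ inputs a single transaction $\tx$ at round $0$, then $\Tconfirm$-liveness guarantees that some client $c_1$ outputs a ledger containing $\tx$ by round $\Tconfirm$ (w.o.p.), and that ledger contains no transaction other than $\tx$. Symmetrically, execution $E_2$ makes $Q^2 = B\cup C$ honest, $A\cup D$ silent, and has $\Env$ input a distinct transaction $\tx'$, so that some client $c_2$ outputs a ledger containing $\tx'$ but not $\tx$ by round $\Tconfirm$ (w.o.p.).

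The third execution $E_3$ merges the two. The adversary corrupts exactly $C\subseteq B_s$, keeps $A$, $B$, $D$ honest, and picks some $\GST > \Tconfirm$, partitioning the network so that before $\GST$ the validators in $A$ and the client $c_1$ receive nothing from $B\cup D$, while the validators in $B$ and the client $c_2$ receive nothing from $A\cup D$. The corrupted validators in $C$ run a split brain: towards $A\cup\{c_1\}$ they emit exactly the messages that honest $C$ emitted in $E_1$, and towards $B\cup\{c_2\}$ exactly those emitted in $E_2$; meanwhile $\Env$ inputs $\tx$ on the first side and $\tx'$ on the second. A round-by-round induction then shows that, up to round $\Tconfirm$, the joint view of $A\cup C\cup\{c_1\}$ in $E_3$ equals that in $E_1$ and the joint view of $B\cup C\cup\{c_2\}$ in $E_3$ equals that in $E_2$. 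Hence, in $E_3$, $c_1$ finalizes a ledger with $\tx$ but not $\tx'$ and $c_2$ finalizes one with $\tx'$ but not $\tx$ (w.o.p.); neither ledger is a prefix of the other, so safety is violated, contradicting $\B_s$-safety since the adversarial set is $C\subseteq B_s$. This yields the theorem.

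I expect the technical crux to be making the split-brain simulation rigorous: the adversary controlling $C$ must reproduce bit-for-bit the messages that an honest $C$ would have sent in $E_1$ and $E_2$, which requires arguing that, given the protocol randomness, the behavior of the honest validators $A$ (resp. $B$) and the client $c_1$ (resp. $c_2$) is a deterministic function of the inputs and messages they receive, so that the schedule defining $E_3$ has a well-defined fixed point agreeing with $E_1$ on one side and $E_2$ on the other. One must also thread the ``w.o.p.'' quantifiers consistently through $E_1$, $E_2$ and $E_3$, and pick $\GST$ large enough for the partition to persist through confirmation (the violation then stands even after the network heals, since safety is required at all rounds). These are precisely the manipulations carried out in \cite[Theorem 4.4]{DLS88}, and the argument is the SMR specialization of the one behind Theorem~\ref{thm:safety-liveness-interchain-converse}; I would adapt that proof essentially verbatim.
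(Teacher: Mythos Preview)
Your proposal is correct and follows essentially the same split-brain/DLS partitioning argument as the paper's proof. The only cosmetic difference is that the paper bundles your set $D = \UU \setminus (Q^1 \cup Q^2)$ together with $B = Q^2 \setminus Q^1$ into a single group $P_2$ (so it uses a three-way rather than four-way partition), but this does not affect the indistinguishability argument or the conclusion.
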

\begin{theorem}[Slashable Safety-Liveness Trade-off]
\label{thm:acc-safety-liveness-converse}
For every SMR protocol $\PI$ that is $\Q$-live and $\B_a$-slashably-safe, it holds that $\forall Q^1, Q^2 \in \Q$ and $B_a \in \B_a \colon Q^1 \cap Q^2 \not\subset B_a$.
\end{theorem}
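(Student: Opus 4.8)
The plan is a split-brain / indistinguishability argument in the style of \cite[Theorem B.1]{forensics} (the accountability analogue of the quorum-intersection argument of \cite{DLS88}); it is the single-validator-set counterpart of Theorem~\ref{thm:acc-safety-liveness-interchain-converse}. Suppose for contradiction that $\PI$ is $\Q$-live and $\B_a$-slashably-safe, yet there exist $Q^1,Q^2 \in \Q$ and $B_a \in \B_a$ with $I := Q^1 \cap Q^2 \subset B_a$ (strict containment). Fix a validator $v^* \in B_a \setminus I$; in particular $I \ne B_a$.

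First I would produce two honest ``one-sided'' executions that the adversary will later stitch together. In execution $E_1$ the only corrupted validators are the crash faults $\UU \setminus Q^1$; since $(\UU \setminus Q^1) \cap Q^1 = \emptyset$, $\PI$ is live by $\Q$-liveness, so when the environment feeds a transaction $\tx$ to an honest validator some client $\client_1$ confirms a ledger containing $\tx$ on the strength of the honest participation of $Q^1$ alone, at some fixed round $\tau_1 = \poly(\lambda)$. Execution $E_2$ is the mirror image, with $Q^2$ in place of $Q^1$, a conflicting transaction $\tx'$, a client $\client_2$, and a round $\tau_2$. Then I build the \emph{merged} execution $W$: the adversary corrupts exactly the validators in $I$ (so the real adversarial set is $F = I$), leaves all other validators honest, delays every cross-group message past a $\GST$ placed beyond $\max(\tau_1,\tau_2)$, and makes each corrupted validator replay toward the $Q^1$-side the messages it sent in $E_1$ and toward the $Q^2$-side the messages it sent in $E_2$. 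A round-by-round induction shows the $Q^1$-side view in $W$ equals its view in $E_1$ and the $Q^2$-side view equals its view in $E_2$; hence $\client_1$ confirms $\tx$ and $\client_2$ confirms the conflicting $\tx'$, so safety is violated in $W$. (Under synchrony, where cross-group honest messages cannot be delayed, the same conclusion follows from the two-sibling-world variant of \cite[Theorem B.1]{forensics}: two executions with different corrupted sets whose intersection is $I$ but with identical transcripts; the eventual divergence of honest views is harmless once the clients have already output conflicting ledgers.)

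Finally I would invoke $\B_a$-slashable-safety on $W$. By Definition~\ref{def:accountable-safety}, since safety is violated the forensic protocol identifies a set that contains some $B_a' \in \B_a$ and contains no honest validator; as $F = I$ in $W$, this forces $B_a' \subseteq I$. But $v^* \in B_a \setminus I$ gives $v^* \in B_a \setminus B_a'$, so $B_a' \subset B_a$ with $B_a',B_a \in \B_a$, contradicting the antichain property $\forall B_1,B_2 \in \B_a\colon B_1 \not\subset B_2$ of fail-prone systems. Both halves of Definition~\ref{def:accountable-safety} are used here: the ``at least $B_a'$ identified'' half forces a nonempty accused set, and the ``no honest identified'' half confines that set to $I$. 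I expect the main obstacle to be the execution bookkeeping for $W$ --- formalizing the indistinguishability induction so that the corrupted validators in $I$ can simultaneously maintain two mutually inconsistent facades while the honest validators outside $I$ run bona fide protocol instances, and doing so uniformly for both the partially synchronous and the synchronous network models (the latter needing the sibling-world version). Everything else follows from $\Q$-liveness and the antichain property.
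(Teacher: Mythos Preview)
Your argument is correct and takes a genuinely different route from the paper. The paper builds four worlds: two baseline liveness worlds, then two ``sibling'' worlds (their Worlds~3 and~4) in which the adversarial set is $P_2 \cup R$ and $P_1 \cup R$ respectively, with $R = Q^1 \cap Q^2$. The paper's contradiction is that the forensic client's view is identical in Worlds~3 and~4, so the same validator in $P_2$ is accused in both---but $P_2$ is honest in World~4, violating clause~(ii) of slashable safety. (The antichain property is used only as an intermediate step, to force the accused set in World~3 to spill outside $R$.) You instead build a single merged world $W$ with adversarial set exactly $F = I$, and your contradiction is purely combinatorial: $\B_a$-slashable-safety yields $B_a' \in \B_a$ with $B_a' \subseteq I \subsetneq B_a$, directly violating the antichain axiom on $\B_a$. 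This is cleaner and shorter, and the introduction of $v^*$ is in fact unnecessary---the strict inclusion $B_a' \subset B_a$ already follows from $B_a' \subseteq I \subsetneq B_a$.

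The trade-off is network model. Your merged world $W$ needs the adversary to delay honest cross-group messages, so it works as stated only under partial synchrony; you correctly flag that synchrony requires the sibling-world variant, but you do not carry it out. The paper's four-world proof is written under synchrony (the partition is simulated by adversarial silence rather than message delay), and hence covers both models at once. So: your route buys simplicity in the partially synchronous case; the paper's route buys uniform coverage of both models without a separate case. Both are valid, and for the paper's default setting (partial synchrony unless stated otherwise) your three-execution argument is the more economical one.
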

Theorem~\ref{thm:psync-safety-liveness-converse} applies the techniques used in the proof of \cite[Theorem 4.4]{DLS88} to the setting of quorum and fail-prone systems.
Theorem~\ref{thm:acc-safety-liveness-converse} applies the techniques used in the proof of \cite[Theorem B.1]{forensics} to the setting of quorum and fail-prone systems.
The result holds under both synchrony and partial synchrony.

When all validators are not checking for the data availability of the blocks, we further require the following condition for the quorum system.
\begin{theorem}
\label{thm:general-data-limitation}
Consider an SMR protocol with a quorum system $\Q$.
Suppose only the validators in some set $\UU' \subseteq \UU$ check for the data availability of the output blocks.
Then, for all $Q \in \Q$, it holds that $Q \cap \UU' \neq \emptyset$.
\end{theorem}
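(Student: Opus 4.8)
The plan is to argue by contradiction, reducing any failure to the data-availability attack sketched in Section~\ref{sec:related-work-snap-and-chat}. Suppose there is a quorum $Q \in \Q$ with $Q \cap \UU' = \emptyset$, so that every data-availability checker lies in $\UU \setminus Q$. The first step is to show that a block can become finalized in a client's view \emph{on the strength of messages from $Q$ alone}. Consider the execution in which the validators in $Q$ are honest while all validators in $\UU \setminus Q$ are corrupted and simply stay silent. Since the adversarial set $F = \UU \setminus Q$ satisfies $F \cap Q = \emptyset$, $\Q$-liveness forces the protocol to be live, so after the environment inputs a transaction $\tx$, some client $\client$ eventually outputs a block $B \ni \tx$; as $Q$ is honest, $B$ and all its ancestors are available to $\client$. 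Let $M$ be the transcript of the messages $Q$ emits in this run; together with the data of $B$, it is everything $\client$ ever observes.

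The second step is to mount the attack in the mirror execution, in which $Q$ is corrupted and $\UU \setminus Q$ is honest. The adversary, controlling $Q$, replays $M$ verbatim toward a designated late-joining client $\client_2$ while withholding the transaction data of $B$, and uses its control of message scheduling to keep the honest messages of $\UU \setminus Q$ away from $\client_2$ until after the step below. From $\client_2$'s standpoint the messages from $Q$ are exactly as in the first execution, so $\client_2$ regards $B$ as finalized, yet being a full node it cannot yet extract $B$'s transactions. Meanwhile the adversary's $Q$-validators cooperate with the honest validators in $\UU \setminus Q$ on a conflicting transaction $\tx'$, so that an early client $\client_1$ commits an available block $B'$ with $B' \npreceq B$ and $B \npreceq B'$; the adversary then reveals the data of $B$ to $\client_2$, whose view now coincides with $\client$'s in the first execution and who therefore outputs $B$. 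Two clients hold conflicting finalized chains, contradicting safety. In the complementary branch, where $\UU \setminus Q$ alone cannot bring about any confirmation (so no such $B'$ exists), the transaction $\tx'$ input to honest validators never enters $\client_2$'s ledger, contradicting liveness. Either way the asserted security of the protocol fails, so no such $Q$ can exist.

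The main obstacle is that ``finalized in a client's view'' is protocol-specific: there need not be an explicit quorum certificate carried by $Q$, so the equivalence of the two executions as seen by $\client_2$ must be established as an \emph{indistinguishability} statement over message transcripts, in the style of the proofs of \cite[Theorem~4.4]{DLS88} and \cite[Theorem~B.1]{forensics}. The delicate point is the case split on whether $\UU \setminus Q$ contains a quorum (equivalently, whether $Q$ meets every quorum of $\Q$): in the first case the honest remainder can be steered to commit the conflicting $B'$, giving a safety violation, and in the second it makes no progress at all, giving a liveness violation. Carrying this dichotomy through while keeping the adversary's interaction with $\client_1$, $\client_2$, and $\UU \setminus Q$ mutually consistent---and justifying that withholding $B$'s data indeed leaves $\client_2$ unable to order $B$'s transactions until the reveal---is the technical heart of the argument, and it is exactly the safety-for-liveness trade-off that unavailable-but-certified blocks force on timestamping clients in Section~\ref{sec:related-work-snap-and-chat}.
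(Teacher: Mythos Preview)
Your argument has a genuine gap in the second step. In the mirror execution you corrupt all of $Q$ and leave $\UU\setminus Q$ honest, and then exhibit either a safety violation (when $\UU\setminus Q$ contains a quorum) or a liveness violation (when it does not). But neither of these is a contradiction: the protocol is only promised to be safe when the adversarial set lies in $\close(\B_s)$, and only promised to be live when some quorum is entirely honest. Nothing in the hypotheses says $Q\in\close(\B_s)$, so a safety failure with $Q$ corrupted is perfectly consistent with the protocol being $\B_s$-safe; and in the branch where $\UU\setminus Q$ contains no quorum, a liveness failure is likewise consistent with $\Q$-liveness. Your case split therefore terminates in two non-contradictions.

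The paper's proof avoids the mirror execution entirely and works in a single world where $Q$ itself is honest, so that $\Q$-liveness is in force and any liveness failure \emph{is} a contradiction. The missing idea is simply to let an adversarial proposer (outside $Q$) submit a block with a valid header but withheld transaction data; since no validator in $Q$ checks availability, the block is finalized, yet an external full-node client cannot output the resulting chain, so the honest transactions that follow never enter the client's ledger. Your first step already establishes that $Q$'s messages alone suffice for finalization; what you needed was to feed $Q$ an unavailable block in \emph{that} execution rather than replaying $Q$'s messages in a second one where no guarantee applies.
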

Proofs of the Theorems~\ref{thm:psync-safety-liveness-converse},\ref{thm:acc-safety-liveness-converse} and \ref{thm:general-data-limitation} are given in Appendix~\ref{sec:appendix-general-proofs}.
\subsection{Closing the Achievability Gap}
\label{sec:achievability-general}
\begin{theorem}
\label{thm:hotstuff}
Any tuple of quorum and fail-prone systems $(\Q$, $\B_s$, $\B_a)$ achievable by a pareto-optimal protocol under partial synchrony can be achieved by HotStuff executed with the quorums in $\Q$.
\end{theorem}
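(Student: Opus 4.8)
The plan is to execute HotStuff~\cite{yin2018hotstuff} with its threshold voting rule replaced by the quorum system $\Q$ of the given pareto-optimal protocol, and (when not all validators are full nodes) with the same subset $\UU'$ of validators checking data availability before voting. Write $\hat{\B}_s$ for the set of inclusion-maximal $B \subseteq \UU$ with $Q_1 \cap Q_2 \not\subseteq B$ for all $Q_1,Q_2 \in \Q$, and $\hat{\B}_a$ for the set of inclusion-maximal $B \subseteq \UU$ with $Q_1 \cap Q_2 \not\subset B$ for all $Q_1,Q_2 \in \Q$. I will show that HotStuff-with-$\Q$ is $\Q$-live, $\hat{\B}_s$-safe, and $\hat{\B}_a$-slashably-safe. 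Granting this, the theorem is immediate: Theorems~\ref{thm:psync-safety-liveness-converse} and~\ref{thm:acc-safety-liveness-converse} applied to the pareto-optimal protocol give $Q_1 \cap Q_2 \not\subseteq B_s$ for every $B_s \in \B_s$ and $Q_1 \cap Q_2 \not\subset B_a$ for every $B_a \in \B_a$, so each $B_s$ (resp.\ $B_a$) sits inside some set of $\hat{\B}_s$ (resp.\ $\hat{\B}_a$), \ie $\close(\hat{\B}_s) \supseteq \close(\B_s)$ and $\close(\hat{\B}_a) \supseteq \close(\B_a)$; Theorem~\ref{thm:general-data-limitation} gives $Q \cap \UU' \neq \emptyset$ for every $Q \in \Q$, which is exactly the condition under which HotStuff-with-$\Q$ still finalizes only available blocks. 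Thus HotStuff-with-$\Q$ achieves a tuple that dominates or equals $(\Q,\B_s,\B_a)$, and since HotStuff-with-$\Q$ is itself an SMR protocol while $(\Q,\B_s,\B_a)$ is pareto-optimal, the domination cannot be strict, so the two tuples coincide in closure.

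Next I would verify the three properties of HotStuff-with-$\Q$. Liveness is the standard partially-synchronous argument: if some $Q \in \Q$ consists only of honest validators then, after $\GST$, the pacemaker eventually installs a view with an honest leader (fair leader rotation together with $F \neq \UU$, which is implied), that leader extends the highest certificate it collects from $Q$, the validators of $Q$ all vote, a certificate forms, and progress continues; conversely, if every $Q \in \Q$ contains an adversarial validator, the adversary withholds those votes, no quorum's worth of votes can ever be assembled, and the ledger stays empty. For safety I would observe that the HotStuff locking argument uses the threshold only through the fact that any two certifying quorums share an honest validator; replacing ``two quorums of honest validators intersect'' by ``$Q_1 \cap Q_2 \not\subseteq F$ for all $Q_1,Q_2 \in \Q$'' — which holds exactly when $F$ lies inside some $B \in \hat{\B}_s$ — carries the proof through verbatim.

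The technically substantive step, and the one I expect to be the main obstacle, is to re-run the forensic analysis of~\cite{forensics} with the quorum system $\Q$ in place of thresholds and show that after any safety violation the forensic protocol identifies a set $B_a \in \hat{\B}_a$ with $B_a \subseteq F$ while never implicating an honest validator. In the direct-conflict case this is easy: two conflicting commit certificates come from quorums $Q_1,Q_2$ whose intersection $Q_1 \cap Q_2 \subseteq F$ equivocated and is identified, and $Q_1 \cap Q_2$ contains an inclusion-minimal quorum intersection, which one checks belongs to $\hat{\B}_a$. The delicate case is the indirect conflict, where honest validators legitimately vote across conflicting branches in different views because of the lock/unlock rule, so the forensic protocol must inspect prepare certificates and new-view messages to blame the validators that unlocked improperly; one must re-audit that case analysis using only the intersection structure of $\Q$ — not cardinalities — to confirm both that an inclusion-minimal intersection's worth of adversarial validators is always caught and that honest unlocking is never flagged. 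Once this generalization of~\cite{forensics} is in place, the remaining reasoning is the bookkeeping above.
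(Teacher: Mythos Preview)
Your proposal is correct and follows essentially the same approach as the paper: the paper's entire proof is the single sentence ``Its proof is the \emph{same} as the security proof for HotStuff~\cite{yin2018hotstuff} except that the quorums are changed to be the sets in $\Q$,'' and your plan is precisely a careful unpacking of that sentence. Your additional bookkeeping --- defining the maximal systems $\hat{\B}_s,\hat{\B}_a$ compatible with $\Q$, invoking Theorems~\ref{thm:psync-safety-liveness-converse}--\ref{thm:general-data-limitation} to bound the pareto-optimal $(\B_s,\B_a)$ from above, and then using pareto-optimality to force equality in closure --- is the right way to make the paper's pointer into an argument, and your flagging of the indirect-conflict case in the forensic analysis of~\cite{forensics} as the place requiring genuine re-verification is an honest identification of where the work lies (and is work the paper itself does not carry out either).
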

Theorem~\ref{thm:hotstuff} 
give a construction achieving all pareto-optimal SMR protocols under partial synchrony.
Its proof is the \emph{same} as the security proof for HotStuff~\cite{yin2018hotstuff} 
except that the quorums are changed to be the sets in $\Q$.
General results under a synchronous network can be found in Appendix~\ref{sec:appendix-smr}.

\subsection{Gap between the Interchain and SMR Protocols}
\label{sec:gap-interchain-smr}

The pareto-optimal interchain protocols constitute a strict subset of the pareto-optimal SMR protocols executed with the same set of validators (under partial synchrony as well).
The easiest example of the gap between them arises in the case of an interchain protocol with two constituent blockchains $\PI_0$ and $\PI_1$, each running Tendermint with $n=3f+1$ validators and a quorum of $2f+1$.
Whereas the only non-trivial pareto-optimal interchain protocol instantiated with these blockchains (and achievable by the timestamping protocol) satisfies $f$-liveness and $2f+2$-slashable-safety, the (pareto-optimal) HotStuff protocol executed by these $2n$ validators with a quorum of $4f+2$ satisfies $2f$-liveness and $2f+2$-slashable-safety, and the closure of its quorum and fail-prone systems subsume those of the interchain protocol.

\section*{Acknowledgements}
We thank Yifei Wang and Dionysis Zindros for several insightful discussions on this work. 
Ertem Nusret Tas is supported by the Stanford Center for Blockchain Research. 

\bibliographystyle{plain}
\bibliography{references}

\appendix

\section{Cross-staking Under Synchrony}
\label{sec:appendix-smr}

Limits of quorum and fail-prone systems achievable by any SMR protocol under synchrony is given below:
\begin{theorem}[Safety-Liveness Trade-off under Synchrony]
\label{thm:sync-safety-liveness-converse}
For every SMR protocol $\PI$ that is $\Q$-live and $\B_s$-safe under synchrony, it holds that $\forall Q \in \Q$ and $B_s \in \B_s \colon Q \not\subseteq B_s$.
\end{theorem}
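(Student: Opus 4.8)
The plan is to prove the contrapositive by a Dwork--Lynch--Stockmeyer style indistinguishability argument, exactly as in \cite[Theorem 4.4]{DLS88}. Suppose, for contradiction, that there exist $Q \in \Q$ and $B_s \in \B_s$ with $Q \subseteq B_s$. I would construct a PPT adversary that corrupts $B_s$ and drives two clients to output conflicting chains; since $B_s \in \B_s$, this contradicts $\B_s$-safety.

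First I would use $\Q$-liveness to manufacture two ``quorum-driven'' executions. Taking the adversarial set to be $\UU \setminus Q$, we have $(\UU \setminus Q) \cap Q = \emptyset$, so $\PI$ is live (w.o.p.) for every PPT $\mathcal{A}$. In execution $E_1$, instruct the validators in $\UU \setminus Q$ to behave exactly as honest validators would when the only environment input is a transaction $\tx$ routed to a validator in $Q$; by liveness, a client $c$ that is online only on $[0,\Tconfirm]$ outputs a chain $\chain_1$ with $\tx \in \chain_1$ and $\tx$ the only transaction present. Symmetrically, execution $E_2$ uses a conflicting transaction $\tx'$ and a transient client $c'$ that outputs $\chain_2 \ni \tx'$, so $\chain_1$ and $\chain_2$ conflict (neither is a prefix of the other). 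The role of routing inputs through $Q$ is that, from the viewpoint of everyone except the validators in $\UU \setminus Q$, both executions are driven entirely by the behavior of the quorum $Q$.

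Next I would assemble the attack execution $E_3$ with adversarial set $B_s$. Since $Q \subseteq B_s$, the adversary controls all of $Q$ together with $B_s \setminus Q$. It replays towards $c$ the messages that $B_s$ emitted in $E_1$, towards $c'$ the messages that $B_s$ emitted in $E_2$, and towards the honest validators in $\UU \setminus B_s$ the portion of the $E_1$ message flow originating from $B_s$, so that those honest validators evolve exactly as the corresponding validators did in $E_1$. Because $c$ and $c'$ are online only long enough to decide, $c$'s local view in $E_3$ is identical to its view in $E_1$, so $c$ outputs $\chain_1 \ni \tx$; the remaining task is to show $c'$ still outputs $\chain_2 \ni \tx'$ despite seeing the $E_1$-consistent (rather than $E_2$-consistent) traffic from $\UU \setminus B_s$. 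This yields a safety violation under $F = B_s \in \B_s$, the desired contradiction.

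The main obstacle is precisely this last point: the honest validators in $\UU \setminus B_s$ have one global behavior and cannot be made to look consistent with $E_1$ to $c$ and with $E_2$ to $c'$ at once, so the naive split-brain does not close. Resolving it is where synchrony enters and where the bound ends up strictly weaker than its partially synchronous analogue (Theorem~\ref{thm:psync-safety-liveness-converse}): I would argue that, since $\Q$-liveness with $F = \UU \setminus Q$ guarantees that the honest behavior of the quorum $Q$ forces every client to decide even against an arbitrarily malicious $\UU \setminus Q$, the decision each transient client reaches in $E_3$ is already pinned down by the replayed behavior of $Q$, independently of the honest validators in $\UU \setminus B_s$; and because $c$ and $c'$ go offline before those honest validators could broadcast any equivocation evidence, the outputs are irrevocable. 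Turning ``the quorum's replayed behavior forces the decision'' into a clean indistinguishability hybrid, rather than an appeal to intuition, is the crux of the proof and mirrors the corresponding step in \cite[Theorem 4.4]{DLS88}.
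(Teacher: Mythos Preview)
Your setup and identification of the obstacle are correct, but the resolution you sketch does not work as stated, and it is not the one the paper uses. The claim that ``the decision each transient client reaches in $E_3$ is already pinned down by the replayed behavior of $Q$, independently of the honest validators in $\UU \setminus B_s$'' is unjustified: $\Q$-liveness only guarantees that $c'$ outputs \emph{some} chain containing $\tx'$ when $Q$ is honest; it does not say the output is a function of $Q$'s messages alone. In your $E_3$, $c'$ sees $\UU\setminus B_s$ behaving as in $E_1$, not $E_2$, so you have no indistinguishability with $E_2$ as you defined it. Your approach is salvageable, but only by redefining $E_2$ so that the adversarial $\UU\setminus Q$ there has its $\UU\setminus B_s$ portion replay exactly its $E_1$ behaviour; liveness still forces $c'$ to output $\tx'$, and now $c'$'s views in $E_2$ and $E_3$ genuinely coincide. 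That is the clean hybrid you allude to but do not construct, and it does not appear in \cite[Theorem~4.4]{DLS88}, which relies on message delay rather than this device.

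The paper sidesteps the obstacle by a different mechanism: it \emph{time-shifts} the two executions rather than running them concurrently. Taking a second quorum $Q'$ (or $Q'=Q$) and setting $P_1=Q\setminus Q'$, $P_2=\UU\setminus Q$, $R=Q\cap Q'$, World~1 has $Q$ honest with input $\tx_1$ at round $0$ so $\client_1$ decides by $\Tconfirm$, while World~2 has $P_2\cup R\supseteq Q'$ honest with input $\tx_2$ at round $\Tconfirm+1$ so $\client_2$ decides in $(\Tconfirm,2\Tconfirm]$. In the hybrid, $B_s\supseteq Q$ is adversarial and $P_2$ is honest; during $[0,\Tconfirm]$ the adversarial $R$ split-brains, simulating World~1 toward $\client_1$ and the (still idle, pre-input) World~2 toward $P_2$ and $\client_2$; after $\Tconfirm$, World~2 proceeds. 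The single honest trajectory of $P_2$ is thus consistent with both worlds because it is idle throughout $[0,\Tconfirm]$, and in World~1 the adversarial $P_2$ was instructed to emulate precisely that idle World~2 behaviour toward $\client_1$. Time separation, not message delay, is what makes the synchronous argument close; this is the step your proposal is missing.
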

The corresponding achievability result is stated by the following theorem:
\begin{theorem}
\label{thm:sync-hotstuff}
Any tuple of quorum and fail-prone systems $(\Q$, $\B_s$, $\B_a)$ achievable by a pareto-optimal SMR protocol under synchrony can be achieved by Sync HotStuff~\cite{synchotstuff} executed with the quorums in $\Q$.
\end{theorem}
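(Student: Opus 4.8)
The plan is to transplant the proof of Theorem~\ref{thm:hotstuff} to the synchronous model, with Sync HotStuff~\cite{synchotstuff} in place of HotStuff. First I would record the converse side: by Theorem~\ref{thm:sync-safety-liveness-converse}, together with Theorems~\ref{thm:acc-safety-liveness-converse} and~\ref{thm:general-data-limitation} (both of which hold under synchrony), the quorum and fail-prone systems $(\Q,\B_s,\B_a)$ of \emph{any} pareto-optimal synchronous SMR protocol must satisfy (i) $Q\not\subseteq B_s$ for all $Q\in\Q$, $B_s\in\B_s$; (ii) $Q^1\cap Q^2\not\subset B_a$ for all $Q^1,Q^2\in\Q$, $B_a\in\B_a$; and (iii) $Q\cap\UU'\neq\emptyset$ for all $Q\in\Q$ when only the validators in $\UU'$ check data availability. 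Because the definition of pareto-optimality pins $\B_s$ and $\B_a$ down to the unique maximal fail-prone systems compatible with a fixed $\Q$ under (i)--(iii), it suffices to exhibit one protocol, parametrized only by $\Q$, that is $\Q$-live, $\B_s$-safe and $\B_a$-slashably-safe for exactly these systems.

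Next I would instantiate Sync HotStuff with its ``quorum of $n-f$ signatures'' rule replaced by ``a set of signatures whose signer set contains some $Q\in\Q$'', and its ``$f+1$ blames trigger a view change'' rule replaced by ``blames from a signer set that intersects every $Q\in\Q$''. The rest of the protocol---the $2\Delta$ commit wait, the leader rotation, the equivocation and no-progress detection---is kept verbatim. The security argument is then the original Sync HotStuff proof with the quorum-size inequalities rewritten as the set-theoretic facts above. For safety: if an honest client commits a block $B$, the committing validator waited $2\Delta$ after forwarding a certificate for $B$ without observing an equivocating certificate; any conflicting certificate would be certified by some $Q\in\Q$, which by (i) (using $F\subseteq B_s$ for some $B_s\in\B_s$) contains an honest validator who would have relayed the equivocation within $\Delta$---a contradiction. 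For slashable safety: a safety violation exposes two conflicting certificates containing quorums $Q^1,Q^2$; every validator in $Q^1\cap Q^2$ signed two conflicting blocks and is irrefutably Byzantine, no honest validator ever double-signs, and by (ii) the identified set contains some $B_a\in\B_a$, matching the claimed slashable-safety system. For liveness: $\Q$-liveness yields a fully honest $Q\in\Q$, so once a validator in $Q$ becomes leader under round-robin rotation its proposal collects votes from all of $Q$ and a certificate forms, driving progress; condition (iii) guarantees that an unavailable block can never gather a quorum, since some availability-checking validator sits in every quorum. Assembling these parts shows Sync HotStuff run with the quorums in $\Q$ realizes $(\Q,\B_s,\B_a)$.

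I expect the liveness half to be the main obstacle. Sync HotStuff's synchronous view change uses the concrete threshold ``$f+1$ blames'', which simultaneously guarantees that a stuck honest validator can assemble a blame certificate and that a faulty minority cannot frame an honest leader; replacing it with the right condition on $\Q$ requires care to keep both properties, in particular to ensure that the honest validators can always form a blame certificate precisely when liveness is supposed to hold (i.e., when some $Q\in\Q$ is honest) while no set disjoint from an honest $Q$ suffices. The safety and slashable-safety steps, by contrast, are routine once the quorum-intersection lemmas of Sync HotStuff are re-expressed through conditions (i) and (ii).
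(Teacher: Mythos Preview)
Your proposal is correct and follows the same approach as the paper: replace the threshold quorums in Sync HotStuff by the sets in $\Q$ and rerun the original security proof. In fact the paper's own treatment of Theorem~\ref{thm:sync-hotstuff} is a single sentence stating exactly this, so your write-up is already considerably more detailed than what the paper provides; your flagged concern about translating the ``$f+1$ blames'' view-change threshold into a quorum-system condition is a real implementation detail, but the paper does not engage with it either.
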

Theorem~\ref{thm:sync-hotstuff} 
gives a construction achieving all pareto-optimal SMR protocols under synchrony.
Its proof is the \emph{same} as the security proof for Sync HotStuff~\cite{synchotstuff}, except that the quorums are changed to be the sets in $\Q$.

\section{Proofs}
\label{sec:appendix-proofs}
\subsection{Proof of Theorem~\ref{thm:interchain-timestamping-security}}
\label{sec:appendix-timestamping-security}
\begin{proof}[Proof of Theorem~\ref{thm:interchain-timestamping-security}]

We first prove the theorem for $k=2$.

\textbf{Safety:}
Suppose the consumer blockchain is safe (w.o.p.) for all PPT $\mathcal{A}$.
Then, $\chain^{C,\client_1}_{r_1} \preceq \chain^{C,\client_2}_{r_2}$ or vice versa (\ie, consumer chains are prefixes) for any two clients $\client_1$ and $\client_2$ and rounds $r_1$ and $r_2$, implying that the union $\mathcal{T}_C$ of all finalized consumer blocks observed by the clients across all rounds is a chain.
In this case, for any two valid timestamps $\ckpt_i$ and $\ckpt_j$ with the consumer blocks $B_i$ and $B_j$ at the preimages of their hashes, it holds that either $B_i \preceq B_j$ or vice versa.
Hence, for any client $\client$ and round $r$, the timestamped ledger $\aux^{\client}_r$ is a chain within $\mathcal{T}_C$.
This implies that $\aux^{\client_1}_{r_1} \preceq \aux^{\client_2}_{r_2}$ or vice versa for any two clients $\client_1$ and $\client_2$ and rounds $r_1$ and $r_2$.

Next, suppose the provider blockchain is safe (w.o.p.) for all PPT $\mathcal{A}$.
Then, without loss of generality, $\chain^{P,\client_1}_{r_1} \preceq \chain^{P,\client_2}_{r_2}$ (\ie, provider header chains are prefixes) for any two clients $\client_1$ and $\client_2$ and rounds $r_1$ and $r_2$.
Let $\ckpt_i$, $i \in [m_1]$, and $\ckpt_j$, $j \in [m_2]$, $m_1 \leq m_2$, denote the sequence of valid timestamps in $\client_1$'s and $\client_2$'s views at rounds $r_1$ and $r_2$ respectively.
Note that the sequence observed by $\client_1$ is a prefix of $\client_2$'s sequence.

Starting from the genesis consumer block, let $B_1$ denote the first consumer block in $\aux^{\client_1}_{r_1}$ that is not available or finalized in $\client_2$'s view at round $r_2$, and define $i_1$ as the index of the first valid timestamp whose preimage block has $B_1$ in its prefix
(if there is no such block $B_1$, let $i_1 = \infty$).
Similarly, let $B_2$ denote the first consumer block in $\aux^{\client_2}_{r_2}$ that is not available or finalized in $\client_1$'s view at round $r_1$, and define $i_2$ as the index of the first valid timestamp whose preimage block has $B_2$ in its prefix
(if there is no such block $B_2$, let $i_2 = \infty$).

By the collision-resistance of the hash function and the security of the pre-commit signatures, for any valid timestamp $\ckpt_i$, $i \in [m_1]$, with index $i<\min(i_1,i_2)$, the condition at Line~\ref{line:btc2} of Alg.~\ref{alg.timestamping} is false for $\client_1$ if and only if it is false for $\client_2$. 
Similarly, the clients must have obtained the same timestamped ledger $\aux$ at Line~\ref{line:update} before the stalling condition is triggered at Line~\ref{line:btc2} for the client that stalls at the earlier timestamp.
Moreover $\aux \preceq \clean(\aux,\chain)$ for all ledgers $\aux$ and chains $\chain$.
Thus, if $i_1 = \infty$,  $\aux^{\client_1}_{r_1} \preceq \aux^{\client_2}_{r_2}$.
If $i_1<\infty$, then $i_2 = \infty$, due to Line~\ref{line:btc2} being triggered earlier for $\client_2$.
Thus, if $i_1<i_2$, then $\aux^{\client_2}_{r_2} \prec \aux^{\client_1}_{r_1}$, and if $i_2 \leq i_1$, then $\aux^{\client_1}_{r_1} \preceq \aux^{\client_2}_{r_2}$, concluding the safety proof.

If both constituent blockchains are not safe with non-negligible probability for some PPT $\mathcal{A}$, then two clients observing conflicting consumer and provider chains output conflicting timestamped ledgers with non-negligible probability.

\textbf{Slashable Safety:} As shown by the proof above, the timestamping protocol satisfies safety (w.o.p.) for a given set of adversarial validators iff at least \emph{one} of the constituent blockchains is safe (w.o.p.) for all PPT $\mathcal{A}$. 
Thus, if safety is violated, it must be the case that safety is violated in all of the constituent blockchains. 

\textbf{Liveness:}
Suppose both constituent blockchains are live (w.o.p.) for all PPT $\mathcal{A}$.
Then, any transaction $\tx$ input to $\PI_0$ at round $r$ appears in the finalized consumer chain in the view of all online clients, including honest validators, at all rounds $r' \geq \max(\GST,r)+T_C$.

By round $\max(\GST,r)+T_C$, each honest validator sends a valid timestamp for the finalized consumer block containing $\tx$ in its prefix.
All of these timestamps appear in the provider chains in the view of all online clients at all rounds $r' \geq \max(\GST,r)+T_C+T_P$.
Moreover, for any chain $\chain$ and ledger $\aux$, if $\tx \in \chain$, then $\tx \in \clean(\aux,\chain)$.  
Hence, for any online client $\client$ and round $r' \geq \max(\GST,r)+T_C+T_P$, $\tx \in \aux^{\client}_{\max(\GST,r)+T_C+T_P}$, concluding the liveness proof.

If one of the constituent blockchains is not live with non-negligible probability for some PPT $\mathcal{A}$, then $\mathcal{A}$ can ensure that a client always outputs an empty chain for the blockchain that is not live, thus violate the liveness of the timestamping protocol.

\textbf{Generalizing to $\mathbf{k+1}$ blockchains:}
We iteratively apply the proof for the case of $2$ blockchains to $k+1$ blockchains.
If any of the constituent blockchains (\eg, $\PI_{i^*}$) is safe (w.o.p.) for all PPT $\mathcal{A}$, then the timestamped ledger of $\PI_{i^*}$ is safe by the proof above.
By the same proof, the timestamped ledger of $\PI_{i^*-1}$ is also safe, and so on, implying the safety of the timestamped ledger of $\PI_0$, \ie the $\PI_I$ chain.
In contrast, if all blockchains are live (w.o.p.) for all PPT $\mathcal{A}$, then by the liveness proof above, the timestamped ledger of $\PI_{k-1}$ is live, and so on, implying the liveness of the timestamped ledger of $\PI_0$, \ie, the $\PI_I$ chain.
\end{proof}
\subsection{Proofs for Section~\ref{sec:interchain-protocols}}
\label{sec:appendix-interchain-proofs}
\begin{proof}[Proof of Theorem~\ref{thm:safety-liveness-interchain-converse}]
Towards contradiction, suppose there exists an interchain protocol executed using the blockchains $\PI_i$, $i \in [k]$, with validator sets $\UU^i$, quorum systems $\Q^i$ and fail-prone systems $\B^i_s$ such that $\exists Q^1, Q^2 \in \Q^I$ and $B_s \in \B^I_s \colon f_Q(Q^1) \cap f_Q(Q^2) \subseteq f_s(B_s)$.
We will consider the following worlds, and through an indistinguishability argument, show the existence of a world with a safety violation.

Let $U^1$ and $U^2$ denote the validators belonging to the chains in $f_Q(Q^1) \backslash f_Q(Q^2)$ and $f_Q(Q^2) \backslash f_Q(Q^1)$ respectively.
Let $U_3$ denote the validators belonging to the chains in $f_Q(Q^1) \cap f_Q(Q^2)$.

\paragraph{World 1:}
\paragraph{Setup.}
All messages sent by the honest validators are delivered to their recipients in the next round. 
There is a single client $\client_1$.
Validators in $Q^1$ are honest and the rest are adversarial.
The environment inputs a single transaction $\tx_1$ to the validators in $Q^1$ at round $0$.
The adversarial validators do not communicate with those in $Q^1$, and do not respond to $\client_1$.
They also ensure that for all $i \in [k]$, $i \notin f_Q(Q^1)$, all clients of $\PI_i$, including $\client_1$ and the validators in $U^1$, output empty ledgers for $\PI_i$ at all times (w.o.p.).
\paragraph{Output.}
By liveness, $\client_1$ outputs the ledger $\langle \tx_1 \rangle$ by round $\Tconfirm$ (w.o.p.).

\paragraph{World 2:}
\paragraph{Setup.}
All messages sent by the honest validators are delivered to their recipients in the next round. 
There is a single client $\client_2$.
Validators in $Q^2$ are honest and the rest are adversarial.
The environment inputs a single transaction $\tx_2$ to the validators in $Q^2$ at round $0$.
The adversarial validators do not communicate with those in $Q^2$, and do not respond to $\client_2$.
They also ensure that for all $i \in [k]$, $i \notin f_Q(Q^2)$, all clients of $\PI_i$, including $\client_2$ and the validators in $U^2$, output empty ledgers for $\PI_i$ at all times (w.o.p.).
\paragraph{Output.}
By liveness, $\client_2$ outputs the ledger $\langle \tx_2 \rangle$ by round $\Tconfirm$ (w.o.p.).

\paragraph{World 3:}
\paragraph{Setup.}
World 3 is a hybrid world. 
There are two client $\client_1$ and $\client_2$.
At round $0$, the environment inputs the transaction $\tx_1$ to the validators in $Q^1$ and $\tx_2$ to the validators in $Q^2$.
Validators in $Q^1 \backslash (U_3 \cap B_s)$ and $Q^2 \backslash (U_3 \cap B_s)$ are honest, those in $U_3 \cap B_s$ are adversarial, and the rest have crashed.
All messages from the validators in $U^1$ to $\client_2$ and $U^2$, and from those in $U^2$ to $\client_1$ and $U^1$ are delayed by the adversary until after round $\Tconfirm$.
Similarly, all messages from the validators in $Q^1 \backslash Q^2$ to $\client_2$ and $Q^2 \backslash Q^1$, and from those in $Q^2 \backslash Q^1$ to $\client_1$ and $Q^1 \backslash Q^2$ are delayed by the adversary until after round $\Tconfirm$.

The adversary ensures that for all $i \in [k]$, $i \notin f_Q(Q^1) \cup f_Q(Q^2)$, all clients of $\PI_i$, including $\client_1$ and $\client_2$ and the validators in $U^1 \cup U^2$, output empty ledgers for $\PI_i$ at all times (w.o.p.), since the validators belonging to $Q^1 \backslash Q^2$ and those belonging to $Q^2 \backslash Q^1$ on these blockchains are isolated from each other, and do not constitute a quorum by themselves.
Similarly, $\client_1$ and the validators in $U^1$ output empty ledgers for $\PI_i$, $i \in f_Q(Q^2) \backslash f_Q(Q^1)$; whereas $\client_2$ and the validators in $U^2$ output empty ledgers for $\PI_i$, $i \in f_Q(Q^1) \backslash f_Q(Q^2)$, until round $\Tconfirm$.
As $f_Q(Q^1) \cap f_Q(Q^2) \subseteq f_s(B_s)$, the adversarial validators in $U_3 \cap B_s$ emulate a split-brain attack via a safety violation on the protocols $\PI_i$, $i \in f_Q(Q^1) \cap f_Q(Q^2)$: 
One brain simulates the execution in world 1 towards $\client_1$ and the validators in $U^1 \cap Q^1$ with transaction $\tx_1$, and the other simulates the execution in world 2 towards $\client_2$ and the validators in $U^2 \cap Q^2$ with transaction $\tx_2$.

\paragraph{Output.}
The worlds 1 and 3 are indistinguishable in $\client_1$'s view and the views of the honest validators in $U^1 \cap Q^1$ (w.o.p.).
Hence, $\client_1$ outputs the ledger $\langle \tx_1 \rangle$ by round $\Tconfirm$.
Similarly, the worlds 2 and 3 are indistinguishable in $\client_2$'s view and the views of the honest validators in $U^2 \cap Q^2$ (w.o.p.).
Hence, $\client_2$ outputs the ledger $\langle \tx_2 \rangle$ by round $\Tconfirm$. 
However, this implies a safety violation when only the validators in $U_3 \cap B_s$ are adversarial, which is a contradiction.
\end{proof}
\begin{proof}[Proof of Theorem~\ref{thm:acc-safety-liveness-interchain-converse}]
Proof follows the outline of the proof of \cite[Theorem B.1]{forensics}.
Towards contradiction, suppose there exists an interchain protocol executed using the blockchains $\PI_i$, $i \in [k]$, with validator sets $\UU^i$, quorum systems $\Q^i$ and fail-prone systems $\B^i_a$ such that $\exists Q^1, Q^2 \in \Q^I$ and $B_a \in \B^I_a \colon f_Q(Q^1) \cap f_Q(Q^2) \subset f_a(B_a)$.
We will consider the following worlds, and through an indistinguishability argument, show the existence of a world where a client identifies an honest validator as a protocol violator.
We assume a synchronous network throughout the following worlds.

Let $U^1$ and $U^2$ denote the validators belonging to the chains in $f_Q(Q^1) \backslash f_Q(Q^2)$ and $f_Q(Q^2) \backslash f_Q(Q^1)$ respectively.
Let $U_3$ denote the validators belonging to the chains in $f_Q(Q^1) \cap f_Q(Q^2)$.
Suppose $f_Q(Q^1) \cup f_Q(Q^2) = [k]$ (if not, we can redefine $Q^2$ to be a large enough quorum to cover the blockchains not in the union).

\paragraph{World 1:}
\paragraph{Setup.}
There is a single client $\client_1$.
Validators in $U^1 \cup Q^1$ are honest and the rest are adversarial.
The environment inputs a single transaction $\tx_1$ to the validators in $U^1$ at round $0$.
The adversarial validators do not communicate with those in $U^1 \cup Q^1$, and do not respond to $\client_1$.
They also ensure that for all $i \in [k]$, $i \notin f_Q(Q^1)$, all clients of $\PI_i$, including $\client_1$ and the validators in $U^1$, output empty ledgers for $\PI_i$ at all times (w.o.p.).
\paragraph{Output.}
By liveness, $\client_1$ outputs the ledger $\langle \tx_1 \rangle$ by round $\Tconfirm$ (w.o.p.).

\paragraph{World 2:}
There is a single client $\client_2$.
Validators in $U^2 \cup Q^2$ are honest and the rest are adversarial.
The environment inputs a single transaction $\tx_2$ to the validators in $U^2$ at round $0$.
The adversarial validators do not communicate with those in $U^2 \cup Q^2$, and do not respond to $\client_2$.
They also ensure that for all $i \in [k]$, $i \notin f_Q(Q^2)$, all clients of $\PI_i$, including $\client_2$ and the validators in $U^2$, output empty ledgers for $\PI_i$ at all times (w.o.p.).
\paragraph{Output.}
By liveness, $\client_2$ outputs the ledger $\langle \tx_2 \rangle$ by round $\Tconfirm$ (w.o.p.).

\paragraph{World 3:}
\paragraph{Setup.}
There is a single client $\client_1$.
Validators in $U^1$ are honest and the rest are adversarial.
At round $0$, the environment inputs the transaction $\tx_1$ to the validators in $U^1$ and $\tx_2$ to the validators in $U^2$.

The adversarial validators in $U^2$ do not communicate with $\client_1$ and the honest validators in $U^1$, and ignore all messages sent by them.
The adversary ensures that $\client_1$ and the honest validators in $U^1$ output empty ledgers for $\PI_i$, $i \notin f_Q(Q^1)$.
The adversarial validators in $U_3$ emulate a split-brain attack via a safety violation in the blockchains $\PI_i$, $i \in f_Q(Q^1) \cap f_Q(Q^2)$: 
One brain simulates the execution in world 1 towards $\client_1$ and the honest validators in $U^1$ with transaction $\tx_1$, and the other simulates the execution in world 2 with the adversarial validators in $U^2$ and transaction $\tx_2$.

\paragraph{Output.}
The worlds 1 and 3 are indistinguishable in $\client_1$'s view and the views of the honest validators in $U^1$ (w.o.p.).
Hence, $\client_1$ outputs the ledger $\langle \tx_1 \rangle$ by round $\Tconfirm$.
Since the adversarial validators in $U^2$ simulate the execution in world 2, by emulating the client of world 2, an adversarial validator outputs the ledger $\langle \tx_2 \rangle$ by round $\Tconfirm$.
Since this is a safety violation, a new client $c_3$ asks the validators for their transcripts, upon which the adversarial validators in $U^2$ reply with transcripts that omit the messages received from those in $U^1$.
The client $\client_3$ then runs the forensic protocol with these transcripts. 
Since $f_Q(Q^1) \cap f(Q^2) \subset f_a(B_a)$, $\client_3$ outputs a proof that identifies at least one adversarial validator in $U^2$.

\paragraph{World 4:}
\paragraph{Setup.}
There is a single client $\client_2$.
Validators in $U^2$ are honest and the rest are adversarial.
At round $0$, the environment inputs the transaction $\tx_2$ to the validators in $U^1$ and $\tx_2$ to the validators in $U^2$.

The adversarial validators in $U^1$ do not communicate with $\client_2$ and the honest validators in $U^2$, and ignore all messages sent by them.
The adversary ensures that $\client_2$ and the honest validators in $U^2$ output empty ledgers for $\PI_i$, $i \notin f_Q(Q^2)$.
The adversarial validators in $U_3$ emulate a split-brain attack via a safety violation in the blockchains $\PI_i$, $i \in f_Q(Q^1) \cap f_Q(Q^2)$: 
One brain simulates the execution in world 2 towards $\client_2$ and the honest validators in $U^2$ with transaction $\tx_2$, and the other simulates the execution in world 1 with the adversarial validators in $U^1$ and transaction $\tx_1$.

\paragraph{Output.}
As $f_Q(Q^1) \cap f_Q(Q^2) \subseteq f_s(B_s)$, the worlds 2 and 4 are indistinguishable in $\client_2$'s view and the views of the honest validators in $U^2$ (w.o.p.).
Hence, $\client_2$ outputs the ledger $\langle \tx_2 \rangle$ by round $\Tconfirm$.
Since the validators in $U^1$ simulate the execution in world 1, by emulating the client of world 1, an adversarial validator outputs the ledger $\langle \tx_1 \rangle$ by round $\Tconfirm$.
Since this is a safety violation, a new client $c_3$ asks the validators for their transcripts, upon which the adversarial validators in $U^1$ reply with transcripts that omit the messages received from those in $U^2$.
Note that the worlds 3 and 4 (and the transcripts received therein) are indistinguishable in the view of $\client_3$ (w.o.p.).
Thus, upon running the forensic protocol with these transcripts, $\client_3$ outputs a proof that identifies at least one validator in $U^2$.
However, the validators in $U^2$ are honest in world 4, which is a contradiction.

\begin{center}
    ***
\end{center}

Finally, we prove the second part of the theorem by contradiction. 
Suppose $\exists B_a \in \B_a, j \in [k] \colon \exists B^j_a \in \B^j_a, B_a \cap \UU^j \supset B^j_a$.
Consider the world, where the validators in the set $B_a$ are adversarial and cause a safety violation in $\PI_I$.
Since the validators from $\PI_j$ are irrefutably identified by a forensic protocol after this safety violation, there must also have been a safety violation in $\PI_j$.

Now, consider a world where the validators in $B'_a = B^j_a \cup (B_a \backslash \UU^j)$ are adversarial.
In this new world, the validators in $(B_a \backslash \UU^j)$ emulate their execution from the previous world.
The validators in $B^j_a$ again cause a safety violation in $\PI_j$ in the same way as in the previous world.
Hence, there is again a safety violation in $\PI_I$.
However, the set $B'_a$ of adversarial validators is a subset of $B_a$, implying that the set of adversarial validators identified by the forensic protocol has to be a strict subset of $B_a$.
Thus, $B'_a \in \B_a$ and $B'_a \subset B_a$, which is a contradiction.
\end{proof}
\begin{proof}[Proof of Theorem~\ref{thm:data-limitation}]
Towards contradiction, suppose $\exists Q \in \Q^I \colon \forall j \in [k'], i_j \notin f_Q(Q)$.
Consider the execution, where all validators in $Q$ belonging to the chains in $f_Q(Q)$ are honest and the rest have crashed.
Note that these validators can still finalize $\PI_I$ blocks since the adversary can ensure that all chains not in $f_Q(Q)$ output empty ledgers in the views of all clients even when all validators in $Q$ are honest.
Since these validators do not check for the availability of the proposed blocks, an adversarially proposed valid, yet unavailable block can be output as part of the chain of an honest validator.
However, in this case, the external clients checking for the data and receiving this chain lose liveness, implying that the protocol $\PI_I$ cannot satisfy liveness even when all validators in $Q$ belonging to the chains in $f_Q(Q)$ are honest.
This is a contradiction.
\end{proof}
\begin{proof}[Proof of Theorem~\ref{thm:optimality-of-trustboost}]
Consider an upper-boundary point $(\Q,\B_s,\B_a)$ for the $k$ blockchains $\PI_i$, $i \in [k]$ and a Trustboost protocol \cite{trustboost} based on HotStuff \cite{yin2018hotstuff}, with quorums determined by $\DQ = F_Q(\Q)$ and consisting of these $k$ blockchains.
Let $(\DQ,\D_s,\D_a)$ denote the upper-boundary property point determined by $\DQ$.

In Trustboost, each blockchain $\PI_i$ simulates a validator of HotStuff via a smart contract, and the validators simulated by the chains exchange messages via the CCC abstraction.
In this paradigm, a blockchain with safety corresponds to a validator that never commits a violation of the HotStuff protocol rules such that their violation would provably identify the validator as a protocol violator (\cf \cite{yin2018hotstuff,forensics} for a list of these rules).
An example of such violations is signing and broadcasting prepare votes for two conflicting proposal within the same view. 
Then, the HotStuff instance run by these validators provides $\D_a$-slashable safety by a generalization of HotStuff's accountability proof in \cite{snapandchat,forensics}.
As accountable safety implies safety, HotStuff also satisfies safety (w.o.p., for all PPT $\mathcal{A}$) when all chains in a set within $\D_s$ are safe (w.o.p., for all PPT $\mathcal{A}$).

On the other hand, a blockchain with liveness corresponds to a validator that emulates an honest validator in at least one execution trace of HotStuff (note that this validator might participate in conflicting execution traces simultaneously, \eg, by double-signing blocks).
Thus, when the chains in a set within $\DQ$ are all live, there exists an execution trace of HotStuff such that an implementation with only that execution trace would constitute a secure protocol.
Thus, HotStuff satisfies liveness (w.o.p., for all PPT $\mathcal{A}$) when the chains in a set within $\DQ$ are all live (w.o.p., for all PPT $\mathcal{A}$).

Since the tuple of property systems $(\DQ,\D_s,\D_a)$ satisfied by the Trustboost protocol above is an upper-boundary property point, by Theorem~\ref{thm:interchain-quorum-optimality-2}, its quorum and fail-prone systems is the upper-boundary point $(\Q, \B_s, \B_a)$ such that $\DQ = f_Q(\Q)$, $\D_s = f_s(\B_s)$ and $\D_a = f_a(\B_a)$.
Hence, for any positive integer $k$, all upper-boundary points for interchain protocols with $k$ blockchains can be achieved by a Trustboost protocol instantiated with HotStuff and the same blockchains.
Then, again by Theorem~\ref{thm:interchain-quorum-optimality-2}, quorum and fail-prone systems of all pareto-optimal interchain protocols are upper-boundary points and can be achieved by a Trustboost protocol instantiated with HotStuff and the same blockchains.
\end{proof}
\subsection{Proof of Theorem~\ref{thm:interchain-quorum-optimality-2}}
\label{sec:appendix-theorem-opt-2}
\begin{lemma}
\label{lem:interchain-quorum-pareto-optimality-1}
Consider an upper-boundary point with quorum system $\Q^I$ for interchain protocols executed using the blockchains $\PI_i$, $i \in [k]$, with validator sets $\UU^i$ and quorum systems $\Q^i$.
Then, for all $Q \in \Q^I, i \in [k]$, it holds that either $Q \cap \UU^i \in \Q^i$ or $Q \cap \UU^i = \emptyset$.
Moreover, if $\exists Q \in \Q^I, j \in [k] \colon Q \cap \UU^j \in \Q^j$, then $\forall Q' \in \Q^j \colon Q' \cup (Q \backslash \UU^j) \in \Q^I$.
\end{lemma}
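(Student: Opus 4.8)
The plan is to reduce both statements to a single auxiliary claim about ``valid-looking'' sets and then exploit the maximality built into the definition of an upper-boundary point (Definition~\ref{def:interchain-optimality}). Call $S\subseteq\UU^I$ \emph{valid-looking} if $S=\bigcup_{i\in D}S_i$ for some $D\in f_Q(\Q^I):=\{f_Q(Q):Q\in\Q^I\}$ and some choice of $S_i\in\Q^i$ for $i\in D$; since the $\UU^i$ are pairwise disjoint, this forces $S\cap\UU^i=S_i$ for $i\in D$, $S\cap\UU^i=\emptyset$ otherwise, and $f_Q(S)=D$. The first thing I would prove is: \emph{every valid-looking $S$ satisfies $S\in\close(\Q^I)$}. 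Arguing by contradiction, suppose no quorum of $\Q^I$ is contained in $S$, and let $\Q''$ be the antichain of minimal elements of $\Q^I\cup\{S\}$; then $S\in\Q''$, and since passing to minimal elements leaves the closure unchanged, $\close(\Q'')=\close(\Q^I)\cup\close(\{S\})\supsetneq\close(\Q^I)$. Every member of $\Q''$ is either in $\Q^I$ or equals $S$, and $f_Q(S)=D=f_Q(Q_0)$ for some $Q_0\in\Q^I$; hence $\{f_Q(Q):Q\in\Q''\}\subseteq\{f_Q(Q):Q\in\Q^I\}$. Because the conditions of Theorems~\ref{thm:safety-liveness-interchain-converse},~\ref{thm:acc-safety-liveness-interchain-converse},~\ref{thm:data-limitation} refer to quorums \emph{only} through the sets $f_Q(\cdot)$ (and through $\B^I_s,\B^I_a$, which are unchanged), they still hold for $(\Q'',\B^I_s,\B^I_a)$, so this tuple dominates $(\Q^I,\B^I_s,\B^I_a)$ while satisfying the same theorems --- contradicting that $(\Q^I,\B^I_s,\B^I_a)$ is an upper-boundary point.

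For Part 1, I would fix $Q\in\Q^I$, $i\in[k]$ with $Q\cap\UU^i\notin\Q^i\cup\{\emptyset\}$ and build a valid-looking \emph{proper} subset of $Q$, contradicting that $\Q^I$ is an antichain. For each $\ell\in f_Q(Q)$ pick $S_\ell\in\Q^\ell$ with $S_\ell\subseteq Q\cap\UU^\ell$, and if $i\in f_Q(Q)$ take $S_i\subsetneq Q\cap\UU^i$ (possible: since $i\in f_Q(Q)$, $Q\cap\UU^i$ contains some member of $\Q^i$, and by assumption it is not itself such a member, so the containment is strict). Put $\hat Q:=\bigcup_{\ell\in f_Q(Q)}S_\ell$. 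Then $\hat Q$ is valid-looking, $f_Q(\hat Q)=f_Q(Q)$, and $\hat Q\subsetneq Q$ (witnessed at $\UU^i$: if $i\in f_Q(Q)$ then $\hat Q\cap\UU^i=S_i\subsetneq Q\cap\UU^i$; if $i\notin f_Q(Q)$ then $\hat Q\cap\UU^i=\emptyset\ne Q\cap\UU^i$). The auxiliary claim gives $\hat Q\in\close(\Q^I)$, so some $\bar Q\in\Q^I$ has $\bar Q\subseteq\hat Q\subsetneq Q$, contradicting that $\Q^I$ is an antichain.

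For Part 2, given $Q\in\Q^I$, $j\in[k]$ with $Q\cap\UU^j\in\Q^j$ (so $j\in f_Q(Q)$), and $Q'\in\Q^j$, I would use Part 1 to write $Q=\bigcup_{i\in f_Q(Q)}(Q\cap\UU^i)$ with each $Q\cap\UU^i\in\Q^i$, whence $S:=Q'\cup(Q\setminus\UU^j)=\bigcup_{i\in f_Q(Q)}S_i$ with $S_j=Q'$ and $S_i=Q\cap\UU^i$ for $i\ne j$. So $S$ is valid-looking with $f_Q(S)=f_Q(Q)$, and the auxiliary claim gives $\tilde Q\in\Q^I$ with $\tilde Q\subseteq S$. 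It then remains to show $\tilde Q=S$: by Part 1 applied to $\tilde Q$ together with the antichain property of each $\Q^i$, the inclusions $\tilde Q\cap\UU^i\subseteq S\cap\UU^i=S_i$ force $\tilde Q\cap\UU^i\in\{S_i,\emptyset\}$ for every $i$, so $f_Q(\tilde Q)\subseteq f_Q(Q)$ and $\tilde Q=\bigcup_{i\in f_Q(\tilde Q)}S_i$. If $\tilde Q\subsetneq S$ then $f_Q(\tilde Q)\subsetneq f_Q(Q)$, and $\hat Q:=\bigcup_{i\in f_Q(\tilde Q)}(Q\cap\UU^i)$ is a valid-looking proper subset of $Q$; the auxiliary claim then yields $\bar Q\in\Q^I$ with $\bar Q\subseteq\hat Q\subsetneq Q$, contradicting the antichain property. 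Hence $\tilde Q=S$, \ie $Q'\cup(Q\setminus\UU^j)\in\Q^I$.

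I expect the auxiliary claim to be the main obstacle: the delicate point is verifying that the perturbed tuple $(\Q'',\B^I_s,\B^I_a)$ genuinely satisfies Theorems~\ref{thm:safety-liveness-interchain-converse}--\ref{thm:data-limitation}. That verification rests on the single observation that each of those three conditions depends on a quorum only via its image under $f_Q$, combined with the routine bookkeeping that passing to minimal elements preserves the closure and that disjointness of the $\UU^i$ makes every quorum decompose as the disjoint union of its intersections with the $\UU^i$. Once the auxiliary claim is in hand, Parts 1 and 2 are straightforward ``shrink to a valid-looking proper subset and contradict the antichain property'' arguments.
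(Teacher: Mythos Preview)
Your proof is correct and rests on the same key observation as the paper's --- that the constraints of Theorems~\ref{thm:safety-liveness-interchain-converse}--\ref{thm:data-limitation} depend on a quorum $Q$ only through $f_Q(Q)$, so one may enlarge $\close(\Q^I)$ without violating them as long as the set of $f_Q$-images does not grow --- but the two proofs are organized differently. The paper argues each clause by an ad-hoc perturbation: for Part~1 it replaces the offending $Q$ by $Q\setminus\UU^j$ or by $(Q\setminus\UU^j)\cup Q^j$ for some $Q^j\in\Q^j$ with $Q^j\subset Q\cap\UU^j$, and for Part~2 it simply adjoins $Q^j\cup(Q\setminus\UU^j)$ to $\Q^I$. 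You instead distill the perturbation step into a single auxiliary claim (every ``valid-looking'' set lies in $\close(\Q^I)$) and then reduce both parts to ``build a valid-looking proper subset of $Q$, apply the claim, contradict the antichain property of $\Q^I$.'' This buys you a uniform argument and, more substantively, a tighter Part~2: the paper's adjunction argument yields a strict inclusion $\close(\Q^I)\subsetneq\close(\Q'^I)$ only when $S=Q'\cup(Q\setminus\UU^j)\notin\close(\Q^I)$, and it does not treat the case where $S$ already contains some $\tilde Q\in\Q^I$ with $\tilde Q\subsetneq S$; your extra step (showing $f_Q(\tilde Q)\subsetneq f_Q(Q)$ would force a valid-looking proper subset of $Q$) closes exactly that case and delivers the literal membership $S\in\Q^I$ the lemma asserts.
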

\begin{proof}[Proof of Lemma~\ref{lem:interchain-quorum-pareto-optimality-1}]
Towards contradiction, suppose $\exists Q \in \Q^I, j \in [k] \colon Q \cap \UU^j \neq \emptyset$ and $Q \cap \UU^j \notin \Q^i$ for an upper-boundary point $P = (\Q^I,\B^I_s,\B^I_a)$.

Suppose $\exists Q^j \in \Q^j \colon Q \cap \UU^j \subset Q^j$.
Consider the point $P'$ with the same quorum and fail-prone systems as $P$ except that in place of the set $Q$, it has the set $Q' = Q \backslash \UU^j$ in its quorum system.
Since $P$ satisfies Theorems~\ref{thm:safety-liveness-interchain-converse} and~\ref{thm:acc-safety-liveness-interchain-converse}, and $f_Q(Q) = f_Q(Q')$, $P'$ also satisfies the same theorems.
As $Q' \subset Q$, $P$ cannot be an upper-boundary point.

Next, suppose $\exists Q^j \in \Q^j \colon Q \cap \UU^j \supset Q^j$.
Consider the point $P'$ with the same quorum and fail-prone systems as $P$ except that instead of the set $Q$, it has the set $Q' = (Q \backslash \UU^j) \cup Q^j$ in its quorum system.
Since $P$ satisfies Theorems~\ref{thm:safety-liveness-interchain-converse} and~\ref{thm:acc-safety-liveness-interchain-converse}, and $f_Q(Q) = f_Q(Q')$, $P'$ also satisfies the same theorems.
As $Q' \subset Q$, $P$ cannot be an upper-boundary point.

Finally, suppose $\exists Q \in \Q^I, j \in [k] \colon Q \cap \UU^j \in \Q^j$, yet $\exists Q^j \in \Q^j \colon Q^j \cup (Q \backslash \UU^j) \notin \Q^I$.
Consider the point $P'$ with the same quorum and fail-prone systems as $P$ except that $Q^j \cup (Q \backslash \UU^j) = Q' \in \Q'^I$ for the quorum system $\Q'^I$ of that protocol.
Since $\PI_I$ satisfies Theorems~\ref{thm:safety-liveness-interchain-converse} and~\ref{thm:acc-safety-liveness-interchain-converse}, and $f_Q(Q) = f_Q(Q')$, $\PI'_I$ also satisfies the same theorems.
As $\close(\Q^I) \subset \close(\Q'^I)$, $P$ cannot be an upper-boundary point.
\end{proof}
\begin{lemma}
\label{lem:interchain-quorum-pareto-optimality-safety}
Consider an upper-boundary point with fail-prone system $\B_s^I$ for interchain protocols executed using the blockchains $\PI_i$, $i \in [k]$, with validator sets $\UU^i$ and fail-prone systems $\B_s^i$.
Then, for all $B_s \in \B_s^I, i \in [k]$, it holds that either $B_s \cap \UU^i \in \B_s^i$ or $B_s \cap \UU^i = \UU^i$.
Moreover, if $\exists B_s \in \B_s^I, j \in [k] \colon B_s \cap \UU^j \in \B_s^j$, then $\forall B'_s \in \B_s^j \colon B'_s \cup (B_s \backslash \UU^j) \in \B_s^I$.
\end{lemma}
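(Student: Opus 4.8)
The plan is to follow the proof of Lemma~\ref{lem:interchain-quorum-pareto-optimality-1} almost verbatim, trading the quorum system $\Q^I$ for the safety fail-prone system $\B^I_s$ and the function $f_Q$ for $f_s$. Fix an upper-boundary point $P = (\Q^I,\B^I_s,\B^I_a)$. I will repeatedly use that: (i) by Definition~\ref{def:interchain-optimality}, $P$ satisfies Theorems~\ref{thm:safety-liveness-interchain-converse},~\ref{thm:acc-safety-liveness-interchain-converse} and~\ref{thm:data-limitation} and is dominated by no other tuple satisfying them; (ii) Theorems~\ref{thm:acc-safety-liveness-interchain-converse} and~\ref{thm:data-limitation} constrain only $\Q^I$ and $\B^I_a$, so replacing $\B^I_s$ alone keeps them intact and only Theorem~\ref{thm:safety-liveness-interchain-converse} (which involves $f_s$) must be rechecked; and (iii) the $\UU^i$ are pairwise disjoint, so $f_s(F)$ is determined chain-by-chain from the restrictions $F\cap\UU^i$ and is monotone in $F$. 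In each part I will modify $\B^I_s$ at one set to produce a fail-prone system $\B'^I_s$ which preserves Theorem~\ref{thm:safety-liveness-interchain-converse} but has $\close(\B'^I_s)\supsetneq\close(\B^I_s)$, so that $P'=(\Q^I,\B'^I_s,\B^I_a)$ dominates $P$ — a contradiction.

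For the first part, suppose towards a contradiction that $B_s\in\B^I_s$ and $j\in[k]$ satisfy $B_s\cap\UU^j\notin\B^j_s$ and $B_s\cap\UU^j\neq\UU^j$. I pick a strict superset $\widehat B^j\supsetneq B_s\cap\UU^j$: if some $B\in\B^j_s$ contains $B_s\cap\UU^j$ set $\widehat B^j:=B$ (strict, since $B_s\cap\UU^j\notin\B^j_s$); otherwise set $\widehat B^j:=\UU^j$, and note that then $\UU^j\notin\B^j_s$ (else $\B^j_s=\{\UU^j\}$ and $B_s\cap\UU^j$ would already lie in $\B^j_s$). Replace $B_s$ by $B'_s:=(B_s\setminus\UU^j)\cup\widehat B^j$. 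By disjointness $B'_s\cap\UU^i=B_s\cap\UU^i$ for $i\neq j$, and $B'_s\cap\UU^j$ lies in $\close(\B^j_s)$ exactly when $B_s\cap\UU^j$ does, so $f_s(B'_s)=f_s(B_s)$ and Theorem~\ref{thm:safety-liveness-interchain-converse} is preserved. Finally $B'_s\notin\close(\B^I_s)$, since $B'_s\subseteq\widetilde B\in\B^I_s$ would force $B_s\subsetneq B'_s\subseteq\widetilde B$, contradicting that $\B^I_s$ is an antichain; hence the closure strictly grows and $P'$ dominates $P$.

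For the ``moreover'' part, suppose towards a contradiction that $B_s\in\B^I_s$, $j\in[k]$ with $B_s\cap\UU^j\in\B^j_s$, and some $B'_s\in\B^j_s$ with $\widehat B:=B'_s\cup(B_s\setminus\UU^j)\notin\B^I_s$. Adjoin $\widehat B$ to $\B^I_s$ and keep the maximal sets. Since $\widehat B\cap\UU^j=B'_s\in\B^j_s$ and $B_s\cap\UU^j\in\B^j_s$, both $j\notin f_s(\widehat B)$ and $j\notin f_s(B_s)$, while $\widehat B$ agrees with $B_s$ off $\UU^j$, so $f_s(\widehat B)=f_s(B_s)$; hence Theorem~\ref{thm:safety-liveness-interchain-converse} continues to hold for $\widehat B$ (and, by monotonicity of $f_s$, for the other sets newly admitted into the closure). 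If $\widehat B\notin\close(\B^I_s)$ the closure strictly grows and we are done. Otherwise $\widehat B\subsetneq\widetilde B$ for some maximal $\widetilde B\in\B^I_s$; here I would invoke the first part of the lemma on $\widetilde B$ to force $\widetilde B\cap\UU^i\in\B^i_s$ or $\widetilde B\cap\UU^i=\UU^i$ on every chain, then combine this with the antichain property and with $\widehat B\cap\UU^j=B'_s\in\B^j_s$ and $\widehat B\cap\UU^i=B_s\cap\UU^i$ ($i\neq j$) to deduce $\widetilde B=\widehat B$, contradicting $\widehat B\notin\B^I_s$.

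I expect this last sub-case — ruling out $\widehat B\in\close(\B^I_s)\setminus\B^I_s$ — to be the only real obstacle; the analogous step in the proof of Lemma~\ref{lem:interchain-quorum-pareto-optimality-1} is dispatched in one line (``As $\close(\Q^I)\subset\close(\Q'^I)$''), and making it airtight for fail-prone systems needs the chain-by-chain bookkeeping sketched above together with the first part of the present lemma. The remaining ingredients — invariance of $f_s$ under the surgery, the fact that Theorems~\ref{thm:acc-safety-liveness-interchain-converse} and~\ref{thm:data-limitation} are untouched, and monotonicity of $f_s$ — are routine.
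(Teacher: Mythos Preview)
Your approach is the same as the paper's: for each failure of the claimed structure, modify $\B^I_s$ at one set so that $f_s$ is unchanged (hence Theorem~\ref{thm:safety-liveness-interchain-converse} is preserved, while Theorems~\ref{thm:acc-safety-liveness-interchain-converse} and~\ref{thm:data-limitation} are untouched) but the closure strictly grows, contradicting upper-boundariness. Your case analysis in the first part is in fact tighter than the paper's --- you handle the ``incomparable'' situation ($B_s\cap\UU^j$ neither contained in nor containing any element of $\B^j_s$) explicitly by passing to $\UU^j$, whereas the paper only writes out the two comparable sub-cases.

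You are also right that the delicate point is the ``moreover'' clause when $\widehat B\in\close(\B^I_s)\setminus\B^I_s$; the paper dispatches this with the one-liner ``As $\close(\B_s^I)\subset\close(\B_s^{'I})$'', which is exactly the step you worry about. Your proposed fix, however, does not go through as stated: you cannot deduce $\widetilde B=\widehat B$ from the first part plus the antichain property alone. Concretely, on a chain $i\neq j$ one may have $\widehat B\cap\UU^i=B_s\cap\UU^i\in\B^i_s$ while $\widetilde B\cap\UU^i=\UU^i$, which is perfectly consistent with the first part applied to $\widetilde B$ and with $\widehat B\subsetneq\widetilde B$. What the first part together with the antichain property of $\B^i_s$ does give you is that any chain $i$ on which the containment $\widehat B\cap\UU^i\subsetneq\widetilde B\cap\UU^i$ is strict must satisfy $\widetilde B\cap\UU^i=\UU^i$; from there the natural move is to enlarge $B_s$ itself (not $\widehat B$) on such a chain and argue domination using $\widetilde B$ as the witness that Theorem~\ref{thm:safety-liveness-interchain-converse} still holds, rather than trying to force $\widetilde B=\widehat B$. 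So: same strategy as the paper, correct diagnosis of where the work is, but the specific bookkeeping you sketch for that sub-case needs to be redone.
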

\begin{proof}[Proof of Lemma~\ref{lem:interchain-quorum-pareto-optimality-safety}]
Towards contradiction, suppose $\exists B_s \in \B_s^I, j \in [k] \colon B_s \cap \UU^j \neq \UU^j$ and $B_s \cap \UU^j \notin \B_s^i$ for an upper-boundary point $P = (\Q^I,\B^I_s,\B^I_a)$.

Suppose $\exists B_s^j \in \B_s^j \colon B_s \cap \UU^j \subset B_s^j$.
Consider the point $P'$ with the same quorum and fail-prone systems as $P$ except that in place of the set $B_s$, it has the set $B'_s = B_s \cup B_s^j$ in its quorum system.
Since $P$ satisfies Theorems~\ref{thm:safety-liveness-interchain-converse} and~\ref{thm:acc-safety-liveness-interchain-converse}, and $f_s(B_s) = f_s(B'_s)$, $P'$ also satisfies the same theorems.
As $B_s \subset B'_s$, $P$ cannot be an upper-boundary point.

Next, suppose $\exists B_s^j \in \B_s^j \colon B_s \cap \UU^j \supset B_s^j$.
Consider the point $P'$ with the same quorum and fail-prone systems as $P$ except that instead of the set $B_s$, it has the set $B'_s = B_s \cup \UU^j$ in its quorum system.
Since $P$ satisfies Theorems~\ref{thm:safety-liveness-interchain-converse} and~\ref{thm:acc-safety-liveness-interchain-converse}, and $f_s(B_s) = f_s(B'_s)$, $P'$ also satisfies the same theorems.
As $B_s \subset B'_s$, $P$ cannot be an upper-boundary point.

Finally, suppose $\exists B_s \in \B_s^I, j \in [k] \colon B_s \cap \UU^j \in \B_s^j$, yet $\exists B_s^j \in \B_s^j \colon B_s^j \cup (B_s \backslash \UU^j) \notin \B_s^I$.
Consider the point $P'$ with the same quorum and fail-prone systems as $P$ except that $B_s^j \cup (B_s \backslash \UU^j) = B'_s \in \B_s^{'I}$ for the fail-prone system $\B_s^{'I}$ of that protocol.
Since $\PI_I$ satisfies Theorems~\ref{thm:safety-liveness-interchain-converse} and~\ref{thm:acc-safety-liveness-interchain-converse}, and $f_s(B_s) = f_s(B'_s)$, $\PI'_I$ also satisfies the same theorems.
As $\close(\B_s^I) \subset \close(\B_s^{'I})$, $P$ cannot be an upper-boundary point.
\end{proof}
\begin{lemma}
\label{lem:interchain-quorum-pareto-optimality-acc-safety}
Consider an upper-boundary point with quorum system $\B_a^I$ for interchain protocols executed using the blockchains $\PI_i$, $i \in [k]$, with validator sets $\UU^i$ and fail-prone systems $\B_a^i$.
Then, for all $B_a \in \B_a^I, i \in [k]$, it holds that either $B_a \cap \UU^i \in \B_a^i$ or $B_a \cap \UU^i = \emptyset$.
Moreover, if $\exists B_a \in \B_a^I, j \in [k] \colon B_a \cap \UU^j \in \B_a^j$, then $\forall B'_a \in \B_a^j \colon B'_a \cup (B_a \backslash \UU^j) \in \B_a^I$.
\end{lemma}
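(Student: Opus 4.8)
The plan is to adapt, essentially verbatim, the arguments used for Lemmas~\ref{lem:interchain-quorum-pareto-optimality-1} and~\ref{lem:interchain-quorum-pareto-optimality-safety}, now for the slashable–safety fail-prone system. The organizing observation is that $\close(\B^I_a)$ \emph{grows} when the sets of $\B^I_a$ are enlarged — as for the safety system $\B^I_s$, and opposite to the quorum system $\Q^I$. So, given an upper-boundary point $P=(\Q^I,\B^I_s,\B^I_a)$, to contradict its maximality (Definition~\ref{def:interchain-optimality}) I would exhibit a tuple $P'$ with the same $\Q^I$ and $\B^I_s$, a strictly larger $\close(\B^I_a)$, and still satisfying Theorems~\ref{thm:safety-liveness-interchain-converse},~\ref{thm:acc-safety-liveness-interchain-converse} and~\ref{thm:data-limitation}; since only Theorem~\ref{thm:acc-safety-liveness-interchain-converse} references the slashable–safety system, it is the only one I need re-verify for $P'$. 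Throughout I would use that each fail-prone system is an antichain and that the validator sets $\UU^i$ are pairwise disjoint.

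For the first claim I would argue by contradiction: suppose some $B_a\in\B^I_a$ and $j\in[k]$ have $B_a\cap\UU^j\neq\emptyset$ and $B_a\cap\UU^j\notin\B^j_a$. Since $P$ satisfies the second part of Theorem~\ref{thm:acc-safety-liveness-interchain-converse}, $B_a\cap\UU^j$ strictly contains no set of $\B^j_a$; together with $B_a\cap\UU^j\notin\B^j_a$ this forces $B_a\cap\UU^j$ to be a proper subset of some $B^j_a\in\B^j_a$ or $\subseteq$-incomparable to every set of $\B^j_a$, and in either case $j\notin f_a(B_a)$. If $B_a\cap\UU^j$ is \emph{not} already a maximal $\B^j_a$-free subset of $\UU^j$, I enlarge it (keeping all other intersections fixed) to a strictly larger $\B^j_a$-free set: this changes neither $f_a(B_a)$ nor the validity of the second part of Theorem~\ref{thm:acc-safety-liveness-interchain-converse}, yet the replacement set strictly contains $B_a$ and hence, by the antichain property, lies outside $\close(\B^I_a)$, so $\close(\B^I_a)$ strictly grows — contradiction. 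If instead $B_a\cap\UU^j$ is a maximal $\B^j_a$-free subset of $\UU^j$, I replace $B_a$ by $B'_a=(B_a\setminus\UU^j)\cup B^j_a$ for a suitable $B^j_a\in\B^j_a$; then $B'_a\cap\UU^j=B^j_a\in\B^j_a$, so the second part of Theorem~\ref{thm:acc-safety-liveness-interchain-converse} again holds, and $B'_a\supsetneq B_a$ gives $B'_a\notin\close(\B^I_a)$. It then remains to check that the inflated $f_a(B'_a)=f_a(B_a)\cup\{j\}$ still satisfies the first part of Theorem~\ref{thm:acc-safety-liveness-interchain-converse}, i.e.\ that no $Q^1,Q^2\in\Q^I$ has $f_Q(Q^1)\cap f_Q(Q^2)$ a proper subset of $f_a(B_a)\cup\{j\}$; here I would invoke Lemma~\ref{lem:interchain-quorum-pareto-optimality-1} (which yields $f_Q(Q)=\{i\colon Q\cap\UU^i\neq\emptyset\}$ for every $Q\in\Q^I$) together with the maximality of $\Q^I$ and the tradeoff already known for $B_a$ to exclude such a tight pair of quorums. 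The $\subseteq$-incomparable sub-case is treated exactly as the corresponding omitted sub-case in the proof of Lemma~\ref{lem:interchain-quorum-pareto-optimality-safety}.

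For the second ("moreover") claim I would again argue by contradiction: assume $B_a\cap\UU^j\in\B^j_a$ for some $B_a\in\B^I_a$, $j$, but $\hat B:=B'_a\cup(B_a\setminus\UU^j)\notin\B^I_a$ for some $B'_a\in\B^j_a$. I form $P'$ by adjoining $\hat B$ to $\B^I_a$ and discarding any set of $\B^I_a$ that $\hat B$ strictly contains (this preserves the antichain and cannot shrink the closure). Since $\hat B\cap\UU^i=B_a\cap\UU^i$ for all $i\neq j$ and $\hat B\cap\UU^j=B'_a\in\B^j_a$, we get $f_a(\hat B)=f_a(B_a)$, so the first part of Theorem~\ref{thm:acc-safety-liveness-interchain-converse} holds for $P'$; and $\hat B\cap\UU^i$ strictly contains no $\B^i_a$-set (for $i=j$ because $B'_a\in\B^j_a$ and $\B^j_a$ is an antichain, for $i\neq j$ because $B_a$ already has this property), so its second part holds too. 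Finally I would argue $\hat B\notin\close(\B^I_a)$: if $\hat B\subseteq B\in\B^I_a$, then applying the first claim of this lemma to $B$, the second part of Theorem~\ref{thm:acc-safety-liveness-interchain-converse}, and the antichain property of $\B^j_a$ forces $B=\hat B$, contradicting $\hat B\notin\B^I_a$. Hence $\close(\B^I_a)$ strictly grows, again contradicting that $P$ is an upper-boundary point.

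The step I expect to be the main obstacle is the verification flagged in the first claim: when $B_a\cap\UU^j$ is already a maximal $\B^j_a$-free subset of $\UU^j$, any enlargement on chain $j$ promotes it to contain a full member of $\B^j_a$, which inflates $f_a(B_a)$ by the index $j$, and one must then show this cannot make some $f_Q(Q^1)\cap f_Q(Q^2)$ a proper subset of the inflated set and thereby violate the first part of Theorem~\ref{thm:acc-safety-liveness-interchain-converse}. This is where the structural description of $\Q^I$ from Lemma~\ref{lem:interchain-quorum-pareto-optimality-1} and the maximality of the upper-boundary point carry the real weight; the remainder is the same bookkeeping — replace or adjoin a fail-prone set, recompute $f_a$, check the antichain, conclude the closure strictly grew — that already appears in Lemmas~\ref{lem:interchain-quorum-pareto-optimality-1} and~\ref{lem:interchain-quorum-pareto-optimality-safety}.
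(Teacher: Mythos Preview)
Your overall approach mirrors the paper's: argue by contradiction, enlarge the offending $B_a$, verify that Theorems~\ref{thm:safety-liveness-interchain-converse}, \ref{thm:acc-safety-liveness-interchain-converse} and~\ref{thm:data-limitation} still hold for the modified tuple, and conclude that $P$ was not an upper-boundary point. You are in fact more careful than the paper on the key step. In its first case ($\exists B^j_a\in\B^j_a$ with $B_a\cap\UU^j\subsetneq B^j_a$) the paper sets $B'_a=B_a\cup B^j_a$ and asserts $f_a(B_a)=f_a(B'_a)$; but since $B_a\cap\UU^j$ is then $\B^j_a$-free (by the antichain property of $\B^j_a$) while $B'_a\cap\UU^j\supseteq B^j_a$, one has $j\in f_a(B'_a)\setminus f_a(B_a)$, so that equality is simply false. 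You noticed this and split off the sub-case where $B_a\cap\UU^j$ is already a \emph{maximal} $\B^j_a$-free set.

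That sub-case is a genuine gap, and your sketch does not close it. Two concrete problems. First, your claim that for ``suitable'' $B^j_a\in\B^j_a$ one has $B'_a=(B_a\setminus\UU^j)\cup B^j_a\supsetneq B_a$ is unjustified: a maximal $\B^j_a$-free set need not be contained in any member of $\B^j_a$ (take $\B^j_a=\{\{1,2\},\{3,4\}\}$ and the maximal free set $\{1,3\}$), so $B'_a$ need not contain $B_a$ and $\close(\B^I_a)$ need not grow. Second, and more seriously, the verification you flag as ``the main obstacle'' can fail outright. If every $Q\in\Q^I$ has $f_Q(Q)=f_a(B_a)$ (say both equal $\{0\}$), then $f_Q(Q^1)\cap f_Q(Q^2)=f_a(B_a)\not\subset f_a(B_a)$ holds for $P$, yet $f_a(B_a)\subsetneq f_a(B_a)\cup\{j\}$ violates the first part of Theorem~\ref{thm:acc-safety-liveness-interchain-converse} for the enlarged set; and one cannot compensate by enlarging $\Q^I$ either, since any new quorum $Q^*\notin\close(\Q^I)$ has $0\notin f_Q(Q^*)$, giving $f_Q(Q)\cap f_Q(Q^*)=\emptyset\subsetneq f_a(B_a)$ against the original $B_a$. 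So the appeal to Lemma~\ref{lem:interchain-quorum-pareto-optimality-1} and ``maximality of $\Q^I$'' needs a real argument ruling out this configuration; the paper does not supply one --- it just asserts, incorrectly, that $f_a$ is unchanged.
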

\begin{proof}[Proof of Lemma~\ref{lem:interchain-quorum-pareto-optimality-acc-safety}]
Towards contradiction, suppose $\exists B_a \in \B_a^I, j \in [k] \colon B_a \cap \UU^j \neq \emptyset$ and $B_a \cap \UU^j \notin \B_a^i$ for an upper-boundary point $P = (\Q^I,\B^I_s,\B^I_a)$.

Suppose $\exists B_a^j \in \B_a^j \colon B_a \cap \UU^j \subset B_a^j$.
Consider the point $P'$ with the same quorum and fail-prone systems as $P$ except that in place of the set $B_a$, it has the set $B'_a = B_a \cup B_a^j$ in its quorum system.
Since $P$ satisfies Theorems~\ref{thm:safety-liveness-interchain-converse} and~\ref{thm:acc-safety-liveness-interchain-converse}, and $f_a(B_a) = f_a(B'_a)$, $P'$ also satisfies the same theorems.
As $B_a \subset B'_a$, $P$ cannot be an upper-boundary point.

Next, suppose $\exists B_a^j \in \B_a^j \colon B_a \cap \UU^j \supset B_a^j$.
However, this is a violation of Theorem~\ref{thm:acc-safety-liveness-interchain-converse}.
Hence, $P$ cannot be an upper-boundary point.

Finally, suppose $\exists B_a \in \B_a^I, j \in [k] \colon B_a \cap \UU^j \in \B_a^j$, yet $\exists B_a^j \in \B_a^j \colon B_a^j \cup (B_a \backslash \UU^j) \notin \B_a^I$.
Consider the point $P'$ with the same quorum and fail-prone systems as $P$ except that $B_a^j \cup (B_a \backslash \UU^j) = B'_a \in \B_a^{'I}$ for the quorum system $\B_a^{'I}$ of that protocol.
Since $\PI_I$ satisfies Theorems~\ref{thm:safety-liveness-interchain-converse} and~\ref{thm:acc-safety-liveness-interchain-converse}, $f_a(B_a) = f_a(B'_a)$ and $B'_a \cap \UU^j \in \B_a^j$, $\PI'_I$ also satisfies the same theorems.
Hence, as $\close(\B_a^I) \subset \close(\B_a^{'I})$, $P$ cannot be an upper-boundary point.
\end{proof}
\begin{proof}[Proof of Theorem~\ref{thm:interchain-quorum-optimality-2}]
No tuple of quorum and fail-prone systems $(\Q,\B_s,\B_a)$ achievable by an interchain protocol $\PI_I$ can dominate an upper-boundary point $(\Q',\B'_s,\B'_a)$.
To prove this, suppose $(\Q,\B_s,\B_a)$ dominates $(\Q',\B'_s,\B'_a)$.
Then, either $(\Q,\B_s,\B_a)$ does not satisfy one of the Theorems~\ref{thm:safety-liveness-interchain-converse} and~\ref{thm:data-limitation} (or Theorems~\ref{thm:acc-safety-liveness-interchain-converse} and~\ref{thm:data-limitation}) or it satisfies both theorems.
In the first case, no such tuple of quorum and fail-prone systems $(\Q,\B_s,\B_a)$ can be achievable by an interchain protocol by the same theorems.
In the latter case, $(\Q',\B'_s,\B'_a)$ cannot be an upper-boundary point per Definition~\ref{def:interchain-optimality}, which is again a contradiction.

By the reasoning above, if the tuple $(\Q,\B_s,\B_a)$ of quorum and fail-prone systems of an interchain protocol $\PI_I$ is an upper-boundary point, then no tuple $(\Q',\B'_s,\B'_a)$ 
achievable by an interchain protocol can dominate $(\Q,\B_s,\B_a)$, implying that $\PI_I$ is a pareto-optimal protocol. 

Finally, consider such a protocol whose tuple $(\Q,\B_s,\B_a)$ of quorum and fail-prone systems constitutes an upper-boundary point.
By Lemma~\ref{lem:interchain-quorum-pareto-optimality-1}, for all $Q \in \Q^I, i \in [k]$, it holds that either $Q \cap \UU^i \in \Q^i$ or $Q \cap \UU^i = \emptyset$.
Moreover, if $\exists Q \in \Q^I, j \in [k] \colon Q \cap \UU^j \in \Q^j$, then $\forall Q' \in \Q^j \colon Q' \cup (Q \backslash \UU^j) \in \Q^I$.
Given these observations, satisfiability of liveness for the protocol $\PI_I$ can be expressed as a boolean function of the predicates $\ell_i$, $i \in [k]$, such that $\ell_i$ becomes true iff for a given set of adversarial validators, the protocol $\PI_i$ satisfies liveness for all PPT $\mathcal{A}$ (w.o.p.).
This implies the existence of a property system $\DQ$ such that $\PI_I$ is $\DQ$-live, and vice-versa.
Similarly, by Lemma~\ref{lem:interchain-quorum-pareto-optimality-safety}, satisfiability of safety for $\PI_I$ can be expressed as a boolean function of the predicates $s_i$, $i \in [k]$, such that $s_i$ becomes true iff for a given set of adversarial validators, the protocol $\PI_i$ does not satisfy safety for some PPT $\mathcal{A}$ (with non-negligible probability).
This implies the existence of a property system $\D_s$ such that $\PI_I$ is $\D_s$-safe, and vice-versa.
Finally, by Lemma~\ref{lem:interchain-quorum-pareto-optimality-acc-safety}, there exists a property system $\D_a$ such that $\PI_I$ is $\D_a$-accountably-safe, and vice-versa.

For the first direction of the implication, by Theorems~\ref{thm:safety-liveness-interchain-converse} and~\ref{thm:acc-safety-liveness-interchain-converse}, for systems $\DQ = \{f_Q(Q) \colon Q \in \Q\}$, $\D_s = \{f_s(B_s) \colon B_s \in \B_s\}$ and $\D_a = \{f_a(B_a) \colon B_a \in \B_a\}$ of $\PI_I$, it holds that $\forall DQ^1, DQ^2 \in \DQ, D_s \in \D_s, D_a \in \D_a \colon DQ^1 \cap DQ^2 \not\subseteq D_s$ (condition 1) and $DQ^1 \cap DQ^2 \not\subset D_a$ (condition 2).
Theorem~\ref{thm:data-limitation} implies condition 3.
Condition 4 follows from the fact that no tuple $(\Q,\B_s,\B_a)$ of quorum and fail-prone systems satisfying the Theorems~\ref{thm:safety-liveness-interchain-converse},~\ref{thm:acc-safety-liveness-interchain-converse} and~\ref{thm:data-limitation} can dominate that of $\PI_I$.

For the other direction of the implication, the conditions (a)-(b)-(c) imply that the tuple of quorum and fail-prone systems for $\PI_I$ satisfies Theorems~\ref{thm:safety-liveness-interchain-converse},~\ref{thm:acc-safety-liveness-interchain-converse} and~\ref{thm:data-limitation} respectively.
We have observed that the quorum and fail-prone systems of $\PI_I$ constitute an upper-boundary point iff $\PI_I$ is characterized by a tuple of property systems.
Then, the condition (d) implies that there is no tuple of quorum and fail-prone systems dominating that of $\PI_I$ while satisfying the same theorems above.
This concludes the proof.
\end{proof}
\subsection{Proofs for Section~\ref{sec:cross-chain-validation}}
\label{sec:appendix-general-proofs}
\begin{proof}[Proof of Theorem~\ref{thm:sync-safety-liveness-converse}]
Towards contradiction, suppose there exists a protocol with validator set $\UU$, quorum system $\Q$ and fail-prone system $\B_s$ such that $\exists Q \in \Q$ and $B_s \in \B_s \colon Q \subseteq B_s$.
We will show a safety violation when the validators in $B_s$ are adversarial through the following worlds.
Suppose there is a quorum $Q' \in \Q$ such that $Q' \neq Q$
(If there is no such quorum $Q'$, let $Q' = Q$).
Then, define $P_1 = Q \backslash Q'$, $P_2 = \UU \backslash Q$ and $R = Q \cap Q'$.

\paragraph{World 1:}
\paragraph{Setup.}
There is a single client $\client_1$.
Validators in $P_1 \cup R$ are honest and those in $P_2$ are adversarial.
The environment inputs a single transaction $\tx_1$ to the validators in $P_1 \cup R$ at round $0$.
The adversarial validators do not communicate with those in $P_1 \cup R$, and emulate the behavior of the honest validators in $P_2$ in world 2 towards $\client_1$ until round $\Tconfirm$.
\paragraph{Output.}
By liveness, $\client_1$ outputs the ledger $\langle \tx_1 \rangle$ by round $\Tconfirm$ (w.o.p.).

\paragraph{World 2:}
\paragraph{Setup.}
There is a single client $\client_2$.
Validators in $P_2 \cup R$ are honest and those in $P_1$ are adversarial.
The environment inputs a single transaction $\tx_2$ to the validators in $P_2 \cup R$ at round $\Tconfirm+1$.
The adversarial validators do not communicate with those in $P_2 \cup R$, and do not respond to $\client_2$.
\paragraph{Output.}
By liveness, $\client_2$ outputs the ledger $\langle \tx_2 \rangle$ at some round in $(\Tconfirm, 2\Tconfirm]$ (w.o.p.).

\paragraph{World 3:}
\paragraph{Setup.}
There are two client $\client_1$ and $\client_2$
Validators in $P_2$ are honest and those in $B_s \supseteq P_1 \cup R$ are adversarial.
At round $0$, the environment inputs the transaction $\tx_1$ to the validators in $P_1 \cup R$.
Until round $\Tconfirm$, the validators in $P_1 \cup R$ simulate the execution in world 1 towards $\client_1$.
Simultaneously, validators in $R$ simulate the execution in world 2 towards those in $P_2$ and the client $\client_2$ (split-brain attack), whereas those in $P_1$ ignore all messages from the validators in $P_2$, and do not communicate with them and $\client_2$.
At round $\Tconfirm+1$, the environment inputs the transaction $\tx_2$ to the validators in $P_2 \cup R$.
Validators in $R$ continue to simulate the execution in world 2 towards those in $P_2$ and $\client_2$.
\paragraph{Output.}
Since the worlds 1 and 3 are indistinguishable in $\client_1$'s view (w.o.p.), $\client_1$ outputs the ledger $\langle \tx_1 \rangle$ at round $\Tconfirm$.
Similarly, as the worlds 2 and 3 are indistinguishable in $\client_2$'s view (w.o.p.), $\client_2$ outputs the ledger $\langle \tx_2 \rangle$ at the same round as world 2, \ie within $(\Tconfirm, 2\Tconfirm]$.
However, this implies a safety violation, which is a contradiction.
\end{proof}
\begin{proof}[Proof of Theorem~\ref{thm:psync-safety-liveness-converse}]
Towards contradiction, suppose there exists a protocol with validator set $\UU$, quorum system $\Q$ and fail-prone system $\B_s$ such that $\exists Q^1, Q^2 \in \Q$ and $B_s \in \B_s \colon Q^1 \cap Q^2 \subseteq B_s$.
We will show a safety violation when the validators in $B_s$ are adversarial through the following worlds.
Define $P_1 = Q^1 \backslash Q^2$, $P_2 = (Q^2 \backslash Q^1) \cup (\UU \backslash (Q^1 \cup Q^2))$ and $R = Q^1 \cap Q^2$ (same definition is upheld for the case $Q^1 = Q^2$).

\paragraph{World 1:}
\paragraph{Setup.}
All messages sent by the honest validators are delivered to their recipients in the next round. 
There is a single client $\client_1$.
Validators in $P_1 \cup R$ are honest and those in $P_2$ are adversarial.
The environment inputs a single transaction $\tx_1$ to the validators in $P_1 \cup R$ at round $0$.
The adversarial validators do not communicate with those in $P_1 \cup R$, and do not respond to $\client_1$.
\paragraph{Output.}
By liveness, $\client_1$ outputs the ledger $\langle \tx_1 \rangle$ by round $\Tconfirm$ (w.o.p.).

\paragraph{World 2:}
\paragraph{Setup.}
All messages sent by the honest validators are delivered to their recipients in the next round. 
There is a single client $\client_2$.
Validators in $P_2 \cup R$ are honest and those in $P_1$ are adversarial.
The environment inputs a single transaction $\tx_2$ to the validators in $P_2 \cup R$ at round $0$.
The adversarial validators do not communicate with those in $P_2 \cup R$, and do not respond to $\client_2$.
\paragraph{Output.}
By liveness, $\client_2$ outputs the ledger $\langle \tx_2 \rangle$ by round $\Tconfirm$ (w.o.p.).

\paragraph{World 3:}
\paragraph{Setup.}
World 3 is a hybrid world. 
There are two client $\client_1$ and $\client_2$.
Validators in $P_1$ and $P_2$ are honest and those in $R$ are adversarial.
At round $0$, the environment inputs the transaction $\tx_1$ to the validators in $P_1 \cup R$ and $\tx_2$ to the validators in $P_2 \cup R$.
All messages from the validators in $P_1$ to $P_2$ and $\client_2$ and from those in $P_2$ to $P_1$ and $\client_1$ are delayed by the adversary until after round $\Tconfirm$.
Validators in $R$ do a split-brain attack: One brain simulates the execution in world 1 towards $P_1$ and $\client_1$ with transaction $\tx_1$, and the other simulates the execution in world 2 towards $P_2$ and $\client_2$ with transaction $\tx_2$.
\paragraph{Output.}
Since the worlds 1 and 3 are indistinguishable in $\client_1$'s view (w.o.p.), $\client_1$ outputs the ledger $\langle \tx_1 \rangle$ by round $\Tconfirm$.
Similarly, as the worlds 2 and 3 are indistinguishable in $\client_2$'s view (w.o.p.), $\client_2$ outputs the ledger $\langle \tx_2 \rangle$ by round $\Tconfirm$. 
However, this implies a safety violation when only the validators in $R \subseteq B_s$ are adversarial, which is a contradiction.
\end{proof}
\begin{proof}[Proof of Theorem~\ref{thm:acc-safety-liveness-converse}]
Proof follows the outline of the proof of \cite[Theorem B.1]{forensics}.
Towards contradiction, suppose there exists a protocol with validator set $\UU$, quorum system $\Q$ and fail-prone system $\B_a$ such that $\exists Q^1, Q^2 \in \Q$ and $B_a \in \B_a \colon Q^1 \cap Q^2 \subset B_a$.
Through the following worlds, we will show a world, where an honest validator is identified as a protocol violator by the forensic protocol.
Define $P_1 = Q^1 \backslash Q^2$, $P_2 = (Q^2 \backslash Q^1) \cup (\UU \backslash (Q^1 \cup Q^2))$ and $R = Q^1 \cap Q^2$ (same definition is upheld for the case $Q^1 = Q^2$).
We assume a synchronous network throughout the following worlds.

\paragraph{World 1:}
\paragraph{Setup.}
There is a single client $\client_1$.
Validators in $P_1 \cup R$ are honest and those in $P_2$ are adversarial.
The environment inputs a single transaction $\tx_1$ to the validators in $P_1 \cup R$ at round $0$.
The adversarial validators do not communicate with those in $P_1 \cup R$, and do not respond to $\client_1$.
\paragraph{Output.}
By liveness, $\client_1$ outputs the ledger $\langle \tx_1 \rangle$ by round $\Tconfirm$ (w.o.p.).

\paragraph{World 2:}
\paragraph{Setup.}
There is a single client $\client_2$.
Validators in $P_2 \cup R$ are honest and those in $P_1$ are adversarial.
The environment inputs a single transaction $\tx_2$ to the validators in $P_2 \cup R$ at round $0$.
The adversarial validators do not communicate with those in $P_2 \cup R$, and do not respond to $\client_2$.
\paragraph{Output.}
By liveness, $\client_2$ outputs the ledger $\langle \tx_2 \rangle$ by round $\Tconfirm$ (w.o.p.).

\paragraph{World 3:}
\paragraph{Setup.}
There is a single client $\client_1$.
Validators in $P_1$ are honest and those in $P_2 \cup R$ are adversarial.
At round $0$, the environment inputs the transaction $\tx_1$ to the validators in $P_1 \cup R$ and $\tx_2$ to the validators in $P_2 \cup R$.
Validators in $R$ do a split-brain attack: One brain simulates the execution in world 1 towards $P_1$ and $\client_1$ with transaction $\tx_1$, and the other simulates the execution in world 2 with $P_2$ with transaction $\tx_2$.
Validators in $P_2$ simulate the execution in world 2, do not communicate with those in $P_1$ and $\client_1$, and ignore all messages sent by them.
\paragraph{Output.}
Since the worlds 1 and 3 are indistinguishable in $\client_1$'s view (w.o.p.), $\client_1$ outputs the ledger $\langle \tx_1 \rangle$ by round $\Tconfirm$.
Since the validators in $P_2 \cup R$ simulate the execution in world 2, by emulating the client of world 2, an adversarial validator outputs the ledger $\langle \tx_2 \rangle$ by round $\Tconfirm$.
Since this is a safety violation, a new client $c_3$ asks the validators for their transcripts, upon which the adversarial validators in $P_2$ reply with transcripts that omit the messages received from those in $P_1$.
The client $\client_3$ then runs the forensic protocol with these transcripts and outputs a proof that identifies at least one adversarial validator in $P_2$ as $R \subset B_a$.

\paragraph{World 4:}
\paragraph{Setup.}
There is a single client $\client_2$.
Validators in $P_2$ are honest and those in $P_1 \cup R$ are adversarial.
At round $0$, the environment inputs the transaction $\tx_2$ to the validators in $P_2 \cup R$ and $\tx_1$ to the validators in $P_1 \cup R$.
Validators in $R$ do a split-brain attack: One brain simulates the execution in world 1 with $P_1$ with transaction $\tx_1$, and the other simulates the execution in world 2 towards $P_2$ and $\client_2$ with transaction $\tx_2$.
Validators in $P_1$ simulate the execution in world 1, do not communicate with those in $P_2$ and $\client_2$, and ignore all messages sent by them.
\paragraph{Output.}
Since the worlds 2 and 4 are indistinguishable in $\client_2$'s view (w.o.p.), $\client_2$ outputs the ledger $\langle \tx_2 \rangle$ by round $\Tconfirm$.
Since the validators in $P_1 \cup R$ simulate the execution in world 1, by emulating the client of world 1, an adversarial validator outputs the ledger $\langle \tx_1 \rangle$ by round $\Tconfirm$.
Since this is a safety violation, a new client $c_3$ asks the validators for their transcripts, upon which the adversarial validators in $P_1$ reply with transcripts that omit the messages received from those in $P_2$.
Note that the worlds 3 and 4 (and the transcripts received therein) are indistinguishable in the view of $\client_3$ (w.o.p.).
Thus, upon running the forensic protocol with these transcripts, $\client_3$ outputs a proof that identifies at least one validator in $P_2$.
However, the validators in $P_2$ are honest in world 4, which is a contradiction.
\end{proof}
\begin{proof}[Proof of Theorem~\ref{thm:general-data-limitation}]
Towards contradiction, suppose $\exists Q \in \Q \colon Q \cap \UU' = \emptyset$.
Consider the execution, where all validators in $Q$ are honest and the rest have crashed.
Since the validators in $Q$ do not check for the availability of the proposed blocks, an adversarially proposed valid, yet unavailable block can be output as part of the chain of an honest validator in $Q$.
However, in this case, the external clients checking for the data and receiving this chain lose liveness, implying that the protocol cannot satisfy liveness even when all validators in $Q$ are honest.
This is a contradiction.
\end{proof}

\end{document}